\renewcommand{\vDash}{\models}
\renewcommand{\nvDash}{\not\models}
\newcommand{\OM}{\ensuremath{\mathcal{O\!M}}}
\newcommand{\ttt}{\mathtt{t\hspace*{-.25em}t}}
\newcommand{\boxm}[1]{\mathopen{\big[ #1 \big]}} %{[\hspace{.2ex}#1\hspace{.2ex}]}
\newcommand{\diam}[1]{\mathopen{\big\langle#1\big\rangle}}
\newcommand{\showcomment}{Uncomment to remove comments}
\newcommand{\mycomment}[3]{
\ifdefined\showcomment%
\todo[linecolor=#1,backgroundcolor=#1!25,bordercolor=#1]{[#2] #3}
\fi
}
\newcommand{\SM}[1]{\mycomment{blue}{SM}{#1}}
\title[Discovering ePassport Vulnerabilities using Bisimilarity]{Discovering ePassport Vulnerabilities \texorpdfstring{\\}{} using Bisimilarity}
\author[R.~Horne]{Ross Horne\rsuper{a}}
\address{\lsuper{a}Department of Computer Science, University of Luxembourg, Esch-sur-Alzette, Luxembourg}
\email{ross.horne@uni.lu}
\author[S.~Mauw]{Sjouke Mauw\rsuper{{a,b}}}
\address{\lsuper{b}SnT, University of Luxembourg, Esch-sur-Alzette, Luxembourg}
\email{sjouke.mauw@uni.lu}
\keywords{privacy, protocols, bisimilarity, modal logic, ePassports}% mandatory: Please provide 1-5 keywords
\begin{document}

% make the title area
\maketitle

%\SM{I find the macro for comments sub-optimal. They are displayed in the an awkward way in the margin, making it impossible to read them.}
\begin{abstract}
We uncover privacy vulnerabilities in the ICAO 9303 standard implemented by ePassports worldwide.
These vulnerabilities, confirmed by ICAO, enable an ePassport holder who recently passed through a checkpoint to be reidentified without opening their ePassport.
This paper explains how bisimilarity was used to discover these vulnerabilities,
which exploit the BAC protocol -- the original ICAO 9303 standard ePassport authentication protocol -- and remains valid for the PACE protocol, which improves on the security of BAC in the latest ICAO 9303 standards.
In order to tackle such bisimilarity problems, we develop here a chain of methods for the applied $\pi$-calculus
including a symbolic under-approximation of bisimilarity, called open bisimilarity, and a modal logic, called classical $\FM$, for describing and certifying attacks.
Evidence is provided to argue for a new scheme for specifying such unlinkability problems that more accurately reflects the capabilities of an attacker.
\end{abstract}

% Introduction
\section{Introduction}

Most of us have the option to pass through automatic passport clearance at an airport.
Some of us also have electronic national cards that may be used for government services.
All of these  machine readable documents employ a protocol to authenticate with a reader, establishing that you really hold a valid machine readable document.
In order for ePassports to be read internationally, your passport almost certainly implements a standardised protocol for machine readable travel documents, defined by the International Civil Aviation Authority (ICAO) -- the UN agency responsible for international aviation standards.

Considerable work has been put into ensuring ePassports satisfy security properties, preventing your ePassport from being read by an unauthorised $3^{rd}$-party in the vicinity.
However, even if such security properties are satisfied, there may still be ways of exploiting a protocol to mount more subtle attacks on your privacy.
Notably, a requirement of ePassports, is that an unauthorised $3^{rd}$-party should not be able to use an ePassport to track the document holder.
This privacy property is called \emph{unlinkability}, and is an official requirement of the ICAO 9303 standard for machine readable travel documents~\cite{MRTD}.

It has been debated over the past decade
whether or not the ICAO 9303 standard satisfies unlinkability.
Vulnerabilities have been discovered by exploiting implementation specific features, such as the different error messages in the French ePassport, or the differences in response time of the ePassports of different nationalities~\cite{originalBACAttack,Avoine16}.
For example, an error message in the French ePassport indicates whether a message authentication code (MAC) test passed, despite authentication failing;
hence if a message with the same MAC key is replayed from a previous session with the same ePassport, then we can detect whether or not the same ePassport is present in the current session.
Notice this requires no access to the personal data stored inside the ePassport.
%in the  an authentication which indicates that the same passport was present.

Now put implementation-specific side channels aside
and consider whether unlinkability holds at the level of the specification of the Basic Access Control (BAC) protocol, as defined in the ICAO 9303 standard.
A claim was reported in CSF'10~\cite{Arapinis2010} that unlinkability does hold for ePassports that conservatively implement the BAC protocol.
In particular, the claim concerns implementations where the same plaintext message should be provided for all types of error, as is the case for the UK ePassport for example.
That claim was backed up by a formal model of a property that should hold if unlinkability of BAC holds, which is expressed as a bisimilarity problem in the applied $\pi$-calculus~\cite{Abadi2017}.
The problem is that this original claim was discovered to be false, as reported in ESORICS'19~\cite{ESORICS}.
This indicates a failure of the ICAO 9303 BAC protocol to meet its own requirements.

However, this is not the end of the story behind this privacy vulnerability, which has several twists.
A twist is that the original claim which we discovered to be flawed was based on a proof in ProVerif, that went through due to a bug, now resolved in Proverif.
When the bug was corrected the old proof didn't go through, but no proof or counter-example was reported until ESORICS'19~\cite{ESORICS}.
This indicates the need to improve methods and tools for supporting bisimilarity checking in the applied $\pi$-calculus,
so that false privacy claims about widely deployed protocols do not go undetected for a decade.

There are further twists in this story. In the effort to improve tools and methods for resolving such unlinkability problems,
several alternative models of the unlinkability of the BAC protocol have been proposed over the years.
Some of these models can be used to prove that there is no attack~\cite{Hirschi2019}, as first communicated in S\&P'16~\cite{BACUnboundedUnlinkability}.
The key difference between the original CSF'10 model, for which we discovered an attack in ESORICS'19, and the S\&P'16 model, where they prove there is no attack, is the use of trace equivalence rather than bisimilarity in the latter.
This is interesting,
since there are few, if any, protocols and properties where the use of bisimilarity rather than trace equivalence is essential for discovering vulnerabilities, at least for a widely-deployed protocol.
Furthermore, it provokes the question of which equivalence provides the appropriate attacker model: are vulnerabilities discovered using bisimilarity, but undetectable using trace equivalence exploitable; and, if so, are they perhaps less dangerous in some sense than attacks which can be described as a trace?

For the BAC protocol we have answered the question of exploitability in the positive.
Using a modified reader and ePassport we have demonstrated how the distinguishing game exposed by the failure of bisimilarity can be exploited to reidentify an ePassport.
That particular implementation of the vulnerability discovered using bisimilarity (there are infinitely many mutations of this attack)
was reported through a responsible disclosure process to ICAO in June 2019.

ICAO issued a public response made
%via Twitter\footnote{https://twitter.com/delanomagazine/status/1177129598359953409} on 26 September,
available via numerous press reports~\cite{Delano,PaperJam1,PaperJam2}.
In their response, ICAO make the following statement.
\begin{quote}
``It's also important to consider here that the described issue, which could be exploited for example at border controls or at other inspection system areas, would only allow adversaries to be able to know that somebody recently passed through a passport check-- and even without opening their ePassport. The personal data stored in the contactless chip, however, would not be disclosed.''
\end{quote}
Understandably, ICAO aim to contain this issue, and we have no interest in creating a scandal, only in ensuring the appropriate agencies receive accurate information.
However, please note that the above statement confirms that ICAO agree the vulnerability is real and would, we quote again for emphasis, ``allow adversaries to be able to know that somebody recently passed through a passport check-- and even without opening their ePassport.'' This exactly matches our own claims about the capabilities offered to an attacker exploiting the vulnerability discovered.
The word ``recent'' in the above context, means that the ePassport can only be tracked for as long as the attacker can keep open a session with the reader that the ePassport holder recently passed through; which, in practice, can only be a short period of time.
This contrasts to more serious implementation-specific vulnerabilities, which can be exploited to track the ePassport holder indefinitely.
Being able to reidentify someone within a time-limited period is nonetheless a violation of unlinkability.

It is also important that we clarify the following public response from ICAO, also, understandably, aiming to contain any fallout from a vulnerability affecting citizens using their ePassport standard worldwide.
\begin{quote}
``ICAO and experts have thoroughly reviewed this research and their initial analysis is that it is not linked to Doc 9303 specifications in their current version.
This is especially the case given that the newest Doc 9303 specifications incorporate the PACE protocol, which is considered a more secure alternative to the BAC protocol.''
%Additionally, the concerns being expressed are seen as pertaining more to the verification systems used to read ePassport data, rather than the documents themselves or their self-contained security measures.
\end{quote}
What the above means is that the vulnerability reported at ESORICS'19 was for the BAC protocol.
The BAC protocol is the authentication mechanism used to ensure the ePassport and reader are
really talking to each other before exchanging any personal data stored on the ePassport.
It has been used since the first generation of ePassports, issued since 2004, and, at the time of writing, is still supported by ePassports.
BAC has known security limitations, for example the keys are generated using information such as the passport number and expiry dates, which have low entropy~\cite{Bender2009}.
%\SM{I'm not sure if I understand the reasoning that a \emph{significant} amount of entropy leads to attacks. I would expect low-entropy input to lead to attacks.}
Thus there are attacks that can enable a user to compromise the \textit{secrecy} of the personal data on an ePassport protected by BAC\@.
For this reason, ICAO have developed the Password Authenticated Connection Establishment (PACE)
protocol, addressing such vulnerabilities that can lead to data breaches. Note a data breach would also immediately compromise unlinkability, since the attacker would have direct access to the identity of the ePassport holder.

Thus, PACE is an improvement over BAC from the perspective of secrecy; however, secrecy and privacy are not the same thing.
Indeed, we report here that, PACE is also vulnerable to attacks on unlinkability by adopting a similar strategy to the attacks on BAC reported in ESORICS'19.
As with BAC, we can formally account for this vulnerability by showing that PACE does
\emph{not} satisfy unlinkability, formalised in terms of bisimilarity.
Since ePassports implementing BAC or PACE are issued by over 150 countries\footnote{Gemalto on ePassport trends: \url{https://www.gemalto.com/govt/travel/electronic-passport-trends}}, the impact for society of this vulnerability is current and global.

This paper is an extended version of a paper presented at ESORICS'19.
In the conference version of this paper, we explained the privacy vulnerability discovered for the BAC protocol and explained how the attack can be implemented in a real-world setting.
This paper complements the conference version by focusing on our methodology for analysing such unlinkability problems rather than the implementation concerns.
%We explains that the same type of vulnerability is also present in PACE.
We explain the formal methods we developed and employed to quickly discover attacks on the unlinkability of the BAC protocol, and, going beyond the ESORICS'19 paper, also the PACE protocol.
% rather than the practicalities of implementing devices that can exploit the vulnerability.
%To analyse Protocols BAC, PACE, and related authentication protocols, fail to satisfy unlinkability, specified as a bisimilarity problem.
To approach the bisimilarity problem behind unlinkability, we employ a game between a prover aiming to show unlinkability holds and a disprover aiming to show there is an attack on unlinkability.
The prover uses symbolic techniques to try to construct a bisimulation for an under-approximation of bisimilarity (open bisimilarity);
while the disprover aims to verify whether attacks discovered are genuine distinguishing strategies invalidating the bisimilarity problem or whether they are
spurious counter-examples due to the fact that open bisimilarity in incomplete.
If a spurious counter-example is discovered, then the reason why it is spurious is used to refine and resume the search for a bisimulation.
If this game terminates, we should have constructed either a bisimulation (thereby proving the unlinkability property) or a modal logic formula explaining a distinguishing strategy (thereby discovering an attack on unlinkability).
Another notable feature of the method in this work is that we show that unlinkability problems,
which are traditionally expressed as a weak bisimilarity problem,
can be reduced to a strong bisimilarity problem, thereby ensuring the transition system is image finite, considerably simplifying the problem.
This is, in itself, a contribution of this work, since notions of strong bisimilarity had not previously been defined for the applied $\pi$-calculus,
nor had its characteristic modal logic previously been defined, for which we provide soundness and completeness results.

\subsubsection*{How to read this paper.}
The focus of this paper is on analysing unlinkability properties of ePassport protocols.
During the course of our discussion on unlinkability, we introduce various methods which play a role in our formal analysis.
In order to follow these methods some knowledge of the $\pi$-calculus and bisimilarity is a prerequisite.
It is not necessary to have knowledge of the applied $\pi$-calculus, since we introduce a state-of-the-art presentation of the semantics of the applied $\pi$-calculus facilitating the translation of recent advances in the theory of the $\pi$-calculus to the setting of the applied $\pi$-calculus.
We move quickly through such definitions, in order to get to the point, which is to explain how the methods are used to discover attacks on unlinkability.

In Sections~\ref{sec:strong} and~\ref{sec:symbolic}, we explain our methodology and how it can be used to efficiently analyse problems such as the unlinkability of ePassport protocols, thereby proving that we have closed the question of whether the original formulation of the unlinkability of ePassport protocol BAC is violated.
Section~\ref{sec:II} reflects on established notions of unlinkability, thereby making a case for instead employing a new notion of unlinkability which makes the realistic assumption that an attacker can distinguish between communications originating from different ePassport sessions.
Section~\ref{sec:pace} demonstrates that there is a similar attack on the unlinkability of the latest ePassport protocol PACE\@. The existence of a new attack on PACE, similar to the attack we discovered on BAC, is confirmed using our methodology.
We conclude in Sections~\ref{sec:related} and~\ref{sec:conclusion}, by acknowledging the wider discussion on ePassport privacy and unlinkability to which this paper contributes.

 % sec:intro

\section{Reducing strong unlinkability to a strong bisimilarity problem}\label{sec:strong}

The ICAO 9303 standard recommends two authentication protocols for ePassports.
The Basic Access Control protocol (BAC) was the authentication protocol originally proposed.
The Password Authenticated Connection Establishment protocol (PACE) was added in the $7^{th}$ edition of the standard released in 2015.

In this section, here we briefly explain the BAC protocol and show how it can be modelled as processes in the applied $\pi$-calculus.
We capture a version of the BAC protocol implemented in UK ePassports, as defined in CSF'10~\cite{Arapinis2010}. We should clarify that  the UK version of the BAC protocol
captures the way countries should implement the BAC protocol; hence our analysis is not limited to UK ePassports -- it applies to ePassport worldwide.
We focus, in the next two sections, on a methodology that we used to discover unlinkability attacks on the BAC protocol.
An analysis of the PACE protocol appears later in Sec.~\ref{sec:pace}.

\subsection{The BAC protocol.}

\begin{figure}[b]
\centerline{
\includegraphics[width=0.9\textwidth]{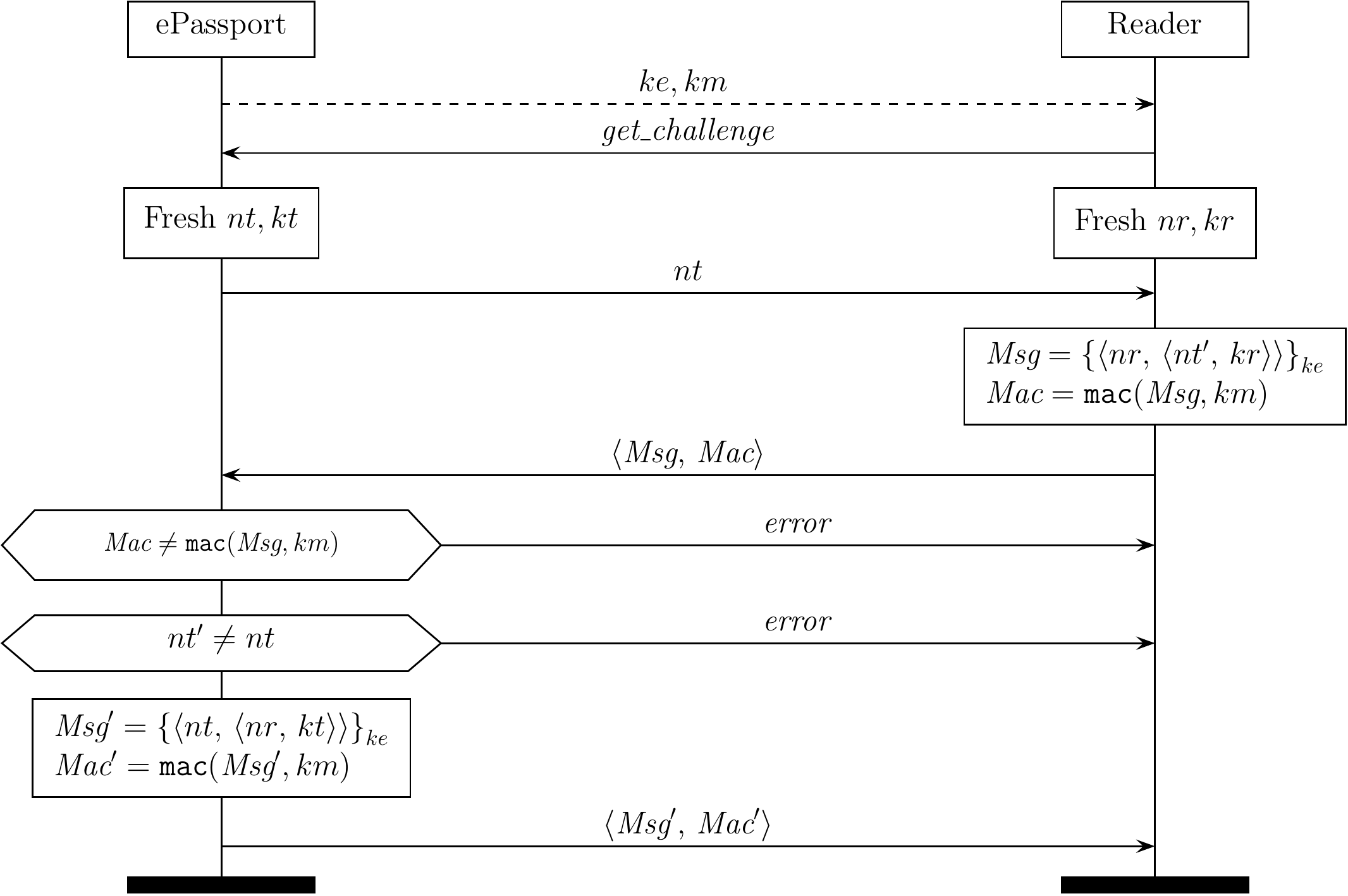}}
%\caption{The ICAO 9303 basic access control mechanism.}
    \caption{The BAC protocol with one error message for all reasons for failure.
% the attacker may intercept.
%\ZS{Reviewer 3: `get' is ambiguous, use   `get\_challenge'}
}%
\label{fig:BAC}
%\vspace{-5ex}
\end{figure}

The BAC protocol is sketched informally in Fig.~\ref{fig:BAC}.
Dashed lines $(\dashrightarrow)$ indicate a message transmitted via
an OCR session that reads the personal page of an ePassport.
The data read in the initial OCR session is used to calculate the symmetric keys $ke$ and $km$ used respectively for encryption and as the seed of message authentication codes (MACs).
Importantly, these keys are the same for every session involving the same ePassport.
Solid lines are wireless communications between a chip embedded in the ePassport and radio frequency reader.

The reader first
sends a constant message $\getchallenge$ requesting a challenge -- a nonce $nt$ sent by the ePassport -- which is
used during the mutual authentication of the ePassport and reader.
The reader shows it has the keys
by responding to the challenge with a message including nonce $nt$ encrypted and authenticated
using the keys, thereby authenticating the reader from the perspective of the ePassport.
In that message, the reader sends its own challenge $nr$, which the ePassport must respond to.
The ePassport responds to the reader with a message involving nonces $nr$ and $nt$
%\SM{The MSC in Fig. 1 uses $nt'$ at the Reader side in stead of $nt$. Is that a typo?}
encrypted and authenticated using the keys, thereby authenticating the ePassport to the reader.
Notice only the ePassport that shared keys $ke$ and $km$ and sent challenge $nt$ can respond in this way, assuming the keys are never exchanged with a malicious $3^{rd}$-party.

We can be precise about the functional properties that BAC achieves.
Firstly, BAC achieves an authentication property called (injective) \textit{agreement}~\cite{Lowe1997}.
Secondly, BAC establishes shared secrets $kr$ and $kt$ which are used to generate a symmetric key for transmitting personal data.
These properties are easily checked using automated tools such as Scyther~\cite{Cremers2008}.

We will see that, for unlinkability, the error branches in the protocol have an important role.
The ICAO 9303 standard specifies that an ``operating system dependent
error''~\cite{MRTD} should be sent when authentication fails.
Such a failure occurs when, upon the ePassport receiving an authentication request, either the message authentication code (MAC) is wrong, or a nonce in the message does not match the challenge $nt$ previously sent by the ePassport.
In this work, we assume all ``operating system dependent error'' messages are the same, since distinctions between error messages lead to known serious attacks, such as those discovered for an implementation of the French ePassport~\cite{Arapinis2010,originalBACAttack}.
Thus we consider the scenario where an ePassport manufacturer does not make the mistake of introducing this well known potential implementation flaw hence any attack we discover is valid for ePassport implementations worldwide.

\subsection{BAC in the applied \texorpdfstring{$\pi$}{pi}-calculus}

\begin{figure}[b]
\small
\[
\begin{array}{rlr}
P, Q \Coloneqq& 0 &\!\!\!\!\!\! \mbox{deadlock} \\
          \mid& \cout{M}{N}.P & \mbox{send} \\
          \mid& \cin{M}{y}.P & \mbox{receive} \\
          \mid& \texttt{if}\,M = N\,\texttt{then}\,P\,\texttt{else}\,Q & \mbox{branch} \\
          \mid& \mathopen{\left[M = N\right]}P & \mbox{match} \\
          \mid& \mathopen{\left[M \not= N\right]}P & \mbox{mismatch} \\
%          \mid& P + Q & \mbox{choice} \\
          \mid& \mathopen\nu x. P & \mbox{new} \\
          \mid& P \cpar Q & \mbox{parallel} \\
          \mid& \bang P &\!\!\!\!\!\!\!\!\! \mbox{replication}
\end{array}
\quad
\begin{gathered}
\begin{array}{rlr}
M, N \Coloneqq& x & \mbox{variable} \\
          \mid& \mac{M, N} & \mbox{mac} \\
          \mid& \left\langle M, N\right\rangle & \mbox{pair} \\
          \mid& \fst{M} & \mbox{left} \\
          \mid& \snd{M} & \mbox{right} \\
          \mid& \enc{M}{N} &\!\!\!\! \mbox{encryption} \\
          \mid& \dec{M}{N} & \mbox{decryption} \\
\end{array}
\\
\begin{array}{c}
\fst{\pair{M}{N}} \mathrel{=_{E}}M
\quad
\snd{\pair{M}{N}} \mathrel{=_{E}}N
\\
\dec{\enc{M}{K}}{K} \mathrel{=_{E}}M
\quad
\enc{\dec{M}{K}}{K} \mathrel{=_{E}}M
\end{array}
\end{gathered}
\]
\caption{
A syntax for applied $\pi$-calculus processes
with a message theory $E$.
}%
\label{figure:messages}
\end{figure}

We employ the applied $\pi$-calculus to model the operational behaviour of participants in the protocol and how they are combined to form a system.
The syntax of processes is
presented in Fig.~\ref{figure:messages}, along with a message theory featuring pairs and symmetric encryption
(encryption using a shared secret key).
The message theory also features a MAC function -- a cryptographic hash function with no equations.

For readability, we employ the abbreviation
$\clet{{x}}{{M}}{P} \triangleq P\sub{{x}}{{M}}$ in the following specifications of an ePassport ($\MainUK$) and ePassport reader ($\Reader$).
\[
\begin{array}{rl}
\MainUK \triangleq&
\begin{array}[t]{l}
 \mathopen{\cout{c_k}{ke, km}.d(x).}
 \match{ x = \getchallenge }\nu nt.\cout{c}{nt}.d(y). \\
     %& \clet{m_e,m_m}{\fst{y},\snd{y}} \\
 \texttt{if}\,\snd{y} = \mac{\fst{y}, km}\,\texttt{then} \\
\qquad
 \begin{array}[t]{l}
 \texttt{if}\,nt = \fst{\snd{\dec{\fst{y}}{ke}}}\,\texttt{then} \\
%      & \begin{array}[t]{ll}
%        \texttt{then}
\qquad
 \begin{array}[t]{l}
                         \nu kt.\clet{m}{\enc{\pair{nt}{\pair{\fst{\dec{\fst{y}}{ke}}}{kt}}}{ke}} \\
                         \cout{c}{m, \mac{m, km}}
                         \end{array}  \\
 \texttt{else}\,\cout{c}{error}
\end{array}
\\
 \texttt{else}\,\cout{c}{error}
        \end{array}
\\\\
\Reader%
\triangleq&
\begin{array}[t]{l}
 c_k(x_k).\cout{c}{\getchallenge}.d(nt).\nu nr.\nu kr.
\\
 \clet{m}{\enc{\pair{nr}{\pair{nt}{kr}}}{\fst{x_k}}}
 \cout{c}{m, \mac{\pair{m}{\snd{x_k}}}}
\end{array}
\end{array}
\]

We can express the system and idealised specification, respectively, as follows.
\[
\Sys
\triangleq
 \mathopen{\nu c_k.}\left( !\Reader \cpar {!\nu ke. \nu km. !\MainUK} \right)
\]
%\SM{The ke,km message is swapped in comparison to the km,ke message in the MSC in figure 1.}
In the system above, the private channel $c_k$
is used to initiate a session between an ePassport and Reader. The use of a private channel in this way is a modelling trick to hide all information exchanged via OCR sessions that we assume cannot be intercepted using wireless technology.
Notice that the keys $ke$ and $km$ serve as the identity of each ePassport, since they are fixed for an ePassport when it is manufactured.
Thus the innermost replication in $!\nu ke. \nu km. !\MainUK$ models the fact that the same ePassport may be used across multiple sessions, while the outermost replication indicates that
there are many different ePassports, each employing distinct keys.

\subsection{Strong unlinkability and bisimilarity}

We formulate strong unlinkability as an equivalence problem by setting out to show that $\Sys$,
as defined above, is equivalent to an idealised specification of the system that trivially satisfies unlinkability.
The idealised specification models a more restricted variant of the system where each ePassport is used only once
-- as if, once an ePassport is read, it is destroyed and a new ePassport is issued for any future sessions.
We employ the following process to model the specification.
\[
\Spec \triangleq \mathopen{\nu c_k.}\left( !\Reader \cpar {!\nu ke. \nu km. \MainUK} \right)
\]
Notice the only difference between $\Sys$ and $\Spec$ is
the absence of replication after the generation of the key.
Thus, in $\Spec$, each new session is with a new ePassport with a freshly generated key.
Trivially, there is no way to link two sessions with the same ePassport in the above specification.

We specify unlinkability by stating that it holds whenever $\Sys$ and $\Spec$ are equivalent from the perspective of an attacker.
In principle, the idea is that, if an attacker cannot tell the difference
between a scenario where the same tag is allowed to be used in
multiple sessions and the scenario where each tag is really
used once, then you cannot link two uses of the same tag.

In formulations of strong unlinkability, when we say ``equivalent'', we mean equivalent with respect to a particular notion of bisimilarity called \textit{weak early bisimilarity}. We should  avoid potential confusion of terminology: ``strong'' in the context of unlinkability does not refer to the process equivalence, but instead the particular formulation of unlinkability as an equivalence problem, rather than as a property of traces used in earlier work on the topic~\cite{Deursen08}.
%We use a state-of-the-art approach to bisimilarity for the applied $\pi$-calculus employing the labelled transition system in Fig.~\ref{fig:lts}
In what follows, we briefly present a formulation of weak early
bisimilarity for the applied $\pi$-calculus. Our presentation makes
use of processes extended with the knowledge of the attacker and an early labelled
transition system which simplifies the analysis of bisimilarity problems.
%Note the presentation we adopt here makes it relatively easy to quickly discover our attack.
% on strong unlinkability.

We follow the convention that
labelled transitions are always defined directly on extended processes in
normal form.
Adopting normal forms removes
the need for several additional conditions that must be imposed
in older formulations
of bisimilarity for the applied $\pi$-calculus~\cite{Abadi2017}.

\begin{defi}[extended processes in normal form]\label{def:prelim}
Extended processes  $\mathopen{\nu \vec{x}.}\left(\sigma \cpar P\right)$
consist of a set of restricted names $\vec{x}$, a substitution $\sigma$ mapping variables to messages, and an applied $\pi$-calculus process $P$.
We write $\nu x_1.\nu x_2. \ldots \nu x_n.P$ as $\nu x_1, x_2, \ldots x_n.P$.
 The set of free variables for process terms are as standard,
where $\nu x.P$ and $M(x).P$ bind $x$ in $P$, and process terms are always treated modulo $\alpha$-conversion.
We say that a variable $x$ is fresh for a term $P$ (processes or messages) whenever the variable does not appear free in the term, i.e., $x \not\in \fv{P}$.
A variable $x$ is said to be fresh for a substitution $\sigma$ whenever $x\sigma = x$ and, for all $y$, either $x$ is fresh for $y\sigma$ or $x = y$, i.e., $\sigma$ does not change or use $x$ in any way.
Freshness extends in the obvious point-wise fashion to sets of variables, terms and substitutions.

In this work, we always assume extended processes are in normal form
meaning they are subject to the restriction
that the variables $\dom{\sigma}$ (i.e., those variables $z$ such that $z \neq z\sigma$) are fresh for $\vec{x}$,
$\fv{P}$ and $\fv{y\sigma}$, for all variables $y$ (i.e., $\sigma$ is
idempotent, and substitution $\sigma$ has already been applied to $P$).
The substitution in an extended process is referred to as an \textit{active substitution}.

We require the following definitions for composing extended processes
in parallel and with substitutions,
defined whenever $z$ is fresh for $B$ and $\rho$,
 and also $\dom{\sigma} \cap \dom{\theta} = \emptyset$.
% XXX Layout XXX
% Negative space
%\vspace{-.1cm}
\begin{gather*}
\sigma \cpar \theta \cpar Q
\triangleq
\sigma\cdot\theta \cpar Q
\qquad\qquad
(\sigma \cpar P) \cpar (\theta \cpar Q) \triangleq \sigma\cdot\theta \cpar (P \cpar Q)
\\[5pt]
\rho \cpar \nu z.A \triangleq \mathopen{\nu z.}\left(\rho \cpar A\right)
\qquad\quad
B \cpar \mathopen{\nu z.} A
\triangleq
\mathopen{\nu{z}.} \left( B \cpar A \right)
\qquad\quad
\mathopen{\nu z.} A \cpar B
\triangleq
\mathopen{\nu{z}.} \left( A \cpar B \right)
\end{gather*}
\end{defi}

We require a standard notion of static equivalence, which
checks two processes are indistinguishable in terms of the messages
output so far.
\begin{defi}[static equivalence]\label{def:static}
Extended processes in normal form
$\mathopen{\nu\vec{x}.}\left( \sigma \cpar P \right)$
and
$\mathopen{\nu\vec{y}.}\left( \theta \cpar Q \right)$
are statically equivalent
whenever,
for all messages $M$ and $N$ such that
$\vec{x} \cup \vec{y}$ are fresh for $M$ and $N$,
we have
$M\sigma \mathrel{=_{E}}N\sigma$ if and only if $M\theta \mathrel{=_{E}}N\theta$.
\end{defi}

%\vspace{-.1cm}
The above definitions are employed in our definition of ``early'' labelled transitions (Fig.~\ref{figure:active}), which are defined directly
on extended processes in normal form. Labels on transitions are
either: $\tau$ -- an internal communication; $\co{M}(z)$ -- an output on
channel $M$ binding the output message to variable $z$; or $M\,N$
-- an input on channel $M$ receiving message $N$.
Define the bound variables such that $\bn{\pi} = \left\{ x \right\}$ only
if $\pi = \co{M}(x)$ and $\bn{\pi} = \emptyset$ otherwise.
Define the free variables such that $\n{M\,N} = \fv{M} \cup \fv{N}$, $\n{\co{M}(x)}
= \fv{M} \cup \left\{x\right\}$ and $\n{\tau} = \emptyset$.
These sets are not disjoint, due to the context in which these definitions are used.

Notice, in this labelled transition system, \texttt{if-then-else}, match and mismatch inherit their actions from the processes they guard, which is traditional for the $\pi$-calculus. This contrasts to established reduction semantics~\cite{Abadi2017} for the applied $\pi$-calculus, where \texttt{if-then-else} statements perform additional $\tau$-transitions in order to resolve guards.
This design decision will enable us to provide genuine ``strong'' counterparts to the weak equivalences that we define.

\begin{figure}
\[
\begin{gathered}
\begin{array}{c}
\begin{prooftree}
M \mathrel{=_{E}} M'
\quad
N \mathrel{=_{E}} N'
\justifies
 \mathopen{\cin{M}{x}.}P \lts{M'\,N'} {P\sub{x}{N}}
\using
\mbox{\textsc{Inp}}
\end{prooftree}
\qquad
\begin{prooftree}
M \mathrel{=_{E}} M'
\quad
\mbox{$x$ is fresh for $M, N, M', P$}
%\cup \env
\justifies
 \cout{M}{N}.P \lts{\co{M'}(x)} {\sub{x}{N}} \cpar P
\using
\mbox{\textsc{Out}}
\end{prooftree}
\\[24pt]
\begin{prooftree}
 A \lts{\pi} B
\quad
x \not\in \mathrm{n}(\pi)
\justifies
  {{\nu x.A}\lts{\pi}{\nu x.B}}
\using
\mbox{\textsc{Res}}
\end{prooftree}
\qquad
\begin{prooftree}
  {P \lts{\pi\sigma} A}
\quad
\mbox{$\bn{\pi}$ is fresh for $\sigma$}
\justifies
  {{{\sigma} \cpar P} \lts{\pi} {{\sigma} \cpar A}}
\using
\mbox{\textsc{Alias}}
\end{prooftree}
\\[24pt]
\begin{prooftree}
  P \lts{\pi} A
\qquad
 M \mathrel{=_{E}} N
\justifies
  {\match{M = N}P \lts{\pi}{A}}
\using
\mbox{\textsc{Mat}}
\end{prooftree}
\qquad\qquad
\begin{prooftree}
  P \lts{\pi} A
\qquad
 M \mathrel{=_{E}} N
\justifies
  {\texttt{if}\,M = N\,\texttt{then}\,P\,\texttt{else}\,Q \lts{\pi}{A}}
\using
\mbox{\textsc{Then}}
\end{prooftree}
\\[24pt]
\begin{prooftree}
  P \lts{\pi} A
\qquad
 M \mathrel{\neq_{E}} N
\justifies
  {\match{M \neq N}P \lts{\pi}{A}}
\using
\mbox{\textsc{Mis}}
\end{prooftree}
\qquad\qquad
\begin{prooftree}
  Q \lts{\pi} A
\qquad
 M \mathrel{\neq_E} N
\justifies
  {\texttt{if}\,M = N\,\texttt{then}\,P\,\texttt{else}\,Q \lts{\pi}{A}}
\using
\mbox{\textsc{Else}}
\end{prooftree}
\\[24pt]
\qquad
\begin{prooftree}
  {P \lts{\pi} A}
\quad
\mbox{$\bn{\pi}$ is fresh for $Q$}
\justifies
  {{P \cpar Q} \lts{\pi} {A \cpar Q}}
\using
\mbox{\textsc{Par-l}}
\end{prooftree}
\qquad
\begin{prooftree}
 P \lts{\pi} A
\justifies
 \bang P \lts{\pi} A \cpar \bang P
\using
\mbox{\textsc{Rep-act}}
\end{prooftree}
\\[24pt]
\begin{prooftree}
  P \lts{\co{M}(x)} \nu \mathopen{\vec{z}.}\left({\sub{x}{N}} \cpar P'\right)
\qquad
  Q \lts{M\,N} Q'
\qquad
\mbox{$\left\{x \right\} \cup \vec{z}$ are fresh for $Q$}
\justifies
  {P \cpar Q}\lts{\tau}{\mathopen{\nu \vec{z}.}\left(P' \cpar Q'\right)}
\using
\mbox{\textsc{Close-l}}
\end{prooftree}
\\[24pt]
\begin{prooftree}
 P \lts{\co{M}(x)} \mathopen{\nu \vec{z}.}\left({\sub{x}{N}} \cpar Q\right)
\qquad
 P \lts{M\,N} R
\qquad
 \mbox{$\vec{z}$ are fresh for $P$}
\justifies
 \bang P \lts{\tau} \mathopen{\nu \vec{z}.}\left( Q \cpar R \cpar \bang P\right)
\using
\mbox{\textsc{Rep-close}}
\end{prooftree}
\end{array}
\end{gathered}
\]
\caption{An \textit{early} labelled transition system, plus symmetric rules for parallel composition.
%\textsc{Close-l}, \textsc{Par-l}, and \textsc{Close-l}.
}\label{figure:active}
\end{figure}

The early labelled transition system and
static equivalence together can be used to define weak early bisimilarity.
Since, initially, we employ a weak formulation of early bisimilarity,
we make use of weak transitions $A \Lts{\pi} B$ which allow zero or more $\tau$-transitions to occur before and after the transition $\pi$, or zero transitions if $\pi = \tau$.
\begin{defi}[weak early bisimilarity]\label{definition:early}
A %\textbf{open}
symmetric relation between extended processes
$\mathrel{\mathcal{R}}$ is a weak early bisimulation only if,
whenever $A \mathrel{\mathcal{R}} B$ the following hold:
\begin{itemize}
\item $A$ and $B$ are statically equivalent.
\item If $A \lts{\pi} A'$ there exists $B'$ such that $B \Lts{\pi}
B'$ and $A' \mathrel{\mathcal{R}} B'$.
\end{itemize}
Processes $P$ and $Q$ are weak early bisimilar, written $P \approx Q$, whenever there exists a weak early bisimulation $\mathcal{R}$ such that $P \mathrel{\mathcal{R}} Q$.
\end{defi}
Now we have the formal tools to express the theorem that confirms that strong unlinkability does not hold for the BAC protocol.
\begin{thm}\label{thm:BAC}
$\Sys \not\approx \Spec$.
\end{thm}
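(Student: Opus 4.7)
The plan is to exhibit a distinguishing strategy for the bisimulation game that $\Sys$ can execute but $\Spec$ cannot match. The essential observation is that the inner replication $!\MainUK$ within $!\nu ke.\nu km. !\MainUK$ in $\Sys$ permits two concurrent ePassport sessions to share the same long-term keys $ke, km$, whereas the absence of that inner replication in $\Spec$ forces every ePassport session to draw freshly restricted keys. The attacker exploits this asymmetry with a reflection-style replay.

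Concretely, I would specify the following play. Via weak $\tau$-transitions on the private channel $c_k$, instantiate a single reader session together with two ePassport sessions $E$ and $E'$; in $\Sys$ it is possible to schedule the $\tau$-transitions so that $E$ and $E'$ are spawned under the same $\nu ke.\nu km$ binder (by unfolding the inner replication twice before the outer one), whereas in $\Spec$, regardless of scheduling, $E'$ necessarily uses a freshly restricted pair $ke', km'$ independent of those used by the reader's partner. On the public channels $c$ and $d$, the attacker then performs the following visible actions: observe the reader's $\getchallenge$ output, forward it to $E'$, observe $E'$'s fresh nonce $nt'$, forward $nt'$ to the reader, observe the reader's authentication message $\pair{M}{\mac{M, km}}$ with $M = \enc{\pair{nr}{\pair{nt'}{kr}}}{ke}$, and finally deliver this message to $E'$. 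In $\Sys$ both guards inside $E'$ succeed, the MAC guard because the keys coincide and the nonce guard because $nt'$ is precisely the challenge embedded in $M$, so $E'$'s next output is the fresh ciphertext $\pair{\enc{\pair{nt'}{\pair{nr}{kt}}}{ke}}{\mathit{MAC}}$. In $\Spec$, however the $\tau$-transitions are scheduled, the MAC guard inside $E'$ is evaluated with $km'$ freshly distinct from the $km$ used to produce the reader's MAC, so the outer guard fails and $E'$'s only possible next output is $\mathit{error}$.

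From this discrepancy non-bisimilarity follows. The final active substitutions assign the last output variable $z$ to a fresh ciphertext pair in $\Sys$ but to the constant $\mathit{error}$ in $\Spec$, so the equality test ``$z = \mathit{error}$'' succeeds in $\Spec$ and fails in $\Sys$. By Definition~\ref{def:static} the two extended processes are not statically equivalent, and by Definition~\ref{definition:early} no weak early bisimulation can relate them, hence $\Sys \not\approx \Spec$. The main obstacle is ruling out alternative $\tau$-schedulings in $\Spec$ that might appear to recreate the same-key configuration; this is handled by the fact that every unfolding of the outer replication $!\nu ke.\nu km.\MainUK$ introduces freshly restricted names, so any two distinct ePassport instances in $\Spec$ hold inequivalent keys whenever their MAC guards are compared. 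Equivalently, the whole strategy can be packaged as a modal formula in the classical FM logic of the paper whose terminal clause asserts $z \neq \mathit{error}$, yielding an independent certificate of non-bisimilarity.
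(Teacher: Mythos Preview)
Your strategy is a single linear trace, and that cannot work here: $\Sys$ and $\Spec$ are trace equivalent for this model of BAC (see the discussion in the introduction and in Section~\ref{sec:related}), so any diamond-only $\FM$ formula you extract from your play will be satisfied by both processes. Concretely, when you ``forward the $\getchallenge$ to $E'$'' on channel $d$, you do not control which ePassport instance in $\Spec$ actually receives it; every ePassport listens on the same public channel, and in the bisimulation game the defender $\Spec$ chooses which component performs the matching transition. $\Spec$ will simply route the $\getchallenge$, and later the authentication message $w$, to the ePassport whose keys \emph{do} match the reader you are interacting with; both guards then succeed and the final output is a non-error ciphertext, statically indistinguishable from the one produced in $\Sys$. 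Your sentence ``regardless of scheduling, $E'$ necessarily uses a freshly restricted pair'' conflates the attacker's intended recipient with the defender's free choice of who takes the $d$-input.

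The paper's proof addresses exactly this freedom by exploiting branching time. It starts two \emph{readers} and one ePassport, and the distinguishing formula $\psi$ contains a box modality $\boxm{d\,v}(\ldots)$ at the point where $\Spec$ is made to lead by committing the nonce to a particular reader; only after that commitment does $\Sys$ resume the lead and complete the session, exposing that $\Spec$'s chosen reader cannot authenticate. The alternation of leader (equivalently, the presence of a box alongside the diamonds, or the conjunction used in $\varphi$ of Section~\ref{sec:II}) is essential; a purely existential trace does not distinguish. A secondary point: a single $c_k$ interaction activates only one ePassport, since $\MainUK$ is prefixed by $\cout{c_k}{ke,km}$, so your setup in fact already requires a second reader just to make $E'$ reachable on $d$; but even once that is repaired, the linearity of the play remains the real obstruction.
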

The above is the theorem rectifying the flawed claim, communicated in CSF'10~\cite{Arapinis2010}, that unlinkability holds for this formulation of the BAC protocol.
Much of the rest of the paper is dedicated to explaining the methodology we used to prove the above result, by constructing an attack strategy invalidating the claim in Sec.~\ref{sec:symbolic}.
Later in Sec.~\ref{sec:II} we will also make a case for adjusting the model and in Sec.~\ref{sec:pace} we will show how the analysis can be repeated for PACE\@.

%\section{Methodologically discovering an attack on unlinkability}\label{sec:api}

\newcommand{\Pass}{\textit{P}}
\newcommand{\R}{\textit{V}}

\subsection{Reducing weak to strong bisimilarity}

A challenge with the CSF'10~\cite{Arapinis2010} specification of unlinkability is that it is formulated using weak transitions, which are not image finite.
%notion of bisimilarity, and the weak transitions system generated by $\Sys$ and $\Spec$ is not image finite.
\begin{defi}
A labelled transition system, given by a relation say $\lts{}$, is image finite for a process $A$,
whenever for any label $\pi$ there are finitely many $B$ such that $A \lts{\pi} B$, up to $\alpha$-conversion.
\end{defi}
The strong labelled transition relation $\lts{}$ defined in Fig.~\ref{figure:active} is image finite for all extended processes; whereas its corresponding weak labelled transition relation $\Lts{}$ is only image finite for some extended processes.
In particular, $\Lts{}$ is not image finite for processes $\System$ and $\Spec$ that are used to specify the unlinkability problem.
To see this observe
%the transition system generated by processes $\Sys$ and $\Spec$ does not have the image-finiteness property, where image finiteness is defined, as follows, as a property of the labelled transition system and processes involved.
%For example,
there are infinitely many states reachable by $\tau$-transitions
from $\Spec$ of the following form, where $n$ sessions have started by communicating on the private channel $c_k$.
\[
 \Spec \Lts{}
\begin{array}[t]{l}
 \mathopen{\nu c_k, ke_1, km_1, \ldots ke_n, km_n.}\Big(
\begin{array}[t]{l}
     \R(ke_1,km_1) \cpar  \ldots \R(ke_n,km_n) \cpar !\Reader \cpar
\\
     \Pass(ke_1,km_1) \cpar  \ldots \Pass(ke_n,km_n) \cpar {!\nu ke. \nu km. \MainUK}~\Big)
\end{array}
\end{array}
\]
where
\[
\begin{array}{rl}
\Pass(ke,km) \triangleq&
\begin{array}[t]{l}
 \mathopen{d(x).}
 \match{ x = \getchallenge }\nu nt.\cout{c}{nt}.d(y). \\
     %& \clet{m_e,m_m}{\fst{y},\snd{y}} \\
 \texttt{if}\,\snd{y} = \mac{\fst{y}, km}\,\texttt{then} \\
\qquad
 \begin{array}[t]{l}
 \texttt{if}\,nt = \fst{\snd{\dec{\fst{y}}{ke}}}\,\texttt{then} \\
%      & \begin{array}[t]{ll}
%        \texttt{then}
\qquad
 \begin{array}[t]{l}
                         \nu kt.\clet{m}{\enc{\pair{nt}{\pair{\fst{\dec{\fst{y}}{ke}}}{kt}}}{ke}} \\
                         \cout{c}{m, \mac{m, km}}
                         \end{array}  \\
 \texttt{else}\,\cout{c}{error}
\end{array}
\\
 \texttt{else}\,\cout{c}{error}
        \end{array}
\\\\
\R(ke,km)
\triangleq&
\begin{array}[t]{l}
 \cout{c}{\getchallenge}.d(nt).\nu nr.\nu kr.
\\
 \clet{m}{\enc{\pair{nr}{\pair{nt}{kr}}}{ke}}
 \cout{c}{m, \mac{\pair{m}{km}}}
\end{array}
\end{array}
\]
When we do not have image finiteness we need to find a finite representation of infinitely many processes reachable by a transition, which can make verification challenging.
To simplify verification, we show that the problem of analysing the unlinkability of BAC
can be transformed into an equivalent problem where image finiteness does hold, thereby avoiding the need to explicitly deal with reasoning about transitions such as the above.
The procedure we employ involves removing the $\tau$-transition from the model -- a process known as saturation -- thereby reducing the unlinkability problem to a \textit{strong early bisimilarity} problem.
%\SM{``image finiteness'' plays an important role; should we give a definition?}

We define an alternative system ${\AltSys}$ and specification ${\AltSpec}$, as follows, in bold.
\[
{\AltSys} \triangleq \bang\nu ke. \nu km. \bang\left(\R(ke,km) \cpar \Pass(ke,km)\right)
\]
\[
{\AltSpec} \triangleq \mathopen{\bang\nu ke. \nu km.}\left(\R(ke,km) \cpar \Pass(ke,km)\right)
\]
In the above processes, the keys $ke$ and $km$ have been distributed in advance to the relevant parties; hence a $\tau$-transition is not required to initiate a reader and ePassport with the same keys.
%\SM{I also don't understand why there is no $\nu c_k$ or $\nu c$ in these processes (this probably has to do with the type of bisimulation, but I didn't get that argument then).}\RH{Is this clearer?}
We then show that each process above is bisimilar to the original system and specification, respectively. It is easier to establish a more general result, stated in the lemma below, which can be used to transform the unlinkability problem for related protocols into a form where we have image finiteness. We make use of the
term $a(x_1, x_2, \ldots x_n).P$ as an abbreviation
for $a(x).P\sub{x_1, x_2, \ldots x_n}{\texttt{proj}_1(x), \texttt{proj}_2(x), \ldots \texttt{proj}_n(x)}$,
where $\texttt{proj}_i$ is the obvious generalisation of $\fst{}$ and $\snd{}$ to $n$-tuples.

\begin{comment}
\SM{I don't understand this definition. Is it defining the $\equiv$
symbol? How is $P$ bound? Why is term $a(x_1, x_2, \ldots x_n).P$
defined, while it doesn't occur in the definition?}

\SM{It was not immediately clear that this definition is within the
Proof of the lemma. I wonder even if this proof and the definition of
equivariance should occur in the main body of the text. It distracts
the reader from the main line of reasoning and doesn't seem to play a
pivotal role in the story. It also enlarges the distance between the
lemmas 2.5 and 2.7, while it's essential for the reader to observe
their (dis)similarities. Once you see these two lemmas next to each
other, the following corollary is evident; if they are a full page
apart, understanding the corollary implies some extra work to be done
by the reader.}
\end{comment}

\begin{lem}\label{lemma:strong}
For any $P$ and $Q$ such that $c_k$ is fresh for $P$ and $Q$, we have
\[
\mathopen{\nu c_k.}\left( \bang c_k(\vec{k}).P \cpar \bang\mathopen{\nu \vec{k}.}\cout{c_k}{\vec{k}}.Q \right)
\approx
\bang\mathopen{\nu \vec{k}.}\left(P \cpar Q \right)
\]
\end{lem}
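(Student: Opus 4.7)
The plan is to exhibit a weak early bisimulation $\mathcal{R}$ witnessing the equivalence. The key observation is that the two sides differ only by a silent $c_k$-handshake: on the left, a $\tau$-step (obtained by combining \textsc{Rep-act} on each of the two replications with rule \textsc{Close-l}) is needed to spawn a fresh pair $P \cpar Q$ sharing a freshly restricted tuple of keys $\vec{k}$, whereas on the right a new such pair becomes available whenever \textsc{Rep-act} unfolds a copy of the outer replication $\bang\nu\vec{k}.(P \cpar Q)$ in response to some visible action. The weak closure $\Lts{}$ on the left then absorbs these administrative $\tau$-steps.

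Writing $\Pi_n \triangleq A_1 \cpar B_1 \cpar \cdots \cpar A_n \cpar B_n$ and $\Sigma_m \triangleq P_{n+1} \cpar Q_{n+1} \cpar \cdots \cpar P_{n+m} \cpar Q_{n+m}$, I would define $\mathcal{R}$ to consist of pairs $(L, R)$ of normal-form extended processes such that, for some $n, m \geq 0$, pairwise disjoint fresh key-tuples $\vec{k}_1, \ldots, \vec{k}_{n+m}$, and derivatives $(A_i, B_i)$ reachable by labelled transitions from renamed fresh copies $(P_i, Q_i)$ of $(P, Q)$ (with $\vec{k}$ renamed to $\vec{k}_i$) for $i \leq n$,
\[
L \;=\; \mathopen{\nu c_k, \vec{k}_1, \ldots, \vec{k}_{n+m}.}\bigl(\bang c_k(\vec{k}).P \cpar \bang\nu\vec{k}.\cout{c_k}{\vec{k}}.Q \cpar \Pi_n \cpar \Sigma_m\bigr),
\]
\[
R \;=\; \mathopen{\nu \vec{k}_1, \ldots, \vec{k}_n.}\bigl(\bang\nu\vec{k}.(P \cpar Q) \cpar \Pi_n\bigr).
\]
The slack $m$ records sessions already activated on the left by a $c_k$-handshake but not yet matched by any observable transition; the corresponding copies on the right remain folded into its outer replication. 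The initial pair of the lemma sits at $n = m = 0$.

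The bisimulation check then proceeds by cases. Static equivalence is immediate: the active substitutions harvested from the matched derivatives $(A_i, B_i)$ coincide on both sides, and the restricted names $c_k, \vec{k}_j$ are fresh for any test messages by hypothesis. For the transfer condition: (i) a transition $L \lts{\tau} L'$ produced by a $c_k$-handshake raises $m$ to $m+1$ and is matched by the empty weak step on the right; (ii) an $L$-transition from an active component in $\Pi_n$ is matched by the same action from the corresponding component on the right; (iii) an $L$-transition from an extra pair in $\Sigma_m$ is matched on the right by \textsc{Rep-act} unfolding a fresh copy of $\nu\vec{k}.(P \cpar Q)$ under the same action, which migrates that slot into $\Pi$ and decrements $m$; (iv) an $R$-transition via \textsc{Rep-act} is matched either by firing from an existing extra pair (when $m \geq 1$) or, when $m = 0$, by prepending a single $\tau$ on $c_k$ inside a weak transition $L \Lts{\pi} L'$ before firing the corresponding action from the freshly activated pair.

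The main obstacle will be bookkeeping under the normal-form discipline of Definition~\ref{def:prelim}: each $\vec{k}_i$ must be demonstrably fresh for the active substitution, for $c_k$, and for all other $\vec{k}_j$; the scope extrusion performed when the two replicated servers synchronise on $c_k$ must bind the same $\vec{k}_i$ across both the spawned $P_i$- and $Q_i$-components so that keys are genuinely shared within a session; and $\alpha$-conversion must be handled uniformly on both sides so that the derivatives $(A_i, B_i)$ denote the same extended processes. Once closure of $\mathcal{R}$ under these rewrites is verified, the case analysis above delivers the two symmetric bisimulation conditions required by Definition~\ref{definition:early}.
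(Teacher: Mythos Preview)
Your proposal is correct and follows essentially the same approach as the paper: both build the witnessing relation by pairing a left-hand state carrying $n$ spawned sessions with a right-hand state carrying only a subset of them, the ``slack'' sessions on the left being constrained to still be fresh copies of $P$ and $Q$ so that the right can catch up via \textsc{Rep-act} when needed. The paper parameterises the matched subset by an injection $f\colon\{1,\ldots,m\}\to\{1,\ldots,n\}$ rather than by your ordering convention, and it makes the accumulated active substitution $\sigma$ and the extruded private names $\vec{r}$ explicit in the normal forms of both sides (something you only allude to in the bookkeeping paragraph); once you add those to your $L$ and $R$, the two constructions coincide.
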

Notice the proof, provided in Appendix~\ref{sec:app1}, is just a sketch. To go through all details would be cumbersome, indicating the amount of work that is saved when applying the above lemma to reduce the complexity of the unlinkability problem we aim to solve.

By a similar argument, we can also establish the following lemma.
\begin{lem}\label{lemma:strong2}
For any $P$ and $Q$ such that $c_k$ is fresh for $P$ and $Q$, we have
\[
\mathopen{\nu c_k.}\left( \bang c_k(\vec{k}).P \cpar \bang \mathopen{\nu \vec{k}.}\bang \cout{c_k}{\vec{k}}.Q \right)
\approx
\bang \mathopen{\nu \vec{k}.} \bang \left(P \cpar Q \right)
\]
\end{lem}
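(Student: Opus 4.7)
The proof will follow the same strategy as for Lemma~\ref{lemma:strong}: construct an explicit candidate weak early bisimulation $\mathcal{R}$ by characterising the reachable configurations on each side, then check the bisimulation conditions by case analysis on labelled transitions.

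First I would identify the normal form of reachable configurations. After a series of transitions, the left-hand side takes the shape
\[
\nu c_k, \vec{k}_1, \ldots, \vec{k}_n.\Big( \bang c_k(\vec{k}).P \cpar \bang\nu\vec{k}.\bang \cout{c_k}{\vec{k}}.Q \cpar S_1 \cpar \cdots \cpar S_n \Big),
\]
where each session bundle $S_i$ has the form $\bang \cout{c_k}{\vec{k}_i}.Q \cpar P_{i,1} \cpar Q_{i,1} \cpar \cdots \cpar P_{i,m_i} \cpar Q_{i,m_i}$ for some derivatives $P_{i,j}$ and $Q_{i,j}$ of $P\sub{\vec{k}}{\vec{k}_i}$ and $Q\sub{\vec{k}}{\vec{k}_i}$. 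The right-hand side reaches configurations of the corresponding shape
\[
\nu \vec{k}_1, \ldots, \vec{k}_n.\Big( \bang\nu\vec{k}.\bang(P \cpar Q) \cpar T_1 \cpar \cdots \cpar T_n \Big),
\]
where each $T_i = \bang(P \cpar Q)\sub{\vec{k}}{\vec{k}_i} \cpar P'_{i,1} \cpar Q'_{i,1} \cpar \cdots \cpar P'_{i,m_i} \cpar Q'_{i,m_i}$. I would then define $\mathcal{R}$ to relate such configurations precisely when the number of outer key contexts $n$, the multiplicities $m_i$, and the individual derivatives $(P_{i,j}, Q_{i,j}) = (P'_{i,j}, Q'_{i,j})$ all agree, and close $\mathcal{R}$ under symmetry.

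Next I would verify that $\mathcal{R}$ is a weak early bisimulation. Static equivalence is immediate, since neither side discloses $c_k$ or any $\vec{k}_i$ through its active substitution. For the transition clause there are three cases to examine. Actions arising inside some derivative $P_{i,j}$ or $Q_{i,j}$ lift identically to the other side. A left-hand $\tau$ that synchronises $\bang c_k(\vec{k}).P$ with a copy of $\bang\cout{c_k}{\vec{k}_i}.Q$ inside $S_i$ spawns a fresh $(P, Q)\sub{\vec{k}}{\vec{k}_i}$ pair; this is matched on the right by replicating the inner $\bang(P \cpar Q)\sub{\vec{k}}{\vec{k}_i}$ once. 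A left-hand $\tau$ that instead synchronises with the outer $\bang\nu\vec{k}.\bang\cout{c_k}{\vec{k}}.Q$ introduces a fresh key context with a new $\vec{k}_{n+1}$ and initial multiplicity $m_{n+1} = 1$, matched on the right by one replication of the outer $\bang\nu\vec{k}.\bang(P\cpar Q)$ followed by one replication of the inner $\bang(P \cpar Q)$. In the opposite direction, each replication on the right is matched on the left by the corresponding one or two $\tau$-transitions on $c_k$, giving a genuinely weak match; this is exactly where saturation of $c_k$-synchronisations is used.

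The main obstacle will be the bookkeeping in the two-layer replication case combined with the symmetric direction: each time the right-hand side unfolds the outer $\bang\nu\vec{k}$ it simultaneously makes a fresh session available ``for free'', so the matching sequence of $\tau$-transitions on the left must be scheduled in the correct order, and the induced configurations must be checked to lie in $\mathcal{R}$. Otherwise the argument is a direct extension of the proof of Lemma~\ref{lemma:strong} with one additional layer of replication to track, consistent with the excerpt's appeal to ``a similar argument''.
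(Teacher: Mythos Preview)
Your overall strategy---build an explicit weak bisimulation by describing the reachable shapes on each side---is exactly what the paper does for Lemma~\ref{lemma:strong}, and the paper proves Lemma~\ref{lemma:strong2} only by saying ``by a similar argument''. So you are on the right track. However, the particular relation you write down is too tight, and the step you call ``matched on the right by replicating the inner $\bang(P\cpar Q)\sub{\vec{k}}{\vec{k}_i}$ once'' is where it breaks.

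In the labelled transition system of this paper, replication does not unfold silently: there is no $\tau$-rule $!R \lts{\tau} R \cpar !R$, and the only structural identification used is equivariance (permuting $\nu$-binders). Unfolding happens only through \textsc{Rep-act} or \textsc{Rep-close}, i.e.\ tied to an observable action or an internal communication between two copies. So when the left-hand side performs a $\tau$ on $c_k$ and spawns a fresh $(P,Q)\sub{\vec{k}}{\vec{k}_i}$ pair, the right-hand side has \emph{no} $\tau$ available to match it; it must stay put. But your $\mathcal{R}$ requires the multiplicities $m_i$ (and, in the outer case, the number $n$ of key contexts) to coincide exactly, so the pair $(A',B)$ you land in is no longer in $\mathcal{R}$. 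The same failure hits the outer-$\tau$ case.

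The paper's fix, visible in the proof of Lemma~\ref{lemma:strong} in Appendix~\ref{sec:app1}, is to make the relation asymmetric: the left-hand side is allowed to carry \emph{extra dormant} sessions---key contexts whose $P$- and $Q$-components are still the initial $P\sub{\vec{k}}{\vec{k}_i}$ and $Q\sub{\vec{k}}{\vec{k}_i}$---which the right-hand side has not yet created. This is encoded by an injection $f\colon\{1,\dots,m\}\to\{1,\dots,n\}$ picking out which left-hand sessions are actually mirrored on the right; indices outside the image of $f$ must be dormant. A left $\tau$ on $c_k$ is then matched by zero moves on the right (the new session simply joins the dormant pool), and when a dormant session first acts, the right-hand side uses \textsc{Rep-act} to create the matching copy in the same step. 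For Lemma~\ref{lemma:strong2} you need this slack at \emph{both} levels: an injection for key contexts, and, within each matched key context, a further injection for $(P,Q)$-pairs. With that adjustment your case analysis goes through as written; without it the relation is not a bisimulation (and rescuing it via ``up to $!R \equiv R\cpar!R$'' would require justifying an up-to technique the paper never sets up).
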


As an immediate consequence of Lemma~\ref{lemma:strong} and Lemma~\ref{lemma:strong2} we obtain.
\begin{prop}\label{cor:strong}
${\Sys} \approx {\Spec}$
if and only if
${\AltSys} \approx {\AltSpec}$.
\end{prop}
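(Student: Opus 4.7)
The plan is to recognise that the proposition is an immediate consequence of the two preceding lemmas, combined with the fact that $\approx$ is an equivalence. The work is entirely in lining up the subterms of $\Sys$ and $\Spec$ with the left-hand sides of Lemma~\ref{lemma:strong2} and Lemma~\ref{lemma:strong}, respectively.

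First I would unfold the definitions of $\Reader$ and $\MainUK$ and observe that they have exactly the shape needed for the lemmas. The reader begins with a bound input $c_k(x_k).\ldots$ on the private channel $c_k$, and using the $n$-ary input abbreviation $c_k(\vec{k}).P$ with $\vec{k} = (ke,km)$, we may identify $\Reader$ with $c_k(\vec{k}).\R(ke,km)$. Similarly, $\MainUK$ begins with $\cout{c_k}{ke,km}.\ldots$, so it has the form $\cout{c_k}{\vec{k}}.\Pass(ke,km)$. Note that $c_k$ is fresh for $\R(ke,km)$ and $\Pass(ke,km)$, since in both processes it appears only in the leading action that has now been stripped off; this is the side condition required by both lemmas.

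Next, with $P \triangleq \R(ke,km)$ and $Q \triangleq \Pass(ke,km)$, the specification rewrites as
\[
\Spec = \mathopen{\nu c_k.}\bigl(\bang c_k(\vec{k}).P \cpar \bang \mathopen{\nu\vec{k}.}\cout{c_k}{\vec{k}}.Q\bigr),
\]
which is exactly the left-hand side of Lemma~\ref{lemma:strong}, and hence $\Spec \approx \bang \mathopen{\nu\vec{k}.}(P \cpar Q) = \AltSpec$. Likewise, $\Sys$ differs from $\Spec$ only by the inner replication in front of $\MainUK$, so it rewrites as
\[
\Sys = \mathopen{\nu c_k.}\bigl(\bang c_k(\vec{k}).P \cpar \bang \mathopen{\nu\vec{k}.}\bang \cout{c_k}{\vec{k}}.Q\bigr),
\]
matching the left-hand side of Lemma~\ref{lemma:strong2}, and hence $\Sys \approx \bang \mathopen{\nu\vec{k}.}\bang (P \cpar Q) = \AltSys$.

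Finally, I would conclude by transitivity and symmetry of $\approx$: if $\Sys \approx \Spec$, then $\AltSys \approx \Sys \approx \Spec \approx \AltSpec$, and conversely if $\AltSys \approx \AltSpec$ then $\Sys \approx \AltSys \approx \AltSpec \approx \Spec$. The only subtlety, which I would flag but not dwell on, is confirming that $\vec{k}$ really does behave as a simple pair here so that the $n$-ary input/output sugar matches the pattern of the lemmas; since $\vec{k} = (ke,km)$ is just a two-tuple, this is essentially notational bookkeeping. There is no real obstacle: the heavy lifting was already done in proving the two lemmas, and this proposition is a direct corollary.
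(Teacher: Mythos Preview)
Your proposal is correct and matches the paper's approach exactly: the paper states the proposition as an immediate consequence of Lemma~\ref{lemma:strong} and Lemma~\ref{lemma:strong2}, and you have simply unpacked that claim by lining up $\Sys$ and $\Spec$ with the respective left-hand sides and invoking transitivity and symmetry of $\approx$.
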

Since, in this model of the BAC protocol,
all communications on public channels use channel $c$ for outputs and $d$ for inputs, there are no $\tau$-transitions in ${\AltSys}$ or ${\AltSpec}$.
Thereby, we have reduced the problem to a form where we can apply the strong variant of early bisimilarity, defined as follows.
%\SM{The terminology is a bit confusing. So we have defined ``early bisimilarity'', which is in fact ``weak early bisimilarity'' and now we're defining ``strong bisimilarity'', which is in fact ``strong early bisimilarity''. That doesn't sound like a consistent naming strategy.}
\begin{defi}[strong early bisimilarity]\label{def:strong}
A %\textbf{open}
symmetric relation between extended processes
$\mathrel{\mathcal{R}}$ is a strong early bisimulation only if,
whenever $A \mathrel{\mathcal{R}} B$ the following hold:
\begin{itemize}
\item $A$ and $B$ are statically equivalent.
\item If $A \lts{\pi} A'$ there exists $B'$ such that $B \lts{\pi}
B'$ and $A' \mathrel{\mathcal{R}} B'$.
\end{itemize}
Processes $P$ and $Q$ are strong early bisimilar, written $P \sim Q$, whenever there exists a strong early bisimulation $\mathcal{R}$ such that $P \mathrel{\mathcal{R}} Q$.
\end{defi}
Notice the only difference compared to Def.~\ref{definition:early} is that, in clause two of the definition above, every transition is matched by a single transition -- extra $\tau$-transitions are not permitted.
The following theorem summarises the correctness of the transformation of the unlinkability problem described in this section, which is an immediate consequence of Proposition~\ref{cor:strong} and the absence of $\tau$-transitions in $\AltSys$ and $\AltSpec$.
\begin{thm}\label{thm:strong}
${\Sys} \approx {\Spec}$
if and only if
${\AltSys} \sim {\AltSpec}$.
\end{thm}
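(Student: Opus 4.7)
The plan is to chain Proposition~\ref{cor:strong} with the observation that, on $\AltSys$, $\AltSpec$, and all their derivatives, weak and strong early bisimilarity must coincide, since no $\tau$-transitions are ever enabled. Proposition~\ref{cor:strong} already gives $\Sys \approx \Spec$ iff $\AltSys \approx \AltSpec$, so it suffices to prove that $\AltSys \approx \AltSpec$ iff $\AltSys \sim \AltSpec$.

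The core step is to establish the invariant: no extended process reachable from $\AltSys$ or $\AltSpec$ under the transitions of Fig.~\ref{figure:active} admits a $\tau$-transition. Inspecting $\R(ke,km)$ and $\Pass(ke,km)$, every output is on channel $c$ and every input is on channel $d$; neither participant ever inputs on $c$ nor outputs on $d$, and channel names are never carried as message payloads. The only rules producing a $\tau$-label are \textsc{Close-l}, its symmetric counterpart, and \textsc{Rep-close}, and each requires a matched input/output pair on a common channel (up to the message theory). The message theory of Fig.~\ref{figure:messages} identifies no distinct constants such as $c$ and $d$, so no such matched pair exists in the initial processes. This channel discipline is preserved by every non-$\tau$ transition, since \textsc{Inp} substitutes attacker-supplied messages only into data positions in the continuation, and replication merely reproduces a process with identical syntactic channel usage.

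Given the invariant, weak transitions collapse to strong ones on these processes: $A \Lts{\pi} B$ holds iff $A \lts{\pi} B$ for $\pi \neq \tau$, while $A \Lts{\tau} B$ holds iff $A = B$. Consequently any weak early bisimulation relating extended processes reachable from $\AltSys$ and $\AltSpec$ is already a strong early bisimulation, and every strong bisimulation is trivially weak. So $\AltSys \approx \AltSpec$ iff $\AltSys \sim \AltSpec$, and combining with Proposition~\ref{cor:strong} yields the theorem.

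The main delicacy is the channel-discipline invariant in the second step: one must explicitly rule out the possibility that substitutions introduced via input transitions, or new copies spawned by \textsc{Rep-act}, somehow produce a later matched input/output pair. Here this is immediate from the syntax of $\R$ and $\Pass$ and from the fact that channels are not transmissible in the messages exchanged, but stating the invariant precisely is what makes the remainder of the argument a one-line consequence rather than a semantic case analysis.
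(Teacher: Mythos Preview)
Your proposal is correct and follows exactly the approach the paper takes: invoke Proposition~\ref{cor:strong} and then collapse weak to strong bisimilarity using the absence of $\tau$-transitions, justified by the $c$/$d$ channel discipline. The paper states this as an immediate consequence without spelling out the invariant; your explicit argument that the channel discipline is preserved under all transitions (inputs only substitute into data positions, replication preserves syntax, and the message theory cannot equate $c$ and $d$) is a welcome elaboration of what the paper leaves implicit.
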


%\hspace{-.63cm}
%\fbox{\begin{minipage}{0.98\textwidth}
\subsection{A tribute to Jos Baeten in the language of process equivalences}\label{sec:jos}
%The various terms used to identify process equivalences can become confusing -- ``weak''/``strong'', ``early''/``late''/``open''\ldots
% ``interleaving''/``non-interleaving'', ``branching-time''/``linear-time''\ldots

%\textbf{From weak to strong.}
In this work, we make use of both ``weak'' and ``strong'' equivalences, since the original formulation of strong unlinkability
 was in terms of a weak equivalence, but strong equivalences are
 easier to work with. Indeed, the authors were inspired by a panel
 discussion during the $20^{th}$ edition of CONCUR chaired by Jos
 Baeten, to whom this paper is dedicated on the occasion of his
 retirement. The idea that strong equivalences are easier to work with
 was a point of view raised by Jos Baeten during that panel session.

The above mentioned panel session, during the $20^{th}$ edition of CONCUR, ended with a question from the audience, ``but what can you do with all these process equivalences?'' The response from a  panellist was one word: ``security.'' Indeed, this paper embodies that panel session, since we go deeper into the spectrum of process equivalences in several dimensions in order to obtain results in the security domain.

%\textbf{From early to open.}
Beyond the ``weak'' v.s.~``strong'' dimension, another dimension we exploit
in this work is the distinction between ``early'' and ``open''
equivalences. In the next section, we introduce a notion of ``strong
open'' bisimilarity, which is described in terms of an ``open late''
labelled transition system. Traditionally, the applied $\pi$-calculus
has been endowed with a notion of bisimilarity called ``labelled
bisimilarity,'' which, is little more than an alias for ``weak early''
bisimilarity (Def.~\ref{definition:early}). Our primary reason for
moving from  ``early'' to ``open'' is that the open setting enables
symbolic methods to be directly applied hence is easier to check
systematically. Open bisimilarity should however be applied carefully,
since it is strictly finer than early bisimilarity; indeed, open
bisimilarity is intuitionistic whereas early bisimilarity is
classical~\cite{Ahn2017}. The significance of this insight was
emphasised by Jos Baeten himself at the $28^{th}$ edition of CONCUR during the best paper award ceremony, indicating another way in which his leadership has influenced this paper.
%Note an additional benefit of open bisimilarity is that it is a congruence for open process terms, i.e., those containing free variables, which makes it a robust method to apply.

One might say, ``well, if it's easier to check, why not just fix open bisimilarity as the target equivalence?'' This view doesn't hold up for two reasons. Firstly, the security community are used to weak early bisimilarity, so confidence is increased if we can verify which attacks discovered using open bisimilarity are also valid for weak early bisimilarity.
%We will bridge that gap between early and open bisimilarity, using a methodology illustrated in the next section that makes use of modal logic formulae.
Secondly, taking a fresh position,
open bisimilarity is a little too fine for proving some security and privacy properties, so a better target equivalence for open processes (those containing free variables) would be ``quasi-open'' bisimilarity which balances the qualities of early bisimilarity and open bisimilarity -- a discussion on this appears in a companion report~\cite{ArXiv}. Thus when checking bisimilarity, we require both a notion of open bisimilarity which is easier to explore symbolically, and also a coarser equivalence such as early bisimilarity (or quasi-open bisimilarity) that serves as our target notion of bisimilarity; and, during the search for a proof or a counterexample (an attack), we play a game where we move between these equivalences. This methodology we illustrate in the next section.
For the above reasons, we should be aware of
%\SM{Grammatical problem.}
how to move between ``weak early'', ``strong early'' and ``strong open'' variants of bisimilarity, since they come together to form a methodology for solving unlinkability problems.

%\textbf{Further dimensions and attacker models.}
Going further, we could exploit further dimensions in the spectrum of process equivalences -- a point we return to in Section~\ref{sec:sim}. In particular, we can move along the ``linear-time''/ ``branching-time'' spectrum~\cite{Glabbeek2001} to pick out coarser equivalences than bisimilarity, which can be connected with a spectrum of attacker threat models.
In short, the choice of equivalence can control the testing capabilities of an attacker, which can restrict the space of attacks that we range over when we verify a security or privacy property.
These intermediate definitions can be obtained by taking any of the above mentioned notions of bisimilarity and restricting them in various ways.
 Indeed, the linear-time/branching-time spectrum was the main topic of
 the aforementioned panel discussion chaired by Jos Baeten, and has been a running theme throughout his work~\cite{Baeten87,Andova06,Markovski2012}.
Looking beyond the current paper, there are further uncharted depths to be explored in
terms of exploiting the spectrum of process equivalences to both understand attacker/threat models and to enable new methodologies for verification in the security domain. For example, all equivalences in this work ``interleave'' actions, but there is a spectrum of ``non-interleaving'' or ``truly concurrent'' equivalences that make explicit subtle distinctions that occur when there may be multiple attackers that are not co-located or where the duration of events is significant~\cite{Baeten91,Baeten93,Baeten98}.
This line of inspiration, assimilated into this paper, runs back to the days when the second author was supervised by
Jos Baeten at the University of Amsterdam~\cite{BaetenMauw}, during
which time the inter-personal style of Jos Baeten set a benchmark for the career of the second author.

%Weak/progressing/strong, Early/Late/Qusi-Open/Open, interleaving/non-interleaving, branching-time/linear-time\ldots}
%The language of process equivalences can become confusing.

%\end{minipage}}

 % sec:protocols
\newcommand{\varnt}{nt}
\newcommand{\varntone}{nt_1}

\section{Searching for a bisimulation symbolically}\label{sec:symbolic}

Having reduced unlinkability to a strong bisimilarity problem,
we now aim to prove or disprove ${\AltSys} \sim {\AltSpec}$.
To do so, we attempt to construct a strong early bisimulation $\mathcal{R}$ (Def.~\ref{def:strong}) such that
${\AltSys} \mathrel{\mathcal{R}} {\AltSpec}$.
However, na{\"{\i}}vely searching for a bisimulation using the early labelled transition system in
Fig.~\ref{figure:active} is challenging, since we must consider an infinite number of messages which can be received for every input. And, although it has been shown that checking a bounded number of such messages suffices for message theories such as the one we employ, the bound on the number of messages to check is hyper-exponential~\cite{Huttel2003}.

%\SM{After having introduced early and strong bisimilarity, we now
%introduce open bisimilarity. We should somehow prepare the reader for
%these steps. At the moment it looks like a sequence of magic tricks
%that we introduce at the point where we need them. Maybe provide the
%general ideas of these steps already in the beginning of Section 3 and
%come back to it every time we go to the next representation.}
For this reason, it makes sense to approach the problem using symbolic methods, for which we apply open bisimilarity which is an under-approximation of early bisimilarity --
that is, if two processes are open bisimilar then they are early bisimilar, but not necessarily vice-versa.
%\SM{We need to hint already at this point on the consequences of using an under-approximation. I assume that it means that if two process are open bisimilar they are also early bisimilar.}
Open bisimilarity is suited to symbolic methods, since it uses a call-by-need approach to instantiating inputs where variables representing inputs are only instantiated when they are needed in order to enable a transition.
Due to the fact that open bisimilarity is an under-approximation, care must be taken, since open bisimilarity however may discover certain spurious attacks for the BAC unlinkability problem.\footnote{
The spurious counterexamples arise due to the fact that guards in $\texttt{if-then-else}$
 statements are treated intuitionistically. We leave it to related work to explain why open bisimilarity is intuitionistic~\cite{Ahn2017,Horne2018}, and what spurious examples may arise~\cite{ArXiv}. We will focus here on a counterexample that is not spurious.}
Hence the use of open bisimilarity must be complemented by a methodology for verifying whether an attack discovered using symbolic methods is a real attack or not.

\newcommand{\sstar}[2]{\mathclose{\left(#1\circ#2\right)^*}}

\subsection{Open bisimilarity as a symbolic bisimilarity}

Open bisimilarity is suited to symbolic analysis of protocols, since it permits inputs to be lazily instantiated.
Previously, open bisimilarity has been defined for a slightly less abstract cryptographic calculus, called the \textit{spi-calculus}~\cite{briais06entcs,tiu07aplas,abadi99ic}.
The spi-calculus is less general since it is hard wired with mechanisms for implementing specific equational theories which are abstracted away in the applied $\pi$-calculus, making the applied $\pi$-calculus more concise and allowing it to be instantiated with more theories.

For analysing the unlinkability of ePassports we require the additional power of the applied $\pi$-calculus, hence introduce a notion of open bisimilarity for the applied $\pi$-calculus. This definition of open bisimilarity is a contribution of this paper.
The definitions we provide do have many features in common with notions of symbolic bisimilarity~\cite{Hennessy1995} for the applied $\pi$-calculus, particularly the work of Liu and Lin~\cite{Liu}. We should clarify that such notions of symbolic bisimilarity were never intended to capture open bisimilarity, due to their classical interpretation of constraints; their objective was to directly implement early bisimilarity (or early congruence -- the largest congruence relation contained in early bisimilarity).

%We require less machinery to define open bisimilarity for the applied $\pi$-calculus, since the applied $\pi$-calculus abstracts away from details that can be regarded as implementation concerns.

%\SM{Here we introduce ``late'' in comparison to ``early'' before. The explanation of ``late'' is only given in technical terms that don't help to understand the difference. What is one called ``early'' and the other ``late''? What is the intuition between the difference?}
%\SM{Now we have three types of bisimulation: two which have to do with ``early'', which are called ``early bisimulation'' and ``strong bisimulation'' and one which has to do with ``late'' and which is called ``open''. We put the burden on the reader to see their correspondence.}
Open bisimilarity is defined in terms of an \textit{open late labelled transition system}, presented in Fig.~\ref{fig:late}, where,
like the early labelled transition system in Fig.~\ref{figure:active},
the rules are only well defined for extended processes in normal form.
We firstly explain the ``open late'' terminology (in comparison to ``closed early'', where ``closed'' is the antonym for ``open'' in this setting).
%In addition to definitions in Def.~\ref{def:prelim}, for extended processes in normal form we allow substitutions to be applied as follows.

\paragraph{Late v.s.\ early.} A key difference between these labelled transition systems is that, in a \textit{late} labelled transition system, the input labels are of the form $M(x)$, where $M$ is a message representing a recipe for producing a channel and $x$ is variable which acts as a placeholder for some input message.
%The key feature of such a \textit{late} labelled transition system compared to an \textit{early} labelled transition system, is that way input transitions a handled.
Notice in rule $\textsc{Inp}$ in Fig.~\ref{figure:active} the message input is chosen immediately (from infinitely many possible messages) and hence the input message appears on the input label;
whereas, in rule $\textsc{oInp}$ in Fig.~\ref{fig:late} the input message appears as a variable. The use of a variable means that we do not need to decide immediately which messages should be chosen as inputs; instead, we can instantiate the variable later in a called-by-need fashion, possibly after several steps (subject to some constraints as we will explain below).
The key rules that change to accommodate a late approach to inputs compared to the early approach are the rules \textsc{oInp}, \textsc{oClose-l}, \textsc{oRep-close}.

In order to accommodate the late input labels, we must change slightly the definition of the bound names and free names of a label, compared to the corresponding definition for the early labelled transition system, as follows:
the bound variables are such that $\bn{M(x)} = \bn{\co{M}(x)} = \left\{ x \right\}$ and $\bn{\tau} = \emptyset$; while
the free variables are such that $\n{M(x)} = \n{\co{M}(x)} =  \fv{M}\cup\left\{x\right\}$
and $\n{\tau} = \emptyset$.
These definitions are used in the rules of Fig.~\ref{fig:late}.

\begin{figure*}[ht]
\Small%
\[
\begin{gathered}
\begin{array}{c}
\begin{prooftree}
M \mathrel{=_{E}} M'\theta
\quad
\mbox{$x$ is fresh for $M, M', h, \mathcal{D}, \theta$}
\justifies
h, \mathcal{D}  \colon
\theta
\cpar
\mathopen{\cin{M}{x}.}P \lts{M'(x)}
\theta
\cpar
P
\using
\mbox{\textsc{oInp}}
\end{prooftree}
\\[24pt]
\begin{prooftree}
M \mathrel{=_{E}} M'\theta
\quad
\mbox{$x$ is fresh for $M, M', N, P, h, \mathcal{D}, \theta$}
%\cup \env
\justifies
h, \mathcal{D} \colon
\theta
\cpar
 \cout{M}{N}.P \lts{\co{M'}(x)}
\theta
\cpar
{\sub{x}{N}} \cpar P
\using
\mbox{\textsc{oOut}}
\end{prooftree}
\\[24pt]
\begin{prooftree}
h, \mathcal{D} \colon  \theta
\cpar
P \lts{\pi}
A
\qquad
M \mathrel{=_{E}} N
\justifies
h, \mathcal{D} \colon
\theta
\cpar
\match{M = N} P
\lts{\pi}{
A}
\using
\mbox{\textsc{oMat}}
\end{prooftree}
\qquad
\begin{prooftree}
h, \mathcal{D} \colon  \theta
\cpar
P \lts{\pi}
A
\qquad
M \mathrel{=_{E}} N
\justifies
h, \mathcal{D} \colon
\theta
\cpar
\texttt{if}\,M = N\,\texttt{then}\,P\,\texttt{else}\,Q
\lts{\pi}{
A}
\using
\mbox{\textsc{oThen}}
\end{prooftree}
\\[24pt]
\begin{prooftree}
h, \mathcal{D} \colon
\theta \cpar Q \lts{\pi} A
\quad
h, \mathcal{D}, \theta \vDash M \neq N
\justifies
  {h, \mathcal{D} \colon \theta \cpar \match{M \neq N} Q \lts{\pi}{A}}
\using
\mbox{\textsc{oMis}}
\end{prooftree}
\quad
\begin{prooftree}
h, \mathcal{D} \colon   \theta \cpar Q \lts{\pi} A
\quad
h, \mathcal{D}, \theta \vDash M \neq N
\justifies
  {h, \mathcal{D} \colon \theta \cpar \texttt{if}\,M = N\,\texttt{then}\,P\,\texttt{else}\,Q \lts{\pi}{A}}
\using
\mbox{\textsc{oElse}}
\end{prooftree}
\\[24pt]
\begin{prooftree}
h \cdot x^o, \mathcal{D} \colon
 A \lts{\pi} B
\quad
\mbox{$x$ is fresh for $\pi$, $h$, $\mathcal{D}$}
\justifies
h, \mathcal{D}  \colon {{\nu x.A}\lts{\pi}{\nu x.B}}
\using
\mbox{\textsc{oRes}}
\end{prooftree}
\\[24pt]
\begin{prooftree}
h, \mathcal{D}  \colon {\theta \cpar P \lts{\pi} A}
\quad
\mbox{$\bn{\pi}$ is fresh for $Q$}
\justifies
h, \mathcal{D}  \colon {\theta \cpar {P \cpar Q} \lts{\pi} {A \cpar Q}}
\using
\mbox{\textsc{oPar-l}}
\end{prooftree}
\qquad
\begin{prooftree}
h, \mathcal{D}  \colon \theta \cpar P \lts{\pi} A
\justifies
h, \mathcal{D}  \colon \theta \cpar \bang P \lts{\pi} A \cpar \bang P
\using
\mbox{\textsc{oRep-act}}
\end{prooftree}
%\qquad
%\begin{prooftree}
%h, \mathcal{D}\sigma  \colon {P\sigma \lts{\pi\sigma} A\sigma}
%\quad
% \bn{\pi} \cap \fv{\sigma} = \emptyset
%\justifies
%h, \mathcal{D}  \colon {{\sigma \cpar P} \lts{\pi} {\sigma \cpar A}}
%\using
%\mbox{\textsc{oAlias}}
%\end{prooftree}
%\qquad
\\[24pt]
\begin{prooftree}
h, \mathcal{D}  \colon
\theta \cpar P \lts{\co{M}(x)} \theta \cpar \nu \mathopen{\vec{z}.}\left(\sub{x}{N} \cpar P'\right)
\quad
h, \mathcal{D}  \colon
\theta \cpar Q \lts{M(x)} \theta \cpar Q'
\quad
\begin{array}[b]{l}
 \mbox{$x$ is fresh for $h$, $\mathcal{D}$, $\vec{z}$}
\\
 \mbox{$\vec{z}$ are fresh for $Q$}
\end{array}
\justifies
h, \mathcal{D}  \colon
 {\theta \cpar P \cpar Q}\lts{\tau}{\theta \cpar \nu \vec{z}.\left(P' \cpar Q'\sub{x}{N}\right)} % chktex 1
\using
\mbox{\textsc{oClose-l}}
\end{prooftree}
\\[24pt]
\begin{prooftree}
h, \mathcal{D}  \colon
\theta \cpar  P \lts{\co{M}(x)} \mathopen{\nu \vec{z}.}\left(\sub{x}{N} \cpar Q\right)
\qquad
h, \mathcal{D}  \colon
\theta \cpar  P \lts{M(x)} R
\qquad
% x \not\in \fv{h} \cup \fv{\mathcal{D}} \cup \vec{z}
%\quad
 \vec{z} \cap \fv{P} = \emptyset
\justifies
h, \mathcal{D}  \colon
\theta \cpar  \bang P \lts{\tau} \mathopen{\nu \vec{z}.}\left( Q \cpar R\sub{x}{N}  \cpar \bang P \right)
\using
\mbox{\textsc{oRep-close}}
\end{prooftree}
%\\[5pt]
%\clet{\pair{x}{y}}{\pair{M}{N}} P \reduce P\sub{x,y}{M,N}
%\qquad\mathopen{[ \checksign{\sign{M}{K}}{M, \pk{K}} ]}P \reduce P
\end{array}
\end{gathered}
\]
\caption{
An open late labelled transition system, plus symmetric rules for parallel composition.
%Note, for the fragment without mismatch, transition rules do not carry an environment.
%The equational theory over message terms can be applied at any point in a derivation.
}\label{fig:late}
\end{figure*}

%In our companion paper~\cite{Ahn2018} we show, in the setting of the $\pi$-calculus, how to define open bisimilarity such that privacy properties can be verified.
%This section lifts this results to the setting of the applied $\pi$-calculus, justifies the definition with respect to the literature, and

\paragraph{Open v.s.\ closed.} The keyword \textit{open} in the term \textit{open late labelled transition system} refers to the fact that we allow free variables to appear. Due to the presence of free variables, we must keep track of a constraint system that determines what messages are allowed to be substituted for each free variable.
We succinctly represent these constraints by keeping track of a \textit{history} which records the order in which inputs and outputs occurred,  which allows us to determine which messages had already been output before each input occurs and hence were available to use when performing an input.
 This avoids the possibility of a variable representing an input making use of knowledge from the future. In our representation of constraints, we also employ a set of inequalities between messages $\mathcal{D} = \left\{ M_1 \not= N_1, M_2 \not= N_2, \hdots \right\}$, called a \textit{distinction}. Distinctions are used to symbolically handle inequality constraints that typically arise due to the presence of \texttt{else} branches.

Histories are defined by grammar $h \Coloneqq \epsilon \mid h \cdot x^o \mid h \cdot M^i$,
representing the order in which messages are sent and received. An annotated variable $x^o$ means some output occurred which we refer to indirectly using an alias $x$ where $x$ appears in the domain of some active substitution $\theta$ which is associated with some extended process of the form $\mathopen{\nu \vec{y}.}\left( {\theta} \cpar P\right)$; thus $x\theta$ is the message term that is output, which possibly contains private names, i.e., variables $\vec{y}$ bound by the $\nu$ binder.
The annotated variable $M^i$ represents a larger message that has been input, which, initially is a variable, but may later be a message when the input variable is lazily instantiated.
%\SM{The grammar is not sufficient to understand this notion of ``history''. An example with short explanation would help.}
Notice, in Fig.~\ref{fig:late}, each rule carries a history and distinction that may be used to resolve the \textsc{oElse} rule, by providing sufficient evidence that two messages are not equal (i.e., negation is treated intuitionistically).
Another key differences compared to Fig.~\ref{figure:active} are the updating of the history in rule \textsc{oRes}, which has the effect of further constraining free variables such that none of them may directly refer to any private name. That is, when instantiating inputs, we may not use directly the variables $\vec{x}$ in an extended process of the form $\mathopen{\nu \vec{x}.}\left( {\theta} \cpar P\right)$; we may only refer to messages containing those variables indirectly via the variables in $\dom{\theta}$ that are used as aliases for outputs.

%\SM{We need to give some intuition on what it means for a substitution to respect history. I find the definition hard to follow. Especially since we claim that this is an original contribution of the paper, we should make everything fully clear at both intuitive and technical level.}

\paragraph{The definitions.}
The effect of histories on restricting the substitutions that may be applied, as described above, is captured formally in the following definition.
Substitutions respecting histories, are key to the lazy approach of open bisimilarity.
\begin{defi}[respects]\label{def:open-respects}
Substitution $\sigma$ respects $h$, where $h$ is a history,
whenever for all $h'$ and $h''$
such that $h = h' \cdot x^o \cdot h''$,
we have $x\sigma = x$,
and $y \in \fv{h'}$ implies $x \not\in y\sigma$ (i.e., $x$ is fresh for $h'\sigma$).
%and furthermore for all $M \not= N \in \mathcal{D}$ we have $M\sigma \not=_E N\sigma$.
In the above, $\fv{h'}$ refers to the set of all variables appearing in any term in $h'$.
\end{defi}
For an example, consider the following substitutions and history.
\[
 \sigma = \sub{x,\ y,\ z}{u_1,\ u_2,\ u_3}
\qquad
 \sigma' = \sub{x,\ y,\ z}{u_3,\ u_2,\ u_1}
\qquad
 u_1^o \cdot x^i \cdot u_2^o \cdot y_i \cdot u_3^o \cdot z^i
\]
Observe that,
$\sigma$ respects $h$.
In contrast, $\sigma'$ does not respect $h$, since $x\sigma' = u_3$, which is forbidden since $u_3$, represents an output, which, according to the history, did not occur until after the input $x$.

When applying a respectful substitution $\sigma$
to an extended processes in normal form, with active substitution $\theta$, we must iteratively apply the two substitutions together in order to recover an idempotent substitution, which is a requirement for normal forms. For example, consider $\sigma$ defined above and $\theta$ defined as $\sub{u_1,\ u_2,\ u_3}{n,\ \enc{x}{a},\ \enc{y}{b}}$.
Notice $u_2$ and $u_3$ in the domain of active substitution $\theta$ represent aliases for messages that have been output, $\enc{x}{a}$ and $\enc{y}{b}$ respectively, where each of these messages contain variables, $x$ and $y$ respectively, representing inputs. Thus to find the value of $z$ we must apply $\sigma$ and $\theta$ thrice, that is $z\sigma\theta\sigma\theta\sigma\theta = \enc{\enc{n}{a}}{c}$. When $\sigma$ and $\theta$ are such a respectful-active substitution pair, we always obtain an idempotent substitution after applying at most as many iterations as there are inputs in the history.
The above observations explain why we require the following standard machinery for defining substitutions.
\begin{defi}[substitutions]
Given substitutions $\sigma$ and $\theta$ define $\sigma \circ \theta$ to be the standard  composition of substitutions (i.e., composition of functions).
Acyclic substitutions $\sigma$ are those for which there exists a strict partial order $\sqsubset_\sigma$ over variables such that if $y \in \fv{x\sigma}$ then $x \sqsubset_\sigma y$.
For acyclic substitutions $\sigma$, define  $\sigma^*$ to be the substitution obtained by iteratively composing $\sigma$ with itself until it stabilises, i.e., if $\sigma^0 = id$ (the identity substitution) and $\sigma^{n+1} = \sigma^n \circ \sigma$, then $\sigma^*$ is $\sigma^m$ for some $m$ such that $\sigma^m \circ \sigma = \sigma^m$.
\end{defi}

%\begin{array}
%The application of a substitution $\sigma$ to an extended process $\mathopen{\nu{\vec{x}}.}\left( \theta \cpar P \right)$ is defined as follows, whenever $\vec{x}$ are fresh for $\sigma$
%\[
%\left(\mathopen{\nu{\vec{x}}.}\left( \theta \cpar P \right)\right)\mathclose{\sigma} = \mathopen{\nu{\vec{x}}.}\left( \sstar{\sigma}{\theta}\mathclose{\restriction_{\dom{\theta}}} \cpar P\sstar{\sigma}{\theta} \right)
%\]
%\end{defi}

Given an active substitution $\theta$ and respectful substitution $\sigma$, we can use $\sstar{\sigma}{\theta}$ to obtain a new active substitution.
This trick is used in the following definition of satisfaction, which is used to resolve inequalities in the labelled transition system in Fig.~\ref{fig:late}.
Defining satisfaction is the reason for carrying around constraints, consisting of a history and distinction, at every step in the labelled transition system, since, for some pairs of messages, we can only determine whether they are not equal by observing that the constraints on their variables forbid them from being made equal.
\begin{defi}[satisfaction]
%Satisfaction $h, \mathcal{D}, \theta \vDash M \not= N$ holds whenever there does not exists substitution $\sigma$ respecting $h$ such that $\mathcal{D}\left(\sigma \circ \theta\right)^*$ such that $M\sstar{\sigma}{\theta} \mathrel{=_E} N\sstar{\sigma}{\theta}$.
Satisfaction $h, \mathcal{D}, \theta \vDash M \not= N$ holds whenever there does \textbf{not} exist substitution $\sigma$ respecting $h$ such that:
 \begin{itemize}
\item for all $K \neq L \in \mathcal{D}$, $K\sstar{\sigma}{\theta} \mathrel{\neq_{E}} L\sstar{\sigma}{\theta}$
\item
and $M\sstar{\sigma}{\theta} \mathrel{=_E} N\sstar{\sigma}{\theta}$.
\end{itemize}
\end{defi}

\noindent
Entailment defines a notion of intuitionistic negation, which could be extracted from a Kripke semantics~\cite{Kripke1965}, where the ``reachable worlds'' are those which can be reached by applying substitutions satisfying our constraints (or, equivalently, adding equalities).
What is happening is that, since variables subject to constraints may occur in messages compared using equality or inequality, it is possible that we don't yet have enough information to determine whether or not two messages are equal. In general, two messages may be equal under one substitution of variables but not equal under another substitution. Hence it is useful, in this setting, to say that neither holds yet until we have more information, i.e., we do not assume the law of excluded middle.

For an example of a scenario where the law of excluded middle is violated consider, entailment $u^o\cdot y^i \cdot x_o, {\sub{u}{x}} \vDash x \not= y$. This entailment does \textbf{not} hold yet, since $\sub{y}{u}$ respects history $u^o\cdot y^i \cdot x^o$, and $y\sub{y}{u}\sub{u}{x} = x$, thus there exists a respectful substitution under which these messages are equal, and other substitutions that distinguish them.
Observe also $x \mathrel{=_{E}}y$ also does \textbf{not} hold yet.
Thus, clearly, the law of excluded middle is violated.

In contrast to the above example, consider $y^i \cdot u^o \cdot x^o, {\sub{u}{x}} \vDash x \not= y$. This entailment holds, since there is no substitution $\sigma$ respecting $y^i \cdot u^o \cdot x^o$ such that $x\sigma\theta = y\sigma\theta$, i.e., it is impossible for $x$ and $y$ to be made equal under any permitted substitution.
In other words, it is impossible for an attacker who manufactures input $y$, using their knowledge at the time when $y$ was input, to set $y$ to be equal to private name $x$.

Now consider the following entailments, which make use of distinctions.
\[
\begin{array}{l}
u^o \cdot v^o \cdot x^i \cdot y^o \cdot z^o, x \neq u, {\sub{u,v}{y,z}} \vDash x \neq y
\\
u^o \cdot v^o \cdot x^i \cdot y^o \cdot z^o, x \neq u, {\sub{u,v}{y,z}} \not\vDash x \neq z
\end{array}
\]
The former entailment above holds since the most general substitution $\sigma$ respecting history $u^o \cdot v^o \cdot x^i \cdot y^o \cdot z^o$ such that $x\sigma\sub{u,v}{y,z} \mathrel{=_{E}} y\sigma\sub{u,v}{y,z}$ is $\sigma = \sub{x}{u}$, but that substitution violates the inequality $x \neq u$.
The latter entailment above does not hold since there exists substitution $\sub{x}{v}$ respecting both the history and the distinction such that $x\sub{x}{v}\sub{u,v}{y,z} = z$. Thus under the given history and distinction neither $x \neq z$ nor $x = z$ hold, i.e., the law of excluded middle is violated.

We find it insightful to present an explicit definition of reachability with respect to some substitution. This gives all the extended processes that are reachable from some extended process by applying some substitution, subject to constraints given by histories and distinctions.
%The use of substitution restrictions is used to ensure that the domain of an active substitution is not enlarged by combining it with a respectful substitution.
\begin{defi}[reachability]\label{def:reach}
For a set of variables $V$, let $\sigma \mathclose{\restriction_V}$ be the substitution restricted to the variables in $V$, i.e., if $x \in V$, $x \sigma \mathclose{\restriction_V} = x\sigma$, otherwise $x \sigma \mathclose{\restriction_V} = x$.

Reachability $\leq$
is such that, for history $h$ and $h'$, distinction $\mathcal{D}$ and $\mathcal{D'}$
and extended processes in normal form $A$ and $B$,
we have $h, \mathcal{D}, A
\leq_{\sigma}
h', \mathcal{D}', B$
whenever the following hold:
\begin{itemize}
\item $A = \mathopen{\nu \vec{y}.}\left( P \cpar \theta\right)$;

\item $\sigma$ respects $h$ and $h' = h\sigma$;

\item
For some distinction $\mathcal{E}$,
we have $\mathcal{D}' = \mathcal{D}\sigma  \cup \mathcal{E}\sigma$;

\item
for all $K \neq L \in \mathcal{D} \cup \mathcal{E}$,
we have $K\sstar{\sigma}{\theta} \mathrel{\neq_{E}} L\sstar{\sigma}{\theta}$;

\item $\vec{y}$ are fresh for $\sigma$, $h$, $\mathcal{D}$ and $\mathcal{E}$;

\item
and
$B = \mathopen{\nu{\vec{y}}.}\left( \sstar{\sigma}{\theta}\mathclose{\restriction_{\dom{\theta}}} \cpar P\sstar{\sigma}{\theta} \right)$.
\end{itemize}
\end{defi}

\noindent
Of course, the above is only well defined if $\sigma \circ \theta$ is acyclic.
However, acyclicity of $\sigma \circ \theta$ is an invariant property of reachability, assuming that we start $\theta$ being $id$ and, then generate $\theta$ and $h$ from transitions of our labelled transitions system by recording inputs and outputs as they occur (to be made formal in the definition of open bisimilarity below).
%, hence idempotent such that $\dom{\theta}$ coincides with the outputs in $h$.

Open bisimilarity $\sim_o$ can now be defined as follows, as the largest relation between processes such that there exists an open bisimulation containing the two processes, where all the free variables are treated as initial inputs. Notice this is the strong formulation of open bisimilarity.

%\begin{defi}
%A history $h'$ $\mathcal{D}\cup \mathcal{D}'$ and extended processes $B$ is reachable from $h, \mathcal{D}$ and $A$ by substitution $\sigma$ and set of inequalities $\mathcal{D}'$, written $\reachable{h, \mathcal{D}, A}{\sigma, \mathcal{D}'}{h', \mathcal{D}\cup\mathcal{D}', B}$,
%whenever:
%$\sigma$ respects $h, \mathcal{D}$
% variables $\vec{x}$ are fresh, for some $h''$, $h' = h''\cdot h\sigma$ and
% $B = \mathopen{\nu \vec{z}.}\left(\sub{\vec{x}}{\vec{z}} \cpar A\sigma\right)$
%(notice there is no constraint on the inequalities introduced).
%\end{defi}
%For clarity, we restrict to the fragment without mismatch.
%Hence the late labelled transitions in Fig.~\ref{fig:late} do not need to carry around environment information to resolve mismatches.
%This fragment is adequate for this discussion on related work, since open bisimilarity for the spi-calculus as only previously defined without mismatch.
%Note, to extend with mismatch we must  assumed at a certain point during execution.

\begin{defi}[open bisimilarity]\label{def:open}
A symmetric relation  $\mathrel{\mathcal{R}}$ indexed by a history and distinction is an open bisimulation whenever:
if $A \mathrel{\mathcal{R}}^{h, \mathcal{D}} B$ the following hold, for $x$ fresh for $A$, $B$, $h$, $\mathcal{D}$:
\begin{itemize}
\item
whenever
$h, \mathcal{D}, A \leq_{\sigma} h', \mathcal{D'}, A'$
and
$h, \mathcal{D}, B \leq_{\sigma} h', \mathcal{D'}, B'$,
we have
$A' \mathrel{\mathcal{R}}^{h', \mathcal{D'}} B'$.

\item $A$ and $B$ are statically equivalent.
\item If $h, \mathcal{D} \colon A \lts{\tau} A'$ there exists $B'$ such that $h, \mathcal{D} \colon  B \lts{\tau} B'$ and $A' \mathrel{\mathcal{R}}^{h, \mathcal{D}} B'$.
\item If $h, \mathcal{D} \colon A \lts{\co{M}(x)} A'$, for some $B'$, we have $h, \mathcal{D} \colon  B \lts{\co{M}(x)} B'$
and $A' \mathrel{\mathcal{R}}^{h \cdot x^o, \mathcal{D}} B'$.
\item If $h, \mathcal{D} \colon A \lts{{M}(x)} A'$, for some $B'$, we have $h, \mathcal{D} \colon B \lts{{M}(x)} B'$
and $A' \mathrel{\mathcal{R}}^{h \cdot x^i, \mathcal{D}} B'$.
\end{itemize}
Open bisimilarity $\lsim_o$ is a binary relation over processes defined such that $P \mathrel{\lsim_o} Q$ holds whenever there exists open bisimulation $\mathcal{R}$
such that $P \mathrel\mathcal{R}^{x_1^i\cdot \hdots x_n^i} Q$ holds, where $\fv{P}\cup\fv{Q} \subseteq \left\{x_1, \hdots x_n\right\}$.
\end{defi}
%\SM{I don't think this is sufficient for me to understand open bisimulation.}

The second clause checks static equivalence, as in Def.~\ref{def:static}; but, in contrast to early bisimilarity, due to the first clause we must check static equivalence holds under all substitutions respecting the current history and distinctions, as defined by reachability.
Similarly, the clauses for transitions must be checked under all substitutions permitted by reachability.
The input and output transitions update the history in order to remember which outputs were available at each moment when an input occurs, thereby constraining the permitted substitutions.

\begin{rem}[practical benefits] At first sight, it may appear that closing under all substitutions makes open bisimilarity more difficult to check than early bisimilarity; however, the opposite is true. For many useful equational theories, such as the one featuring basic symmetric encryption used in our model of the BAC protocol, we can calculate a finite set of most general substitutions (and inequalities) that are sufficient to check in order to cover all solutions. This complexity is hidden in the definition of early bisimilarity in the use of early input transitions, where early inputs implicitly ask for all such substitutions and induced inequalities to be checked up front\ldots but we rarely know which to check at the point such inputs occur; hence when checking early bisimilarity we require backtracking that is avoided entirely for open bisimilarity.
The feature of intuitionistic logic that is being exploited here is the fact that intuitionistic constraint systems are monotonic, allowing us to progressively close down the set of constraints without missing anything, whereas classical negation violates this monotonicity property.
\end{rem}

\subsection{Discovering unlinkability attacks by calculation}

We demonstrate our methodology, by showing how attacks on unlinkability can be discovered with minimal heuristics simply as a calculation using open bisimilarity.

The steps illustrated in the following subsections are:
%\SM{This listing helps to understand the structure of what follows. But more information is required to understand the texts explaining these four items and to understand why executing these four solves our problem.}
\begin{itemize}[left=1cm]
\item[3.2.1.] The initialisation of two readers and an ePassport, all with the same keys, w.r.t.\ the system.
\item[3.2.2.] The use of respectful substitutions to refine an input to pass a simple guard, ignoring infinitely many other inputs.
\item[3.2.3.] Exploiting the game behind this bisimilarity problem, to expose a distinguishing strategy.
\item[3.2.4.] Symbolically reasoning about larger messages using the sequent calculus.
%\item[4.2.5.] Symbolically checking static equivalence under respectful substitutions.
\end{itemize}

\noindent
Heuristics are required only for selecting which actions to perform (points 3.2.1.\ and 3.2.3.\ above). The other steps above are calculations that could be formulated as a decision procedure, building on decision procedures for the spi-calculus~\cite{Tiu2010CSF}.
Here we begin by starting up two readers, although a more general heuristic searching for a proof would probably start by starting up $n$ readers in order to eventually construct an inductive definition of an open bisimulation covering the whole state space. We provide two reader sessions, since two suffice for the discovery of the particular attack highlighted.

\subsubsection{Initiate two readers with the same ePassport.}

\newcommand{\varuone}{u_1}
\newcommand{\varutwo}{u_2}

%\SM{It's probably part of our heuristic to choose this scenario. But why do we choose this scenario? How do we know that two readers and one ePassport would give us an attack? What if it doesn't give us an attack?}
Our system, ${\AltSys}$, makes the first moves by starting two reader sessions,
both of which are loaded with the key information of the same ePassport.
This can be achieved by triggering two outputs, which must be $\getchallenge$ messages from readers,
and then sending an input to an ePassport, as performed by the following three transitions.
\[
h_0 \colon {\AltSys} \lts{\co{c}({\color{blue}\varuone})}\lts{\co{c}({\color{blue}\varutwo})}\lts{d({\color{red} x})} \SystemUKone
\]
In the above, $h_0 \triangleq \error^i \cdot \getchallenge^i \cdot c^i\cdot d^i$ is the initial history, which constrains the initial free variables so that they may not be instantiated with private messages that are output later during execution.
We also have $\SystemUKone$ defined as follows (employing abbreviations in Fig.~\ref{fig:abb}), where $\theta_1 = {\sub{{\color{blue} \varuone,\,\varutwo}}{\getchallenge,\,\getchallenge}}$:
\[
      \nu ke_1, km_1. \Big(\theta_1 \cpar \begin{array}[t]{l}
          \Rone(ke_1,km_1) \cpar \UKone(ke_1,km_1,{\color{red} x}) \cpar
          \Rone(ke_1,km_1) \cpar \Pass(ke_1,km_1) \cpar \\
          {\bang\left(\R(ke_1, km_1) \cpar \Pass(ke_1, km_1)\right)} \cpar
          {\bang\nu ke.\nu km.\bang\left(\R(ke,km) \cpar \Pass(ke,km)\right)} \ \Big)
          \end{array}
\]
$\AltSpec$ can only follow these actions, by starting two reader sessions with different ePassports.
\[
h_0 \colon {\AltSpec} \lts{\co{c}({\color{blue}\varuone})}\lts{\co{c}({\color{blue}\varutwo})}\lts{d({\color{red} x})}
\SpecUKone
\]
where $\SpecUKone$ is defined as follows:
\[
      \nu ke_1, km_1, ke_2, km_2. \Big( \theta_1 \cpar\hspace{-1mm} \begin{array}[t]{l}
          \Rone(ke_1,km_1) \cpar \UKone(ke_1,km_1,{\color{red} x}) \cpar \\
          \Rone(ke_2,km_2) \cpar \Pass(ke_2,km_2) \cpar
          {\bang\nu ke.\nu km.\left(\R(ke,km) \cpar \Pass(ke,km)\right)} \ \Big)
	  \end{array}
\]
Note, since open bisimilarity is preserved by associativity and commutativity of parallel composition and equivariance,
we have already also covered the case where, in the specification, the input is received by the ePassport with keys $ke_2$ and $km_2$.
Note there is a third possible response by the specification, where a third session with yet another set of keys is started, but the distinguishing strategy in that branch is subsumed by the distinguishing strategy in the cases we explain here.

The updated history, tracking constraints on variables after these initial three transitions, is:
\[
h_1 = h_0 \cdot {\color{blue} \varuone}^o \cdot {\color{blue} \varutwo}^o \cdot {\color{red} x}^i
\]

\begin{figure}
\begin{gather*}
\mbox{The ePassport (or prover):}
\qquad
\Pass(ke,km)
\triangleq
d(x).\UKone(ke,km,x)
\\[10pt]
\UKone(ke,km,x)
\triangleq
 \match{ x = \getchallenge }\nu nt.\cout{c}{nt}.\UKtwo(ke,km,nt)
\\[10pt]
\UKtwo(ke,km,nt)
\triangleq
d(y).\UKthree(ke,km,nt,y)
\\[10pt]
\UKthree(ke,km,nt,y)
\triangleq
\begin{array}[t]{l}
     %& \clet{m_e,m_m}{\fst{y},\snd{y}} \\
 \texttt{if}\,\snd{y} = \mac{\fst{y}, km}\,\texttt{then} \\
\qquad
 \begin{array}[t]{l}
 \texttt{if}\,nt = \fst{\snd{\dec{\fst{y}}{ke}}}\,\texttt{then} \\
%      & \begin{array}[t]{ll}
%        \texttt{then}
\qquad
 \begin{array}[t]{l}
                         \nu kt.\clet{m}{\enc{\pair{nt}{\pair{\fst{\dec{\fst{y}}{ke}}}{kt}}}{ke}} \\
                         \cout{c}{m, \mac{m, km}}
                         \end{array}  \\
 \texttt{else}\,\cout{c}{error}
\end{array}
\\
 \texttt{else}\,\cout{c}{error}
        \end{array}
\end{gather*}
\begin{gather*}
\mbox{The Reader (or verifier):}
\qquad
\R(ke,km) \triangleq \cout{c}{\getchallenge}.\Rone(ke,km)
\\[10pt]
\Rone(ke,km) \triangleq d(nt).\Rtwo(ke,km,nt)
\\[10pt]
\Rtwo(ke,km,nt)
\triangleq
\begin{array}[t]{l}
 \nu nr.\nu kr.
 \clet{m}{\enc{\pair{nr}{\pair{nt}{kr}}}{ke}}
\\
 \cout{c}{m, \mac{\pair{m}{km}}}
\end{array}
\end{gather*}

\caption{Abbreviations for process used throughout this symbolic analysis.}\label{fig:abb}
\end{figure}

\subsubsection{Applying respectful substitutions.}

Since we are reasoning symbolically, the first input, performed above, is initially a variable ${\color{red} x}$.
When unfolding the rules of the labelled transition in Fig.~\ref{fig:late}, we find that the following transition
is enabled for substitutions $\sigma$ respecting history $h_1$
equating the messages in the guard ${\color{red} x} = \getchallenge$, where a most general unifier is clearly $\sub{{\color{red} x}}{\getchallenge}$.
{\small
\[
\begin{array}{l}
h_1\sub{{\color{red} x}}{\getchallenge} \cdot ke_1^o \cdot km^o_1
\colon
\theta_1 \cpar
 \match{ {\color{red} x}\sub{{\color{red} x}}{\getchallenge} = \getchallenge }\nu nt.\cout{c}{nt}.\UKtwo(ke,km,nt)
\\
\qquad\qquad\qquad\qquad\qquad\qquad\qquad\qquad\qquad\qquad\qquad
\lts{\co{c}({\color{blue} \varv})}
 \nu {\color{blue}\varntone}.\left( \theta_1 \cpar {\sub{{\color{blue}\varv}}{{\color{blue}\varntone}}} \cpar \UKtwo(ke,km,{\color{blue}\varntone}) \right)
\end{array}
\]}
The above transition is valid since, the unifier $\sub{{\color{red} x}}{\getchallenge}$ respects history $h_1 \cdot ke_1^o \cdot km^o_1$,
% and $ke_1, km_1$ are fresh for $h_1\sub{{\color{red} x}}{\getchallenge}$,
which is trivially the case since there are no constraints on unifying variables such as
$\getchallenge$ and $x$.

Using the above transition we induce the following transitions for the system and specification.
\[
h_1 \colon
\SystemUKone\sub{{\color{red} x}}{\getchallenge} \lts{\co{c}({\color{blue}\varv})}
\SystemUKtwo
\qquad\qquad
h_1 \colon
\SpecUKone\sub{{\color{red} x}}{\getchallenge} \lts{\co{c}({\color{blue}\varv})}
\SpecUKtwo
\]
where $\SystemUKtwo$ is defined as follows (employing abbreviations in Fig.~\ref{fig:abb}),
\[
      \nu ke_1, km_1, {\color{blue} \varntone}. \Big(
%          {\sub{{\varuone,\,\varutwo}}{\getchallenge,\,\getchallenge}} \cpar
\theta_1 \cpar
	  {\sub{{\color{blue}\varv}}{{\color{blue} \varntone}}} \cpar
\begin{array}[t]{l}
          \Rone(ke_1,km_1) \cpar \UKtwo(ke_1,km_1,{\color{blue} \varntone}) \cpar \\
          \Rone(ke_1,km_1) \cpar \Pass(ke_1,km_1) \cpar  \\
          {\bang\left(\R(ke_1, km_1) \cpar \Pass(ke_1, km_1)\right)} \cpar \\
          {\bang\nu ke.\nu km.\bang\left(\R(ke,km) \cpar \Pass(ke,km)\right)} \ \Big)
          \end{array}
\]
and $\SpecUKtwo$ is defined as follows.
\[
      \nu ke_1, km_1, ke_2, km_2, {\color{blue} \varntone}. \Big(
	  \theta_1 \cpar
{\sub{\varv}{{\color{blue} \varntone}}} \cpar
\begin{array}[t]{l}
          %{\sub{{\varuone,\,\varutwo}}{\getchallenge,\,\getchallenge}} \cpar
          \Rone(ke_1, km_1) \cpar \UKone(ke_1, km_1, {\color{blue} \varntone}) \cpar \\
          \Rone(ke_2, km_2) \cpar \Pass(ke_2, km_2) \cpar \\
          {\bang\nu ke.\nu km.\left(\R(ke, km) \cpar \Pass(ke, km)\right)} \ \Big)
	  \end{array}
\]
The updated history at this point is $h_2 \triangleq h_1\sub{{\color{red} x}}{\getchallenge} \cdot {\color{blue}\varv}^o$, where the substitution records that the most recent input $x$ was a $\getchallenge$ message. In full, we have at this point:
\[
h_2 \triangleq \error^i \cdot \getchallenge^i \cdot c^i\cdot d^i \cdot \varuone^o  \cdot \varutwo^o \cdot  {\getchallenge}^i \cdot {\color{blue}\varv}^o
\]

\newcommand{\varz}{z}
\newcommand{\nrtwo}{nr_2}
\newcommand{\krtwo}{kr_2}

\subsubsection{Alternating play in the distinguishing game.}
We now appeal to the symmetry of bisimilarity,
allowing the specification $\SystemUKtwo$ to lead with one input.
That strategy allows us to trigger the reader which does not have the same keys as the ePassport that outputs a nonce in the previous step.
That approach leads to a distinguishing game that quite accurately describes a practical strategy, which can be implemented using NFC enabled phones running a modified ePassport reader app, as first reported in the conference version of this paper~\cite{ESORICS}. In this strategy, the attacker deliberately selects the reader that should fail to authenticate an ePassport if unlinkability really holds as modelled by the idealised specification.

The flow is as follows, where annotation $(\dagger)$ indicates the player (the system or specification) that leads at each point in the game. Note the system has always been leading up to now, in order to trigger the scenario where two sessions with the same ePassport really started.
\[
h_2 \colon \SystemUKtwo \lts{d({\color{red} \varnt})}  (\dagger) \SystemUKthree \lts{\co{c}({\color{blue}\varw})}\lts{d({\color{red}\vary})} \SystemUKfour
\]
\[
h_2 \colon
(\dagger) \SpecUKtwo \lts{d({\color{red} \varnt})} \SpecUKthree \lts{\co{c}({\color{blue}\varw})}\lts{d({\color{red}\vary})}  \SpecUKfour
\]
where $\SystemUKfour$ is defined as follows (employing abbreviations in Fig.~\ref{fig:abb}):
\[
\begin{array}{l}
      \nu ke_1, km_1, \varntone, \nrtwo, \krtwo. \Big(
\\
\qquad\begin{array}[t]{l}
          \theta_1 \cpar
	  {\sub{{\varv}}{{\varntone}}} \cpar
 {\sub{{\color{blue} \varw}}{{ \pair{\enc{\pair{\nrtwo}{\pair{{\color{red} \varnt}}{\krtwo}}}{ke_1}}{\mac{   \enc{\pair{\nrtwo}{\pair{{\color{red} \varnt}}{\krtwo}}}{ke_1} , km_1}}}
 }}
\cpar \\
          \Rone(ke_1, km_1) \cpar \UKthree(ke_1, km_1,{\color{red}\varnt},{\color{red}\vary}) \cpar
          0 \cpar \Pass(ke_1, km_1) \cpar \\
          {\bang\left(\R(ke_1, km_1) \cpar \Pass(ke_1, km_1)\right)} \cpar
          {\bang\nu ke.\nu km.\bang\left(\R(ke, km) \cpar \Pass(ke, km)\right)} \ \Big)
          \end{array}
          \end{array}
\]
and $\SpecUKfour$ is defined as follows
\[
\begin{array}{l}
      \nu ke_1, km_1, ke_2, km_2, \varntone, \nrtwo, \krtwo. \Big( \\
\qquad\begin{array}[t]{l}
          \theta_1 \cpar
	  {\sub{\varv}{{\varntone}}} \cpar
 {\sub{{\color{blue} \varw}}{{ \pair{\enc{\pair{\nrtwo}{\pair{{\color{red} \varnt}}{\krtwo}}}{ke_2}}{\mac{   \enc{\pair{\nrtwo}{\pair{{\color{red} \varnt}}{\krtwo}}}{ke_2} , km_2}}}
 }}
\cpar
\\
          \Rone(ke_1, km_1) \cpar \UKthree(ke_1, km_1,{\color{red}\varnt},{\color{red}\vary}) \cpar
          0 \cpar \Pass(ke_2, km_2) \cpar \\
          {\bang\nu ke.\nu km.\left(\R(ke, km) \cpar \Pass(ke, km)\right)} \ \Big)
	  \end{array}	  \end{array}
\]
Observe that all the above transitions proceed lazily without instantiating the input variable ${\color{red} \varnt}$.
In the specification, the reader with keys $ke_2, km_2$ is used up entirely, without determining yet what challenge was received.
Observe also that $\SystemUKtwo$ has only one option, up to structural rules such as commutativity of parallel composition, for following the specification (without being immediately distinguishable), which is to continue a session with keys $ke_1, km_1$.

The updated history at this point records the two inputs and the output in the order they occurred in the above transitions, as follows.
\[
h_4 \triangleq h_2  \cdot {\color{red} \varnt}^i
\cdot {\varw}^o \cdot {\color{red}\vary}^i
\]

\begin{rem}[playing this strategy]
A question arising at this point is whether the change of player at this point is meaningful in terms of attacker models.
In general, to answer such a question we require domain specific knowledge.
Observe that the input action, where the specification leads, selects a specific reader which should ideally behave as if it has different keys from the ePassport issuing the challenge nonce.
In reality, the attacker does indeed have the power to choose which reader will receive an input,
and so can indeed choose the reader that, according to the specification of unlinkability, should not
successfully authenticate with the ePassport, i.e., the reader that is not located next to an ePassport that has just engaged in an OCR session with it. Thus the need for a game at this point is partly due to under-specification in the model where there are insufficient observables to determine that the reader is not in proximity to the ePassport issuing a challenge.
Note this is far from being the only distinguishing strategy;
other distinguishing strategies may require a different domain-specific explanation.
%\SM{The previous sentence has some grammatical problems.}
\end{rem}

\subsubsection{Calculating inputs using the sequent calculus.}

Now consider whether
$\UKthree(ke_1,km_1,{\color{red} \varnt},{\color{red} \vary})$, which is a subprocess of $\SystemUKfour$ shown in expanded form below, can make progress.
\[
\begin{array}[t]{l}
     %& \clet{m_e,m_m}{\fst{y},\snd{y}} \\
 \texttt{if}\,\snd{{\color{red} \vary}} = \mac{\fst{{\color{red} \vary}},\,km_1}\,\texttt{then} \\
\qquad
 \begin{array}[t]{l}
 \texttt{if}\,{\color{red} \varnt} = \fst{\snd{\dec{\fst{{\color{red} \vary}}}{ke_1}}}\,\texttt{then} \\
%      & \begin{array}[t]{ll}
%        \texttt{then}
\qquad
 \begin{array}[t]{l}
                         \nu kt.\clet{m}{\enc{\pair{{\color{red} \varnt}}{\pair{\fst{\dec{\fst{{\color{red} \vary}}}{ke_1}}}{kt}}}{ke_1}} \\
                         \cout{c}{m, \mac{m, km_1}}
                         \end{array}  \\
 \texttt{else}\,\cout{c}{error}
\end{array}
\end{array}
\]
In what follows, we must take into account the active substitution of $\SystemUKfour$, which we recall below and denote by $\theta_4$:
\[
\theta_4
\triangleq
{\sub{{\varuone,\ \varutwo,\ \varv,\ \varw}}{\getchallenge,\ \getchallenge,\ \varntone,\ \pair{\enc{\pair{\nrtwo}{\pair{{\color{red} \varnt}}{\krtwo}}}{ke_2}}{\mac{   \enc{\pair{\nrtwo}{\pair{{\color{red} \varnt}}{\krtwo}}}{ke_2} ,\, km_2}}}}
\]

By the rules in Fig.~\ref{fig:late}, the two \texttt{then} branches of the \texttt{if-then-else} statements above, which result in a non-error output, can only be triggered
for particular substitutions $\sigma$. In particular, we are interested in whether there are substitutions $\sigma$ respecting history $h_4$ such that the two equations below hold,
and, furthermore, $\sigma$ is fresh for the bound variables $\varntone, ke_1, km_1, \nrtwo, \krtwo$ (a constraint enforced by the \textsc{oRes} rule).
\[
\begin{array}{rl}
\snd{{\color{red} \vary}}\sstar{\sigma}{\theta_4}
\!\!\!\!
&=_E \mac{\fst{{\color{red} \vary}},\,km_1}\sstar{\sigma}{\theta_4}
\\
\varntone \sstar{\sigma}{\theta_4}
\!\!\!\!
&=_E \fst{\snd{\dec{\fst{{\color{red} \vary}}}{ke_1}}}\sstar{\sigma}{\theta_4}
\end{array}
\]
It is convenient to select fresh variable $\vary'$ to represent the local view of messages that $\vary$ can be mapped to by the relevant substitution $\sstar{\sigma}{\theta_4}$, i.e., for some substitution $\sigma'$, instantiating $\vary'$ we have $\vary\sstar{\sigma}{\theta_4} = \vary'\sigma'$.
This represents the fact that $\vary$ represents the external view of an observer or attacker when they inject inputs, while $\vary'$ exposes more of the internal structure of messages that cannot be observed by an attacker.
While such additional structure may contain more private information than the attacker is immediately aware of (e.g., because the message represent a cyphertext),
 that information may be required, internally by the process, in order to enable guards such as the guard in the above \texttt{if-then-else} statements.
This leads us to the following equations.
%   is being mapped to their  under the such that $\vary\sigma\theta_4 = y'\sigma$ and $\varnt\theta_4 = \varnt'$. Thus we are interested in the following equations.
\[
\snd{{\color{red} \vary'}} =_E \mac{\fst{{\color{red} \vary'}},\, km_1}
\quad
\mbox{and}
\quad
\varntone =_E \fst{\snd{\dec{\fst{{\color{red} \vary'}}}{ke_1}}}
\]

We show how to calculate a most general unifier for the above equations.
Firstly, we remove destructors $\fst{\cdot}$, $\snd{\cdot}$ and $\dec{\cdot}{\cdot}$ by instantiating variables to which they are applied with the most general form of the constructor to which the destructor is applied.
This yields the following substitution, where ${\color{red} y_1}$ and ${\color{red} y_2}$ are fresh variables.
\[
\sub{\color{red} \vary'}{\pair{
      \enc{\pair{{\color{red}y_1}}{\pair{\varntone}{{\color{red} y_2}}}}{ke_1}
    }{
      \mac{\enc{\pair{{\color{red}y_1}}{\pair{\varntone}{{\color{red}y_2}}}}{ke_1},\, km_1}
    }
}
\]
The problem now is to calculate the most general form of ${\color{red} y_1}$ and ${\color{red} y_2}$, refining the above substitution taking into account the history $h_4$, active substitution $\theta_4$ and bound variables $\varntone, ke_1, km_1, \nrtwo, \krtwo$, as described above.
This question can be formulated as the problem of calculating the most general solutions to a system of \textit{deducibility constraints} which are generated from the above mentioned constraints and active substitution.
%  determine which messages can be used to instantiate each input variable.
%The relevant system of deducibility constraints is generated from history $h_4$, while taking into account the above substitution for $\vary'$ and the active substitution of $\SystemUKfour$, i.e., $\theta_4$.

The first step in this calculation is to generate an intermediate constraint system to solve.
The following is an alternative representation of a history, where the names to the left of a turnstile represent the knowledge of the attacker at the moment when the input message to the right of that turnstile is performed.
\begin{gather*}
\vdash \error
\quad\
\vdash \getchallenge
\quad\
\vdash c
\quad\
\vdash d
\\[5pt]
\varuone, \varutwo \vdash \getchallenge
\quad\
\varuone, \varutwo, \varv \vdash {\color{red} \varnt}
\quad\
\varuone, \varutwo, \varv, \varw \vdash {\color{red} \vary}
\end{gather*}
In this case, it is sufficient to focus on the final two intermediate constraints, although, in general, the initial constraints are essential for ensuring no private information from outputs during execution are used to instantiate the initial knowledge. Also, $\varuone$ and $\varutwo$ provide no new information so can be safely removed from the constraints in order to focus on the essential aspects of the problem.
From the intermediate constraints $
\varv \vdash {\color{red} \varnt}$
and $\varv, \varw \vdash {\color{red} \vary}$, annotated with messages generated by applying the active substitution $\theta_4$ to each of the variables on the left of the turnstile. We also apply the substitution generated for ${\color{red} \vary'}$ above, resulting in two deducibility constraints described below.
% by annotation variables $\varv$ and $\varw$ are annotated with their respective messages from the active substitution in $\SystemUKfour$.

The first deducibility constraint generated is as follows,
where ${\color{red} \varnt'}$ is a fresh variable, which is introduced for the same reason as we introduced ${\color{red} \vary'}$, as explained above.
% i.e., we aim for some $\sigma$ respecting $h_4$ and idempotent substitution $\sigma'$ fresh for $\dom{\theta_4}$ such that $\varnt\sstar{\sigma}{\theta_4} = \varnt'\sigma'$.
\begin{equation}\label{ded1}
\varv \colon \varntone \vdash {\color{red} \varnt} \colon {\color{red} \varnt'}
\end{equation}
Thus, ${\color{red} \varnt' }$ represents any message such that for some suitable substitutions $\sigma$ and $\sigma'$ we have $\varnt\sstar{\sigma}{\theta_4} = {\color{red}\varnt'}\sigma'$, where $\dom{\theta_4}$ are fresh for $\sigma'$. Thus, the difference is that ${\color{red}\varnt}\sigma$ may not refer directly the private names representing various keys and nonces, whereas ${\color{red}\varnt'}\sigma'$ can.

Such deducibility constraint of the form $\Gamma \vdash x \colon x'$, where $x, x'$ are variables, are said to be in \textit{solved form}. This means that $x, x'$ can be any messages produced using information in $\Gamma$ plus some fresh variables, and $x'$ is the local view of $x$ taking into account the current active substitution, following the principles used to explain the use of $\vary'$ and $\varnt'$.
%In this worked example, all leaves are axioms so we do not dwell on the notion of a solved form; except that, notice, without the second constraint
Thus the constraint (\ref{ded1}) generated above is already in solved form, hence, by itself, does not require further analysis.

The second deducibility constraint, generated from intermediate constraint $\varv, \varw \vdash {\color{red} \vary}$ by annotating variables with messages given by the active substitution $\theta_4$, is as follows.
\begin{equation}\label{ded2}
\begin{array}{l}
\varv \colon \varntone,\
\varw \colon
\pair{\enc{\pair{\nrtwo}{\pair{{\color{red}\varnt'}}{\krtwo}}}{ke_1}}{\mac{ \enc{\pair{\nrtwo}{\pair{{\color{red} \varnt'}}{\krtwo}}}{ke_1} , km_1}}
\\
\qquad\qquad\qquad\qquad
 \vdash {\color{red} \vary} \colon {
\pair{
      \enc{\pair{\color{red}y_1}{\pair{\varntone}{\color{red}y_2}}}{ke_1}
    }{
      \mac{\enc{\pair{\color{red}y_1}{\pair{\varntone}{\color{red}y_2}}}{ke_1},\, km_1}
    }
}
\end{array}
\end{equation}
%Also, $r$ and $s$ are fresh variables representing the recipe used to produce ${\color{red} nt}$ and ${\color{red} y}$ respectively, where the annotation associated with $s$ is obtained from the substitution for ${\color{red} y}$ generated above.

We find all solutions to the system consisting of the above deducibility constraints (\ref{ded1}) and (\ref{ded2}), by calculating the most general substitutions such that
there is a proof tree using the sequent calculus rules in Fig.~\ref{fig:constraints}, where the leaves of each proof are either \textit{axioms} or are in \textit{solved form}.
Fig.~\ref{fig:constraints} extends an existing sequent calculus presentation of deducibility constraints~\cite{Tiu2010LMCS} with annotations to the left of a colon representing \textit{recipes} for how a message is deduced.
%The idea of using such annotations is due to, thus far, unpublished work of Alwen Tiu.

%Furthermore, we assume that the names representing outputs ($\varv$ and $\varw$) are atomic names, while $\vary$, $\varnt$, $y_1$, $y_2$, $\varnt'$ are open variables.
%while names  $\varntone$, $ke_1$, $km_1$, $\nrtwo$, $\krtwo$ are constrained to be atomic names by the history.
\begin{figure*}
\small
\begin{gather*}
\begin{prooftree}
\justifies
\Gamma, R \colon M \vdash R \colon M
\using
\textit{axiom}
\end{prooftree}
\qquad\qquad\qquad
\begin{prooftree}
\Gamma \vdash R_1 \colon K_1
\qquad\hdots\quad
\Gamma \vdash R_n \colon K_n
\justifies
\Gamma \vdash f(R_1, \hdots R_n) \colon f(K_1, \hdots K_n)
\using
\textit{intro} \end{prooftree}
\\
\qquad\qquad\qquad\qquad\quad\qquad\qquad\qquad\quad\mbox{where $f \in \left\{\pair{ \cdot}{\cdot}, \enc{\cdot}{\cdot}, \mac{\cdot, \cdot}, \dec{\cdot}{\cdot}, \fst{\cdot},\snd{\cdot}, \right\}$}
\\[5pt]
\begin{prooftree}
\Gamma, \fst{R} \colon M, \snd{Rr} \colon N \vdash S \colon K
\justifies
\Gamma, R \colon \pair{M}{N} \vdash S \colon K
\using
\textit{pair-elim}
\end{prooftree}
\quad\quad
\begin{prooftree}
\Gamma \vdash T \colon K
\quad
\Gamma, \dec{R}{T} \colon M \vdash S \colon K
\justifies
\Gamma, R \colon \enc{M}{K} \vdash S \colon N
\using
\textit{enc-elim}
\quad\qquad
\end{prooftree}
\end{gather*}
\caption{Deducibility constraints, in sequent calculus style, annotated with messages representing recipes for producing messages to the left of each colon.
}%
\label{fig:constraints}
\end{figure*}

For this system of constraints,
the only possibility is to apply the axiom in Fig.~\ref{fig:constraints}.
This is achieved by unifying the following messages (recall that $\varntone$, $ke_1$, $km_1$, $\nrtwo$, $\krtwo$ are private names hence cannot be unified with other messages):
\[
\begin{array}{l}
\pair{\enc{\pair{\nrtwo}{\pair{{\color{red}\varnt'}}{\krtwo}}}{ke_1}}{\mac{ \enc{\pair{\nrtwo}{\pair{{\color{red}\varnt'}}{\krtwo}}}{ke_1} , km_1}}
%\\
%\qquad\qquad\qquad\qquad\qquad\qquad\qquad\qquad
=
\mac{\enc{\pair{\color{red}y_1}{\pair{\varntone}{\color{red}y_2}}}{ke_1},\, km_1}
\end{array}
\]
We now use the most general unifier for the above problem,
$\sigma' = \sub{\color{red} \varnt',\,y_1,\,y_2}{\varntone,\,\nrtwo,\,\krtwo}$
thereby allowing the axiom in Fig.~\ref{fig:constraints} to be applied to both deducibility constraints generated above (firstly ignoring the recipes on the left of each colon).
Now, taking into account the recipe on the left of each colon, each of the deducibility constraints is an axiom only if we have $\varv = {\color{red}\varnt}$ and $\varw = {\color{red}\vary}$, which leads us to the substitution $\sigma = \sub{\color{red} \varnt,\,\vary}{\varv,\,\varw}$. Notice the domain of this substitution must be $\left\{ {\color{red}\varnt}, {\color{red}\vary} \right\}$, since $\varv$ and $\varw$ are treated as names rather then free variables (this is enforced by the constraints on output variables in the notion of a respectful substitution).

Thereby, from deducibility constraints (\ref{ded1}) and (\ref{ded2}) where $\sigma$ is applied to the left of each colon and $\sigma'$ is applied to the right of each colon, we obtain the following two proofs. Each proof consists of a single axiom, where a proof is a proof tree where all leaves are axioms (hence the set of premises are empty and hence vacuously in solved form).
\[
\begin{prooftree}
\justifies
\varv \colon \varntone \vdash \varv \colon \varntone
%\using\mbox{axiom}
\end{prooftree}
%\]
%\[\small
\qquad
\begin{prooftree}
\justifies
\begin{array}{l}
\varv \colon \varntone,\
\varw \colon
\pair{\enc{\pair{\nrtwo}{\pair{{\varntone}}{\krtwo}}}{ke_1}}{\mac{ \enc{\pair{\nrtwo}{\pair{{\varntone}}{\krtwo}}}{ke_1} , km_1}}
\\
\qquad
%\qquad\qquad\qquad\qquad\qquad
\qquad
 \vdash \varw \colon {
\pair{\enc{\pair{\nrtwo}{\pair{{\varntone}}{\krtwo}}}{ke_1}}{\mac{ \enc{\pair{\nrtwo}{\pair{{\varntone}}{\krtwo}}}{ke_1} , km_1}}}
\end{array}
%\using\mbox{axiom}
\end{prooftree}
\]
%Since we were forced to set $\varnt = \varv$ and $\vary = \varw$, we obtain our
Thereby we have calculated the most general respectful substitution $\sub{\color{red} \varnt,\,\vary}{\varv,\,\varw}$, enabling the following transition.
\[
h_4\sub{\color{red} \varnt,\,\vary}{\varv,\,\varw} \colon
\SystemUKfour\sub{\color{red} \varnt,\,\vary}{\varv,\,\varw}
\lts{\co{c}({\color{blue}z})} \SystemUKfive
\]
where the frame of $\SystemUKfive$ (ignoring the process) is as follows.
\[
\begin{array}{l}
      \nu ke_1, km_1, \varntone, \nrtwo, \krtwo, kt_1. \Big( \\
      \qquad\begin{array}[t]{l}
      {\sub{{\varuone,\,\varutwo}}{\getchallenge,\,\getchallenge}} \cpar
	  {\sub{{\varv}}{{\varntone}}} \cpar \\
      {\sub{{ \varw}}{{ \pair{\enc{\pair{\nrtwo}{\pair{{\varntone}}{\krtwo}}}{ke_1}}{\mac{   \enc{\pair{\nrtwo}{\pair{{\varntone}}{\krtwo}}}{ke_1} , km_1}}} }} \cpar \\
      {\sub{{\color{blue} \varz}}{{ \pair{\enc{\pair{\varntone}{\pair{{\nrtwo}}{kt_1}}}{ke_1}}{\mac{   \enc{\pair{\varntone}{\pair{{\nrtwo}}{kt_1}}}{ke_1} , km_1}}}}} \cpar \ldots \Big)
      \end{array}
\end{array}
\]

Observe that the specification, $\SpecUKfour\sub{\color{red} \varnt, \vary}{\varv, \varw}$ can also perform an output, either starting a new session,
or triggering an error message.
In either case, we reach a state that is distinguishable by static equivalence witnessed by the test ${\color{blue} \varz} = \error$ or ${\color{blue} \varz} = \getchallenge$ respectively.

Notice that in the specification, the \texttt{else} branch in which an error message is output
is enabled by the \textsc{oElse} rule  in Fig.~\ref{fig:late}.
That rule is enabled only when the following inequality holds,
where $h'_4$ is the current history extended with the private names by using the $\textsc{oRes}$ rule as follows, i.e., $h'_4 = h_4\sub{\color{red} \varnt, \vary}{\varv, \varw} \cdot ke_1^o \cdot km_1^o \cdot ke_2^o \cdot km_2^o \cdot \varntone^o \cdot \nrtwo^o \cdot \krtwo^o\cdot kt_1^o$ and $\rho_4$ is the active substitution of $\SpecUKfour\sub{\color{red} \varnt, \vary}{\varv, \varw}$ at this point.
\[
h'_4,
\rho_4
\vDash
{ \mac{\enc{\pair{\nrtwo}{\pair{\varntone}{\krtwo}}}{ke_2}, {km_2} }
                       \not= \mac{ \enc{\pair{\nrtwo}{\pair{\varntone}{\krtwo}}}{ke_2}, {km_1} } }
\]

%\error^i \cdot \getchallenge^i \cdot c^i\cdot d^i \cdot \varuone^o  \cdot \varutwo^o \cdot {\getchallenge}^i  \cdot {\varv}^o \cdot { \varv}^i
%\]
%\noindent
The above inequality is satisfied, since there is no substitution respecting $h'_4$
equating the two terms in the above inequality, i.e., it holds even under intuitionistic assumptions.
% (under the active substitution).
To see why, observe that any unifier for the above message equates ${km_1}$ and ${km_2}$, which must be kept distinct by any substitution respecting the above history, since both $km_1^o$ and $km_2^o$ appear in the history.

\subsection{Constructing a distinguishing formula from the distinguishing strategy.}\label{sec:dist}~\\
Firstly, we briefly summarise the distinguishing strategy calculated in the previous subsections.
\begin{enumerate}
\item $\AltSys$ leads with transitions labelled $\co{c}(\varuone)$ then $\co{c}(\varutwo)$ and then $d(x)$ (thereby reaching $\SystemUKone$ in which sessions have started with two readers and an ePassport, all using the same keys).
%\item Substitution $\sub{x}{\getchallenge}$ is applied.
\item $\SystemUKone\sub{x}{\getchallenge}$ leads with transition labelled $\co{c}(\varv)$. If $\SpecUKone\sub{x}{\getchallenge}$ follows with $\varv = \getchallenge$, we are done, otherwise continue.
\item $\SpecUKtwo$ leads with transition labelled $\co{d}(\varnt)$ starting up the wrong reader. If $\SystemUKtwo$ follows by inputting the wrong message into a new ePassport session this can be picked up by performing one more action, otherwise continue.
\item $\SystemUKthree$ leads with transitions labelled $\co{c}(\varw)$ and then $d(y)$. If $\SpecUKthree$ follows with $\varw = \getchallenge$ we are done, otherwise continue.
%\item Substitution $\sub{\varnt,\vary}{\varv,\varw}$ is applied.
\item $\SystemUKfour\sub{\varnt,\vary}{\varv,\varw}$ leads with transitions $\co{c}(\varz)$.
This can only be followed by a transition from $\SpecUKfour\sub{\varnt,\vary}{\varv,\varw}$ reaching a state where $\varz = \error$ or $\varz = \getchallenge$.
\end{enumerate}
The problem now is that open bisimilarity (Def.~\ref{def:open}) does not satisfy any notion of completeness,
hence a distinguishing strategy may be a spurious counterexample.
Spurious counterexamples, cannot be transformed into counterexamples for strong early bisimilarity (Def.~\ref{def:strong}) and are less likely to indicate the presence of an attack.

The above strategy does not describe a spurious counterexample; and furthermore it can be turned into a real attack.
In order to show that it is not a spurious counterexample,
our methodology is to construct a modal logic formula from the distinguishing strategy.
Instead of using a modal logic characterising open bisimilarity (which would be a generalisation of intuitionistic $\OM$~\cite{Ahn2017} to the applied $\pi$-calculus, making used of the notion of reachability in Def.~\ref{def:reach}) we employ a modal logic characterising
strong early bisimilarity called classical $\FM$.
%Defined in Fig.~\ref{fig:FM}

\subsubsection{Introducing classical $\FM$}

The syntax of modal logic
\textit{classical} $\FM$ ($\mathcal{F}$ is for free inputs, $\mathcal{M}$ is for match~\cite{Milner1993})
%A syntax for \textit{classical} $\FM$
is presented below.
\begin{gather*}
%\begin{array}{l}
%\left.
\begin{array}{rlr}
\phi \Coloneqq&  M = N & \mbox{equality} \\
%\ttt & \mbox{top} \\
%          \mid& \fff & \mbox{bottom} \\
%          \mid& \diam{M = N}\phi & \mbox{diamond match} \\
%          \mid& \diam{M \not= N}\phi & \mbox{diamond mismatch}\\
%          \mid& \boxm{M = N}\phi & \mbox{box match}\\
%          \mid& \boxm{M \not= N}\phi & \mbox{box mismatch}\\
          \mid& \phi \wedge \phi & \mbox{conjunction} \\
          \mid& \diam{\pi}\phi & \hspace{21pt} \mbox{diamond} \\
%          \mid& \phi \vee \phi & \mbox{disjunction} \\
%          \mid& \phi \yields \phi & \mbox{implication} \\
          \mid& \neg\phi & \mbox{negation}
\end{array}
%\right\}
%\mbox{classical logic}
%\\
%\left.
%\begin{array}{rlr}
%\hspace{17pt}
%\\
%          \mid& \boxm{\pi}\phi & \mbox{box}
%\end{array}
%\right\}
%\mbox{weak modalities}
%\\\\
\qquad\qquad
\begin{array}{rl}
\mbox{abbreviations:}&
\ttt \triangleq M = M
\\
&M \not= N \triangleq \neg (M = N)%
\\
&\boxm{\pi}\phi \triangleq \neg\diam{\pi}\neg\phi
\\
&\phi \vee \psi \triangleq \neg\left(\neg\phi \wedge \neg\psi\right)
%\\
%\ttt \triangleq M = M
%\\
%\fff \triangleq \neg\ttt
%\\
%\boxm{M = N}\phi \triangleq M = N \yields \phi
%\\
%\diam{M = N}\phi \triangleq M = N \wedge \phi
%\\
%\boxm{M \not= N}\phi \triangleq M \not= N \yields \phi
%\\
%\diam{M \not= N}\phi \triangleq M \not= N \wedge \phi
%\end{array}
\end{array}
\end{gather*}
The semantics of classical $\FM$ is given by the least relation $A \vDash \phi$ between extended processes $A$ and formulae $\phi$ satisfying the conditions in Fig.~\ref{figure:FM}.
%in Fig.~\ref{figure:FM}.
\begin{figure}[h]
\begin{gather*}
\begin{array}{lcl}
%A \vDash \ttt &&\mbox{always holds.} \\
\mathopen{\nu \vec{x}.}\left( \theta \cpar P \right) \vDash M = N
&\mbox{iff}&
%\begin{array}[t]{l}
M\theta \mathrel{=_E} N\theta  ~\mbox{ and }~
%\\
%\qquad
\mbox{$\vec{x}$  are fresh for $M$ and $N$}
%\cap\left(\fv{M}\cup\fv{N}\right) = \emptyset
%\end{array}
%\\
%A \vDash \phi_1 \lor \phi_1 &\mbox{iff}&
%  A \vDash \phi_1 ~\mbox{or}~ A \vDash \phi_2.
%\\
%A \vDash \phi_1 \yields \phi_2 &\mbox{iff}&
%\mbox{whenever } \reachable{A}{\sigma,\vec{x}.\rho}{A'},  ~\mbox{we have}~
%A' \vDash \phi_1\sigma ~\mbox{implies}~ A' \vDash \phi_2\sigma.
\\
A \vDash \diam{\pi}\phi &\mbox{iff}&
%\begin{array}[t]{l}
 \mbox{there exists }B ~\mbox{ such that }~
%\\
%\qquad
 A \lts{\pi} B ~\mbox{ and }~ B \vDash \phi.
%\end{array}
\\
A \vDash \phi_1 \land \phi_2 &\mbox{iff}&
  A \vDash \phi_1 ~\mbox{ and }~ A \vDash \phi_2.
\\
A \vDash \neg\phi &\mbox{iff}&
  A \vDash \phi ~\mbox{ does not hold.}
%B \mbox{ such that } A \Lts{\pi} B ~\mbox{and}~ B \vDash \phi.
%\\
%A \vDash \diam{\co{M}(z)}\phi &\mbox{iff}&
%  \mbox{there exists } B \mbox{ such that } A \lts{\co{M}(z)} B ~\mbox{and}~ B \vDash \phi.
%\\
%A \vDash \boxm{\pi}\phi &\mbox{iff}&
%\mbox{whenever } \reachable{A}{\sigma,\vec{x}.\rho}{A'}, \mbox{ and }
%    A' \lts{\pi\sigma} B,  ~\mbox{we have}~
%    B \vDash \phi\sigma.
%\\
%A \vDash \boxm{\co{M}(z)}\phi &\mbox{iff}&
%  \mbox{whenever } \reachable{A}{\sigma}{A'} \mbox{ and }
%\mbox{for any } B, \vec{x}, \mbox{and fresh } \vec{z} \mbox{ and any } \sigma \respecting \dom{\theta},
%\\&&\hfill
%\left(\nu \vec{x},\vec{y}.\left(\sigma\cdot\sub{\vec{z}}{\vec{x}} \cpar P\right)\right)\sigma
%A' \lts{\co{M\sigma}(z)} B, ~\mbox{we have}~ B \vDash \phi\sigma.
\end{array}
\end{gather*}
%\vspace*{-.75em}
\caption{The semantics of modal logic ``classical $\FM$''.
}\label{figure:FM}
\end{figure}

The following theorem formulates what it means for classical $\FM$ to characterise strong early bisimilarity.
\begin{thm}\label{thm:char}
$P \sim Q$, whenever, for all formula $\phi$, we have $P \vDash \phi$ if and only if $Q \vDash \phi$.
\end{thm}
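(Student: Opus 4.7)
The plan is to prove this completeness direction via the standard Hennessy--Milner style argument: define the relation pairing logically equivalent extended processes, then show it is itself a strong early bisimulation. By Definition~\ref{def:strong}, this suffices to conclude $P \sim Q$.

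Concretely, let $\mathcal{R} \triangleq \{(A,B) : A \vDash \phi \text{ iff } B \vDash \phi, \text{ for every classical } \FM \text{ formula } \phi\}$. The hypothesis yields $P \mathrel{\mathcal{R}} Q$, so it is enough to verify both clauses of Def.~\ref{def:strong} for $\mathcal{R}$. For static equivalence, fix $A = \mathopen{\nu\vec{x}.}(\sigma \cpar P')$ and $B = \mathopen{\nu\vec{y}.}(\theta \cpar Q')$ with $A \mathrel{\mathcal{R}} B$. For any messages $M, N$ whose variables avoid $\vec{x} \cup \vec{y}$, the semantics of the equality formula in Fig.~\ref{figure:FM} gives $A \vDash M = N$ exactly when $M\sigma \mathrel{=_{E}} N\sigma$, and likewise for $B$; since $A$ and $B$ satisfy the same such formulae, $A$ and $B$ are statically equivalent (Def.~\ref{def:static}).

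For transition matching, suppose $A \lts{\pi} A'$. By the image-finiteness of $\lts{}$ stated in the excerpt, enumerate the finitely many successors $B \lts{\pi} B'_1, \ldots, B'_n$ up to $\alpha$-conversion. Assume, for contradiction, that no $B'_i$ is $\mathcal{R}$-related to $A'$. For each $i$, pick a separating formula $\psi_i$, using negation to arrange $A' \vDash \psi_i$ and $B'_i \not\vDash \psi_i$. Then setting $\Phi \triangleq \diam{\pi}(\psi_1 \wedge \cdots \wedge \psi_n)$ (or $\Phi \triangleq \diam{\pi}\ttt$ when $n = 0$) gives $A \vDash \Phi$ but $B \not\vDash \Phi$, contradicting $A \mathrel{\mathcal{R}} B$. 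Hence some $B'_i$ is related to $A'$, matching the transition.

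The main obstacle is the modal-logic bookkeeping rather than any deep ingredient: one must check that the $\diam{\pi}$ modality uniformly exposes every strong transition up to $\alpha$-conversion, for each of the three forms of label $\pi$ (internal, bound-output, early input), and that image-finiteness genuinely holds per label --- including when $\pi = M\,N$ ranges over uncountably many concrete messages, since we only need finitely many successors for the \emph{fixed} label $\pi$ under consideration. A secondary care point is that restricted names in $A$ and $B$ must not leak into distinguishing equality formulae; this is handled by the freshness side-condition on the equality clause of Fig.~\ref{figure:FM}, which also explains why static equivalence and $\FM$-equivalence agree on the restricted fragment of messages. With these conditions in place, the Hennessy--Milner construction goes through directly.
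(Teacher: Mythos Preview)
Your proposal is correct and follows essentially the same Hennessy--Milner argument as the paper: define $\mathcal{R}$ as logical equivalence, verify static equivalence via the equality formula, and match transitions by image-finiteness together with the conjunction-of-separating-formulae contradiction. The paper's appendix additionally proves the converse (soundness) direction by structural induction on $\phi$, but the theorem as stated only asserts the completeness direction you handle; your explicit treatment of the $n=0$ case via $\diam{\pi}\ttt$ is a minor presentational variant of the paper's opening step, which first uses $\diam{\pi}\ttt$ to ensure $n \geq 1$.
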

The proof is provided in Appendix~\ref{sec:app2}.
From the contrapositive of the above theorem,
whenever $P \not\sim Q$,
there exists a formula $\phi$ such that $P \vDash \phi$ holds, but $Q \nvDash \phi$.
Such a formula is called a distinguishing formula.

\subsubsection{The attack on BAC as a formula.}%
\label{sec:formula}

We are now in a position to prove Theorem~\ref{thm:BAC}, restated below for convenience,
which establishes that strong unlinkability of the BAC protocol fails.
\begin{thm}[Theorem~\ref{thm:BAC} restated]\label{thm:BACagain}
$\Sys \not\approx \Spec$.
\end{thm}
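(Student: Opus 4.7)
The plan is to combine the reduction to strong bisimilarity from Theorem~\ref{thm:strong} with the modal characterisation of strong early bisimilarity from Theorem~\ref{thm:char}. By Theorem~\ref{thm:strong}, $\Sys \approx \Spec$ if and only if $\AltSys \sim \AltSpec$, so it suffices to refute $\AltSys \sim \AltSpec$. By Theorem~\ref{thm:char}, refuting a strong early bisimilarity is equivalent to exhibiting a single formula $\phi$ of classical $\FM$ satisfied by exactly one of $\AltSys$ and $\AltSpec$. This is precisely what the symbolic development of Section~\ref{sec:symbolic} has been building towards: open bisimilarity supplies the distinguishing strategy, and classical $\FM$ supplies the language in which to certify it as an attack on early, hence weak early, bisimilarity.

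The formula $\phi$ is read off from the five-step strategy recapitulated at the start of Section~\ref{sec:dist}. Each step in which $\AltSys$ or one of its descendants leads becomes a $\diam{\cdot}$ modality whose label is obtained by instantiating the symbolic input variable by the recipe returned by the sequent calculus of Fig.~\ref{fig:constraints}: step~(1) contributes $\diam{\co{c}(\varuone)}\diam{\co{c}(\varutwo)}\diam{d\,\getchallenge}$, step~(2) contributes $\diam{\co{c}(\varv)}$, step~(4) contributes $\diam{\co{c}(\varw)}\diam{d\,\varw}$ (reflecting the replay recipe $\sub{\vary}{\varw}$ computed in Section~\ref{sec:symbolic}), and step~(5) contributes $\diam{\co{c}(\varz)}$. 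The step in which the specification leads, step~(3), is rendered dually using a $\boxm{\cdot}$ modality, asserting that however $\AltSys$ matches this input the remainder of the observation still goes through. The innermost subformula is the propositional test $\varz \neq \error \wedge \varz \neq \getchallenge$, which separates the genuine authenticated BAC response available to $\AltSys$ from the only matching outputs available to $\AltSpec$, namely an error message or a fresh challenge. Satisfaction by $\AltSys$ and failure by $\AltSpec$ then reduce to unfolding the semantic clauses of Fig.~\ref{figure:FM}.

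The main obstacle is discharging the box modality introduced at step~(3), since an early input label carries a concrete message and in principle the specification could match the box by producing infinitely many inputs. Here image finiteness of the strong early labelled transition system, together with symmetry of parallel composition and equivariance, cut the collection of structurally distinct responses of $\AltSys$ down to a finite set, each of which is eliminated by the same symbolic calculation as in Section~\ref{sec:symbolic}. A secondary concern is the well-known gap between open and early bisimilarity alluded to in footnote~4: this is harmless here because the final separating test uses only positive equalities on concrete message constructors, so the distinguishing argument lifts from the symbolic calculation to a classical $\FM$ formula without appealing to any intuitionistic treatment of inequality. With both points in place, $\AltSys \vDash \phi$ and $\AltSpec \nvDash \phi$, whence $\AltSys \not\sim \AltSpec$, and Theorem~\ref{thm:strong} delivers $\Sys \not\approx \Spec$.
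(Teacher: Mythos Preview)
Your overall plan coincides with the paper's: reduce via Theorem~\ref{thm:strong} to $\AltSys \sim \AltSpec$, then exhibit a classical $\FM$ formula and appeal to Theorem~\ref{thm:char}. The shape of the formula you sketch --- diamonds for steps (1), (2), (4), (5), a box for step (3) --- is also what the paper does.

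There is, however, a genuine gap in how you discharge the box. Your claim that the finitely many $d\,\varv$-successors of $\AltSys$ are ``each eliminated by the same symbolic calculation'' is not correct, and the formula you describe, with innermost test $\varz \neq \error \wedge \varz \neq \getchallenge$ and no further guards, is \emph{not} distinguishing: $\AltSpec$ satisfies it too. The reason is that after any $d\,\varv$ step, the specification can always escape via a fresh session --- output a $\getchallenge$ as $\varw$, feed it into a fresh ePassport, and obtain a fresh nonce as $\varz$, which is neither $\error$ nor $\getchallenge$. The paper's formula blocks this escape by conjoining $\varw \neq \getchallenge$ to the innermost test (and, symmetrically, $\varv \neq \getchallenge$ after step~(2)). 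But once you add $\varw \neq \getchallenge$, some $d\,\varv$-successors of $\AltSys$ no longer satisfy the diamond chain at all --- in particular the branch where $\varv$ is fed into a fresh ePassport, after which the only immediate output is $\getchallenge$. For those branches the paper supplies a \emph{different} subformula, $\boxm{\co{c}(\varw)}(\varw = \getchallenge)$, and takes the disjunction inside the box. So the body of the box is not a single diamond chain but a disjunction of two structurally different formulae, one per class of $\AltSys$-successor; this is precisely the content of Section~\ref{sec:construct}, and without it the argument does not go through.
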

\begin{proof}
In order to establish the failure of strong unlinkability of the BAC protocol, we make use of the following classical $\FM$ formula $\psi$.
\[
\psi \triangleq
\begin{array}[t]{l}
\diam{\co{c}(\varuone)}
\diam{\co{c}(\varutwo)}
\diam{d\,\getchallenge}
\diam{\co{c}(\varv)}
\big(
\\
\qquad
\begin{array}[t]{l}
\varv \not= \getchallenge \wedge {}
%\\
% y = \getchallenge \wedge z \not= \getchallenge\ \wedge
\\
\boxm{d\, \varv}\big(
\begin{array}[t]{l}
\diam{\co{c}(\varw)}
\diam{d\,\varw}
\diam{\co{c}(\varz)}\big(
\varw \neq \getchallenge \wedge
%u \neq \error \wedge \\
\varz \neq \getchallenge \wedge \varz \neq \error
\big)
%\\
%\vee
%\diam{\co{c}(w)}\left( w = \error \right)
\\
\vee\
\boxm{\co{c}(\varw)}\left( \varw = \getchallenge \right)
\big)
\big)
\end{array}
\end{array}
\end{array}
\]
For this formula we can verify $\AltSys \vDash \psi$ holds; while $\AltSpec \nvDash \psi$.
Hence, by Theorem~\ref{thm:char}, $\AltSys \not\sim \AltSpec$; thereby by Theorem~\ref{thm:strong},
$\Sys \not\approx \Spec$, as required.
%stated in Theorem~\ref{thm:BAC}.
\end{proof}

This closes the initial question of whether or not unlinkability holds for the BAC protocol; the answer is that the BAC protocol does not satisfy unlinkability, at least as specified originally in CSF'10~\cite{Arapinis2010}.
This leads to several immediate questions.
Firstly, how do we construct the above formula from the distinguishing strategy given by open bisimilarity (to be addressed in Sec.~\ref{sec:construct})?
Secondly, can we explain why the formula is distinguishing, and from that explanation describe a practical attack?
Thirdly, how do we approach the problem of constructing a formula in general and how do we handle cases when a spurious counterexample is discovered?
We focus mainly on the first question in this paper. The second question we we return to in the next section; while the third question is worthy of future work, since it would enable tool support.
%\RH{The second two questions are not yet answered in this paper!}

We emphasise at this point that there are infinitely many alternative distinguishing formulae for this problem, not only $\psi$.
Some such distinguishing formulae use no box modality and instead employ conjunction, where conjunction also appeals to the branching time nature of bisimilarity.
We postpone discussing alternatives until Sec.~\ref{sec:II}, where we propose an alternative model of unlinkability, where the meaning of distinguishing strategies is clearer.

\subsubsection{How to construct the distinguishing formula}%
\label{sec:construct}

We construct the formula named $\phi$, used as the distinguishing formula in the proof of Theorem~\ref{thm:BAC} (reiterated as Theorem~\ref{thm:BACagain}), by induction on the depth of the distinguishing strategy summarised at the top of Sec.~\ref{sec:dist}.
To do so, we work backwards through the distinguishing strategy. Note we refer to processes representing intermediate states of an execution previously defined throughout Sec.~\ref{sec:symbolic}.

% summarised at the top of Sec.~\ref{sec:dist}.
Firstly, observe that when the system is in state $\SystemUKfive$
the specification must be in a state where either $\error = \varz$, $\getchallenge = \varz$ or $\getchallenge = \varw$, where each pair of messages shows static equivalence is violated.
Since for $\SystemUKfive$ both $\varw$ and $\varz$ cannot be unified with $\getchallenge$ or $\error$ with respect to the history at that point, the intuitionistic negation and classical negation coincide for these equalities, leading to the following formula distinguishing $\SystemUKfive$ from any state the specification can reach at that point.
\[
\SystemUKfive \vDash \error \not= \varz \wedge \getchallenge \not= \varz \wedge  \getchallenge \neq \varw
\]
Since the system leads in order to reach this state using transition $\SystemUKfour\sub{\varnt,\vary}{\varv,\varw} \lts{\co{c}(\varz)} \SystemUKfive$, we add a diamond modality to the formula, as follows.
\[
\SystemUKfour\sub{\varnt,\vary}{\varv,\varw} \vDash
\diam{\co{c}(\varz)}\left( \error \not= \varz \wedge \getchallenge \not= \varz \wedge  \getchallenge \neq \varw \right)
\]
The above step is standard for constructing modal logic formulae for distinguishing strategies; however the next step requires care.
Firstly, note that the substitution $\sub{\varnt,\vary}{\varv,\varw}$ concerns input variables.
Thus we push these substitutions back through the distinguishing strategy until the relevant input is instantiated.
At this point, since the input action in the distinguishing strategy reaching state $\SystemUKfour\sub{\varnt,\vary}{\varv,\varw}$ introduced variable $y$, that variable is instantiated immediately, and $\varnt$ is pushed back through the strategy, refining the distinguishing strategy to obtain the following transitions using the rules of Fig.~\ref{figure:active}.
\[
\SystemUKthree\sub{\varnt}{\varv}
\lts{\co{c}(\varw)}
\lts{d\,\varw}
\SystemUKfour\sub{\varnt,\vary}{\varv,\varw}
\]
Since the system was leading, $\SystemUKthree\sub{\varnt}{\varv}$ can be
distinguished from $\SpecUKthree\sub{\varnt}{\varv}$ by the following formula.
\[
\SystemUKthree\sub{\varnt}{\varv}
\vDash
\diam{\co{c}(\varw)}
\diam{d\,\varw}
\diam{\co{c}(\varz)}\left( \error \not= \varz \wedge \getchallenge \not= \varz \wedge  \getchallenge \neq \varw \right)
\]
%\SM{I don't understand the next sentence.}\RH{Is it clearer now?}
The next step in the distinguishing strategy involves a change of leading player, where the specification leads with an action.
By pushing back the substitution through the strategy, instantiating the input variable on the label,
we have the following transition led by the specification:
$\SpecUKtwo
\lts{ d\,\varv }
\SpecUKthree\sub{\varnt}{\varv}$.
Since the specification leads at this point and the system follows in any way it can,  we write the box modality in the distinguishing formula for the system followed by a disjunction of formulae, where each formula distinguishes $\SpecUKthree\sub{\varnt}{\varv}$ from any state reachable from
$\SystemUKtwo$
by an input transition labelled with  $d\,\varv$.

\begin{figure}
\small
\xymatrix{
(\dagger)~\AltSys
\ar^{\co{c}(\varuone)}[d]
 && \AltSpec \ar^{\co{c}(\varuone)}[d] &
\\
\ar^{\co{c}(\varutwo)}[d]
 &&  \ar^{\co{c}(\varutwo)}[d] &
\\
\ar^{d\,\getchallenge}[d]
 &&  \ar^{d\,\getchallenge}[d] &
\\
\SystemUKone
\ar^{\co{c}(\varv)}[d]
 &&
\SpecUKone
\ar^{\co{c}(\varv)}[d]
\ar^{\co{c}(\varv)}[dr]
 &
\\
\SystemUKtwo
\ar^{d\,\varv}[d]
\ar^{d\,\varv}[dr]
 &&
(\dagger)~\SpecUKtwo
\ar^{d\,\varv}[d] &
%\AltSpec^{\mathrm{alt.II}} \vDash
\varv = \getchallenge
\\
(\dagger)~\SystemUKthree\sub{\varnt}{\varv}
\ar^{\co{c}(\varw)}[d]
 &
\ar^{\co{c}(\varw)}[d]
&
\SpecUKthree\sub{\varnt}{\varv}
\ar^{\co{c}(\varw)}[d]
\ar^{\co{c}(\varw)}[dr]
 &
\\
\ar^{d\,\varw}[d]
 &
%\AltSys^{alt.\mathrm{III}} \vDash
\varw = \getchallenge
&
%\SpecUKthree\sub{\varnt}{\varv}
\ar^{d\,\varw}[d] &
\varw = \getchallenge
\\
\SystemUKfour\sub{\varnt, \vary}{\varv, \varw}
\ar^{\co{c}(\varz)}[d]
 &
%\AltSys^{alt.\mathrm{III}} \vDash
&
\SpecUKfour\sub{\varnt, \vary}{\varv, \varw}
\ar^{\co{c}(\varz)}[d]
\ar^{\co{c}(\varz)}[dr]
 &
\\
\varz \not= \error
\wedge
\varz \not= \getchallenge
& &
\varz = \error
&
\varz = \getchallenge
}

\caption{Distinguishing strategy implied by distinguishing formula $\psi$.
}%
\label{fig:game}
\end{figure}

Observe that, as well as $\SystemUKthree\sub{\varnt}{\varv}$ for which we constructed a distinguishing formula above, there is another, quite distinct, process reachable from $\SystemUKtwo$ by a $d\,\varv$ transition that can be distinguished by formula $\boxm{\co{c}(\varw)}(\varw = \getchallenge)$ representing that the case when input transition labelled with $d\,\varv$ results in feeding $\varv$ into a new reader session, which kills the possibility of continuing an existing session with an output transition. Note this formula is also constructed algorithmically, as we are describing, but this branch is more due to a limitation of the original model unlinkability communicated in CSF'10, that we address next in Sec.~\ref{sec:II}. Hence we draw no further attention to that branch at this point.

Thereby, we obtain the following formula distinguishing $\SystemUKtwo$ from $\SpecUKtwo$.
\[
\SystemUKtwo
\vDash
\boxm{d\, \varv}\big(
\begin{array}[t]{l}
\diam{\co{c}(\varw)}
\diam{d\,\varw}
\diam{\co{c}(\varz)}\big(
\varw \neq \getchallenge \wedge
%u \neq \error \wedge \\
\varz \neq \getchallenge \wedge \varz \neq \error
\big)
%\\
%\vee
%\diam{\co{c}(w)}\left( w = \error \right)
\\
\vee\
\boxm{\co{c}(\varw)}\left( \varw = \getchallenge \right)
\big)
\end{array}
\]
The rest of the construction of formula $\psi$ follows the pattern of steps already described above, where a diamond modality is appended whenever the system leads and substitutions are pushed back through the distinguishing strategy until they instantiate the relevant input, or reach the root of the formula.

We present an informal graphical depiction of the game that $\psi$ describes in Fig.~\ref{fig:game}.
In the figure, annotation $(\dagger)$ indicates where a process takes over as the leading process in the strategy. When a process is not leading it may have the option to try more than one counter move, represented by the branches in the strategy.

\newcommand{\NewSys}{\textit{NewSys}}
\newcommand{\NewSpec}{\textit{NewSpec}}
\newcommand{\SysBAC}{\textit{System}_\textit{BAC}}
\newcommand{\SpecBAC}{\textit{Spec}_\textit{BAC}}
\newcommand{\SysPACE}{\textit{System}_\textit{PACE}}
\newcommand{\SpecPACE}{\textit{Spec}_\textit{PACE}}
\newcommand{\ePassport}{\textit{Prover}}
\newcommand{\eReader}{\textit{Verifier}}
\newcommand{\PBAC}{P_{\textit{BAC}}}
\newcommand{\VBAC}{V_{\textit{BAC}}}
\newcommand{\pass}{\textit{passport}}
\newcommand{\reader}{\textit{reader}}

\section{A New Chapter for Unlinkability: Refining the Model of Unlinkability}\label{sec:II}

While the previous sections closed a chapter in the story of unlinkability,
by proving that there is an attack on the model of unlinkability of the BAC protocol as originally communicated in the proceedings of CSF'10~\cite{Arapinis2010}, this section opens a new chapter by justifying a new model of unlinkability.
This new model of unlinkability is a modest improvement on the model previously proposed. It incorporates some explicit observables reflecting the ability of the attacker to observe and hence control the creation of radio frequency communication channels.
We demonstrate here why our proposed model of unlinkability more accurately models the distinguishing power of an attacker; and how descriptions of attacks on the BAC protocol, given by modal logic formulae, become clearer.
This section can also been seen as introducing preliminaries required for Section~\ref{sec:pace}, where we show how the model proposed discovers new attacks on the PACE protocol that follow a similar pattern to the attacks discovered on the BAC protocol.

The model of strong unlinkability originally proposed in CSF'10~\cite{Arapinis2010} has many merits.
However, a limitation we would like to draw attention to is that it matters whether or not we include the $\getchallenge$ message that the reader sends to initiate the protocol.
That $\getchallenge$ message happens to be essential for the attacks on unlinkability described in the previous section since, by controlling the $\getchallenge$ messages sent and received, we can count the number of sessions that are present and thereby infer when a message sent is a ciphertext in an existing session of the protocol rather than a fresh nonce at the beginning of a new protocol.
%We cannot use the nonce for this purpose, since in that model it is indistinguishable from a ciphertext later in the protocol or sent by another party.

This is perhaps clearest in Fig.~\ref{fig:game} of the previous section.
Observe that in order to reach the bottommost state in the figure, we ensure that no additional sessions are started beyond the two reader sessions and one ePassport session at the beginning triggered by the topmost three actions in the figure.
This causes problems, three of which are highlighted below, which are all due to the modelling decision where all parties use the same channel $c$ for outputs and $d$ for inputs.
\begin{itemize}
\item Limitation 1. It is inconvenient and confusing to, throughout the strategy, add branches that have the effect of saying ``at this point we don't start a new session.''
\item Limitation 2. In the reality, the attacker can directly observe whether or not two inputs or outputs are performed within the same session of a protocol and, furthermore, can distinguish between a session with a reader or with an ePassport.
This is not only because the attacker must be aware of the physical location of each entity, but also because, for each session, the attacker must open a new channel using the underlying transport protocol, as standardised in ISO/IEC 14443~\cite{ISO14443}.
\item Limitation 3. Finally, the fact that the $\getchallenge$ message is useful for counting the number of each type of session initiated, is rather a misuse of that message.
Message $\getchallenge$ contributes nothing to this authentication protocol (other than impeding the stronger authentication property synchronisation~\cite{Cremer2006}, which is immediately violated in protocols with a constant message).
Hence removing it from a model of BAC should not result in attacks ceasing to exist.
%hanve we should be able to control the number of sessions that are present, even without
%remove the parts of the protocol $\getchallenge$ from the model, which  it without trouble.
\end{itemize}

\noindent
The above limitations of existing models used to analyse the unlinkability of the BAC protocol, as employed in previous sections, can be addressed simply by declaring a fresh public channel for each session.
To do so, we extend the model with two channels, say $\pass$ and $\reader$, that are used to model the creation of a new channel in the respective roles of either an ePassport or a reader. In the applied $\pi$-calculus, we achieve this by sending a fresh channel to the environment on these channels for each session, as in the following new scheme for the system and specification.
\[
\begin{array}{ll}
\mbox{Scheme for System:}
& \bang\mathopen{\nu \vec{k}.} \bang\left(\mathopen{\nu c.}\cout{\pass}{c}.\ePassport(c,\vec{k}) \cpar \mathopen{\nu c.}\cout{\reader}{c}.\eReader(c,\vec{k})\right)
\\
\mbox{Scheme for Specification:}
& \bang\mathopen{\nu \vec{k}.} \left(\mathopen{\nu c.}\cout{\pass}{c}.\ePassport(c,\vec{k}) \cpar \mathopen{\nu c.}\cout{\reader}{c}.\eReader(c,\vec{k})\right)
\end{array}
\]
The above we propose as a general scheme for RFID protocols employing symmetric keys $\vec{k}$, which are shared through another channel that the attacker cannot intercept. Recall, in the case of ePassport protocols, this is usually achieved by an OCR session; but may be achieved by other means such as sending the key to the reader via a secure connection between a personal device and the reader. Thus a fundamental assumption in all these models of unlinkability is that we are considering use cases where intercepting and manipulating RFID communication is easier than intercepting the keys, which would trivially break unlinkability.

Notice we directly employ a presentation of processes that does not involve $\tau$-transitions.
This simplifies the problem such that strong notions of bisimilarity may be employed, without loss of modelling power.
Of course, to do so, we should assume each instance of $\ePassport$ and $\eReader$ is a sequential process (or apply another suitable restriction for forbidding $\tau$-transitions internal to a single reader or ePassport).

\subsection{The unlinkability of the BAC protocol, simplified}
Following the above scheme for the BAC protocol, $\vec{k}$ is $ke,km$,
and
$\ePassport(c,ke,km)$ and $\eReader(c,ke,km)$
are instantiated with the processes
$\PBAC(c,ke,km)$ and $\VBAC(c,ke,km)$
defined below.
\[
\begin{array}{rl}
\PBAC(c,ke,km) \triangleq&
\begin{array}[t]{l}
 \nu nt.\cout{c}{nt}.c(y). \\
     %& \clet{m_e,m_m}{\fst{y},\snd{y}} \\
 \texttt{if}\,\snd{y} = \mac{\fst{y}, km}\,\texttt{then} \\
\qquad
 \begin{array}[t]{l}
 \texttt{if}\,nt = \fst{\snd{\dec{\fst{y}}{ke}}}\,\texttt{then} \\
%      & \begin{array}[t]{ll}
%        \texttt{then}
\qquad
 \begin{array}[t]{l}
                         \nu kt.\clet{m}{\enc{\pair{nt}{\pair{\fst{\dec{\fst{y}}{ke}}}{kt}}}{ke}} \\
                         \cout{c}{m, \mac{m, km}}
                         \end{array}  \\
 \texttt{else}\,\cout{c}{error}
\end{array}
\\
 \texttt{else}\,\cout{c}{error}
        \end{array}
\\
\VBAC(c,ke,km)
\triangleq&
\begin{array}[t]{l}
 c(nt).\nu nr.\nu kr.
\\
 \clet{m}{\enc{\pair{nr}{\pair{nt}{kr}}}{ke}}
 \cout{c}{m, \mac{\pair{m}{km}}}
\end{array}
\end{array}
\]
Notice the above processes are simply $\Pass(ke,km)\sub{d}{c}$ and $\R(ke,km)\sub{d}{c}$ from the previous sections, but
with prefixes concerning the $\getchallenge$ message removed.

In summary, we propose that the problem of whether there is an attack on the unlinkability of the BAC protocol
can be resolved by proving that the following theorem holds.
\begin{thm}\label{thm:BAC2}
$\SysBAC \not\sim \SpecBAC$, where
\[
\begin{array}{rl}
\SysBAC \triangleq
& \bang\mathopen{\nu ke, km.} \bang\left(
    \mathopen{\nu c.}\cout{\pass}{c}.\PBAC(c, ke, km)   \cpar
    \mathopen{\nu c.}\cout{\reader}{c}.\VBAC(c, ke, km) \right)
\\
\SpecBAC \triangleq
& \bang\mathopen{\nu ke, km.} \left(
    \mathopen{\nu c.}\cout{\pass}{c}.\PBAC(c, ke, km)   \cpar
    \mathopen{\nu c.}\cout{\reader}{c}.\VBAC(c, ke, km) \right)
\end{array}
\]
\end{thm}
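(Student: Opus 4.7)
The plan is to mirror the proof of Theorem~\ref{thm:BACagain}: exhibit an explicit distinguishing formula of classical $\FM$ and conclude by Theorem~\ref{thm:char}. The new model actually simplifies matters, because each session carries its own public channel $u_i$, so we do not need the awkward box-over-disjunction of the formula $\psi$ used in Sec.~\ref{sec:formula}; every action is unambiguously directed at one participant. What survives of the branching-time character of the attack is a single conjunction $\chi$ that demands the passport accept either reader's BAC response.

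Concretely, I would propose
\[
\phi \triangleq
\diam{\co{\reader}(u_1)}\diam{\co{\reader}(u_2)}\diam{\co{\pass}(u_3)}\diam{\co{u_3}(v)}
\diam{u_1\,v}\diam{\co{u_1}(w_1)}\diam{u_2\,v}\diam{\co{u_2}(w_2)}\chi
\]
where $\chi \triangleq \diam{u_3\,w_1}\diam{\co{u_3}(z_1)}(z_1 \neq \error) \wedge \diam{u_3\,w_2}\diam{\co{u_3}(z_2)}(z_2 \neq \error)$. The first eight diamonds open two reader sessions and one passport session, capture the passport's nonce $v$, and feed it to each reader so that both emit a BAC response $w_i$ that encodes $v$ under the reader's own keys. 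The conjunction $\chi$ then forks on the passport's input: feed it $w_1$ in one branch and $w_2$ in the other, demanding a non-error reply in each. For $\SysBAC \vDash \phi$ I would pick a single outer unfolding of $\bang\nu ke, km$ and three inner unfoldings, so that both readers and the passport share one key pair; then a routine unfolding of the rules in Fig.~\ref{figure:active} shows that both MAC checks and both nonce checks succeed, yielding $z_i \neq \error$ in each branch. For $\SpecBAC \nvDash \phi$ the key observation is that, since Spec places $\nu ke, km$ outside the parallel composition $P \cpar V$, each outer unfolding contributes exactly one $\cout{\reader}{\cdot}$ prefix. Hence the two $\co{\reader}$ outputs necessarily come from two distinct outer unfoldings with disjoint fresh key pairs, while the $\co{\pass}$ output lies in at most one of those scopes. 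The passport's keys therefore coincide with those of at most one of $u_1, u_2$, so in at least one branch of $\chi$ the MAC test $\snd{w_i} \mathrel{=_{E}} \mac{\fst{w_i}, km_P}$ fails, forcing $z_i = \error$ and breaking $\chi$.

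The main obstacle I expect is formalising the ``two outer unfoldings with disjoint fresh keys'' invariant on the reachable states of $\SpecBAC$. This boils down to an induction on the derivation of the first three output transitions, tracking how the scopes of the $\nu ke, km$ and $\nu c$ binders extend through repeated uses of \textsc{Res}, and exploiting the absence of $\tau$-transitions in the new model to keep the bookkeeping manageable. Once this invariant is in place, the case analysis on how Spec assigns $u_3$ to an outer instance is immediate, and Theorem~\ref{thm:char} closes the argument. As a sanity check, a calculation in the spirit of Sec.~\ref{sec:symbolic} using open bisimilarity could be used beforehand to rederive $\phi$ symbolically rather than guessing it, and would confirm that the distinguishing strategy encoded by $\phi$ is not a spurious open-bisimilarity artefact.
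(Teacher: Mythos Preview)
Your proposal is correct and takes essentially the same approach as the paper: exhibit a classical $\FM$ distinguishing formula built from a common prefix opening two readers and one passport, followed by a conjunction that forces the passport to authenticate against each reader in turn, and conclude via Theorem~\ref{thm:char}. The paper's formula $\varphi$ differs only in where the conjunction is placed: it branches immediately after capturing the passport's nonce, so each conjunct independently drives one reader and then the passport, whereas your $\phi$ first consumes both readers to obtain $w_1$ and $w_2$ and only then branches on which response to feed the passport. Both encodings work for the same reason (in $\SpecBAC$ the two $\co{\reader}$ outputs necessarily come from distinct $\nu ke,km$ scopes, so the passport's keys match at most one of them), and your analysis of that invariant is the right one; the paper simply states that $\SysBAC \vDash \varphi$ and $\SpecBAC \nvDash \varphi$ without spelling out the induction you sketch.
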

\begin{proof}
Consider the $\FM$ formula below.
\[
\varphi \triangleq
\begin{array}[t]{l}
\diam{\co{\reader}(c_1)}
\diam{\co{\reader}(c_2)}
\diam{\co{\pass}(c_3)}
\diam{\co{c_3}(nt)}
\Big(
\\
\qquad
\begin{array}[t]{rl}&
\diam{c_1\, nt}
\diam{\co{c_1}(\varw)}
\diam{c_3\,\varw}
\diam{\co{c_3}(\varz)}
\left( \varz \neq \error \right)
\\
\wedge &
\diam{c_2\, nt}
\diam{\co{c_2}(\varw)}
\diam{c_3\,\varw}
\diam{\co{c_3}(\varz)}
\left( \varz \neq \error \right)
~\Big)
\end{array}
\end{array}
\]
Since $\SysBAC \vDash \varphi$,
but $\SpecBAC \nvDash \varphi$,
 by Theorem~\ref{thm:char}, $\SysBAC \not\sim \SpecBAC$.
\end{proof}

\begin{figure}
\small
\xymatrix{
&
(\dagger)~\SysBAC
\ar^{\co{\reader}(c_1)}[d]
 &&
&
\SpecBAC
\ar^{\co{\reader}(c_1)}[d]
%\ar^{\co{\reader}(c_1)}[drr]
&
\\
&
\ar^{\co{\reader}(c_2)}[d]
& &&  \ar^{\co{\reader}(c_2)}[d] &
%&  \ar^{\co{\reader}(c_2)}[d] &
\\
&
\ar^{\co{\pass}(c_3)}[d]
 &&&
 \ar_{\co{\pass}(c_3)}[dl]
 \ar^{\co{\pass}(c_3)}[dr] &
%&  \ar^{\co{\pass}(c_3)}[d] &
\\
&
\ar^{\co{c_3}(nt)}[d]
 &&
\ar^{\co{c_3}(nt)}[d]
 &  &
\ar^{\co{c_3}(nt)}[d]
 &
\\
&
\ar_{c_1\,nt}[dl]
\ar^{c_2\,nt}[dr]
 &&
%(\dagger)~
\ar^{c_1\,nt}[d]
&  &
%(\dagger)~
\ar^{c_2\,nt}[d]
\\
%(\dagger)~
\ar^{\co{c_1}(\varw)}[d]
& &
%(\dagger)~
\ar^{\co{c_2}(\varw)}[d]
 &
\ar^{\co{c_1}(\varw)}[d]
 &&
\ar^{\co{c_2}(\varw)}[d]
\\
\ar^{c_3\,\varw}[d]
 &&
\ar^{c_3\,\varw}[d]
 &
\ar^{c_3\,\varw}[d]
 &&
\ar^{c_3\,\varw}[d]
\\
\ar^{\co{c_1}(\varz)}[d]
 &&
\ar^{\co{c_2}(\varz)}[d]
 &
\ar^{\co{c_1}(\varz)}[d]
 &&
\ar^{\co{c_2}(\varz)}[d]
\\
\varz \not= \error
&&
\varz \not= \error
&
\varz = \error
&&
\varz = \error
}

\caption{Distinguishing strategy implied by distinguishing formula $\varphi$.
%Annotation $(\dagger)$ indicates where a process takes over as the leading process in the strategy.
}%
\label{fig:game2}
\end{figure}

Now compare the distinguishing strategy generated by $\varphi$, presented in Fig.~\ref{fig:game2},
to the distinguishing strategy for  $\psi$ in the previous section, presented in Fig.~\ref{fig:game}.
The attacks described start in a similar fashion.
In both figures, the system starts up two readers and an ePassport with the same keys.
The specification can only follow by starting two readers with different keys;
hence when the ePassport is initialised is has different keys from at least one of the readers.
We draw attention to two key differences between the strategies presented below.

The first notable difference between the strategies is that Fig.~\ref{fig:game2}
does not require several sub-branches of the strategy involving $\getchallenge$ messages.
Those branches that appear throughout Fig.~\ref{fig:game} are no longer required, since we can directly observe the number of sessions that are present, rather than implicitly controlling the number of sessions by preventing new sessions from initialising.
This difference is beneficial for cleaning up messy strategies, making them easier to explain, and allowing more protocols to be analysed without having to insert constant messages into the model of the protocol.

The second notable difference between the strategies is that, in Fig.~\ref{fig:game2},
the system always leads, including at the point where branching occurs.
By the time the system decides which branch to take, the specification has already committed to a state where the the ePassport has different keys from either the reader on channel $c_1$ or the reader on channel $c_2$.
The strategy of the system is to choose to communicate with the reader which has keys that are different to those of the ePassport.
Thereby the system wins the game since it can reach a state where no error is produced by the chosen ePassport --- a strategy that cannot be matched by the specification.
In contrast, the strategy in Fig.~\ref{fig:game} achieved a similar effect but in a different way: the attacker changes perspective by switching to a view where the leading player is the specification, i.e., what should hypothetically happen. Recall, in Fig.~\ref{fig:game}, after the change of player, the strategy is for the specification to choose to communicate with a reader that should produce an error message; but the system has no way to produce such an error message hence the system clearly is not equivalent to the hypothetical situation modelled by the specification.

%%%%%%%%%%%%%%%%%%%%%%%%%%%%%%%%%%% REVISION

\subsection{Alternative formulas for describing attacks}\label{sec:new}

The formula in the proof of Theorem~\ref{thm:BAC2} is not the only formula describing an attack on unlinkability of the BAC protocol.
Indeed, in Section~\ref{sec:formula} we noted that, when there is an attack, there are infinitely many alternative formulae.
In this section, we introduce and explain another formula that is a little longer than $\varphi$, defined in the previous section, but is useful for situating the attacks discovered.

Consider the $\FM$ formula below.
\[
\varsigma \triangleq
\begin{array}[t]{l}
\diam{\co{\reader}(c_1)}
\diam{\co{\pass}(c_3)}
\Big(
\\\quad
\begin{array}[t]{rl}&
\diam{\co{c_3}(nt)}
\diam{c_1\, nt}
\diam{\co{c_1}(\varw)}
\diam{c_3\,\varw}
\diam{\co{c_3}(\varz)}
\left( \varz \neq \error \right)
\\
\wedge &
\diam{c_1\, nt'}
\diam{\co{c_1}(w')}
\diam{\co{\reader}(c_2)}
\diam{\co{c_3}(nt)}
\diam{c_2\, nt}
\diam{\co{c_2}(\varw)}
\diam{c_3\,\varw}
\diam{\co{c_3}(\varz)}
\left( \varz \neq \error \right)
~\Big)
\end{array}
\end{array}
\]
The above formula also serves as an alternative proof certificate for Theorem~\ref{thm:BAC2}.
Observe that $\SysBAC \vDash \varsigma$,
but $\SpecBAC \nvDash \varsigma$;
and hence, by Theorem~\ref{thm:char}, $\SysBAC \not\sim \SpecBAC$.
The distinguishing strategy described by $\varsigma$ is depicted in Fig.~\ref{fig:game3}.

\begin{figure}
\small
\xymatrix{
&
(\dagger)~\SysBAC
\ar^{\co{\reader}(c_1)}[d]
 &&
&
\SpecBAC
\ar^{\co{\reader}(c_1)}[d]
%\ar^{\co{\reader}(c_1)}[drr]
&
\\
&
\ar^{\co{\pass}(c_3)}[d]
 &&&
 \ar_{\co{\pass}(c_3)}[dl]
 \ar^{\co{\pass}(c_3)}[dr] &
%&  \ar^{\co{\pass}(c_3)}[d] &
\\
&
\ar_{\co{c_3}(nt)}[dl]
\ar^{{c_1}\,nt'}[dr]
 &&
\ar^{\co{c_3}(nt)}[d]
 &  &
\ar^{{c_1}\,nt'}[d]
\\
\ar_{c_1\,nt}[d]
&
 &
\ar^{\co{c_1}(w')}[d]
&
%(\dagger)~
\ar^{c_1\,nt}[d]
&  &
%(\dagger)~
\ar^{\co{c_1}(w')}[d]
\\
%(\dagger)~
\ar^{\co{c_1}(\varw)}[d]
& &
%(\dagger)~
\ar^{\co{\reader}(c_2)}[d]
 &
\ar^{\co{c_1}(\varw)}[d]
 &&
\ar^{\co{\reader}(c_2)}[d]
\\
\ar^{c_3\,\varw}[d]
 &&
\ar^{\co{c_3}(nt)}[d]
 &
\ar^{c_3\,\varw}[d]
 &&
\ar^{\co{c_3}(nt)}[d]
\\
\ar^{\co{c_1}(\varz)}[d]
 &&
\ar^{c_2\,nt}[d]
 &
\ar^{\co{c_1}(\varz)}[d]
 &&
\ar^{c_2\,nt}[d]
\\
\varz \not= \error
&&
\ar^{\co{c_2}(\varw)}[d]
&
\varz = \error
&&
\ar^{\co{c_2}(\varw)}[d]
\\
&&
\ar^{c_3\,\varw}[d]
&&&
\ar^{c_3\,\varw}[d]
\\
&&
\ar^{\co{c_2}(\varz)}[d]
&&&
\ar^{\co{c_2}(\varz)}[d]
\\
&&
\varz \not= \error
&&&
\varz = \error
}

\caption{Another distinguishing strategy $\varsigma$, where readers are created sequentially.}%
\label{fig:game3}
\end{figure}

Both formulas $\varsigma$ and $\varphi$ describe attack strategies on the unlinkability of the BAC protocol.
The advantage of $\varphi$ is that it is more compact. Also, declaring two readers at the top of the strategy and making a choice between them makes clear the key idea: that there is a strategy for testing whether two readers are capable of authenticating the same ePassport.

Formula $\varsigma$ is presented to point out that the two readers need not be simultaneously active.
Observe that in $\varsigma$ the event representing the creation of the second reader, indicated in both strategies as $\cout{\reader}{c_2}$, is pushed later in the attack strategy compared to in $\varphi$. The distinguishing strategy then proceeds as follows.
\begin{enumerate}
\item A run of a reader and ePassport is created. What is important at this point is that the system has an opportunity to start a reader and ePassport run that match, i.e., the reader is loaded with the key of the ePassport in question.
The specification has two choices, which are the first and second respective branches taken by $\SpecBAC$ in Fig.~\ref{fig:game3}:
\begin{enumerate}
\item\label{itema} the specification can start a run with an ePassport with different keys to the reader;
\item\label{itemb} or using the same keys as the reader.
\end{enumerate}

\item
Now consider the two branches of the conjunction in the formula $\varsigma$, which occurs after one reader and ePassport are created.
\begin{enumerate}[(i)]
\item
In the first branch of the conjunction, the second reader is never used. There are only the events required for the run of the ePassport and reader, initially created on channels $c_3$ and $c_1$ respectively, to authenticate.
This branch of the conjunction can be played by the attacker whenever the specification takes its first branch $(\ref{itema})$, where the run of an ePassport involves keys different from those loaded into the reader initially created.

\item
In the second branch of the conjunction, we do make use of a second reader.
To emphasise that no actions of readers need be concurrent for this attack strategy, we first consume the actions of the reader on channel $c_1$ by using the dummy nonce $nt'$ and effectively ignoring the response $w'$ from that reader.
At that point, the second reader is created on channel $c_2$ such that it is again loaded with the same keys.
This allows the second reader to successfully authenticate the ePassport created at the beginning of the attack.
This branch of the conjunction is played in response to the specification taking its second branch $(\ref{itemb})$, where the run of the ePassport and first reader on $c_1$ match, and hence, since no further run may use the same keys, the second reader must fail to authenticate in that idealised setting.
\end{enumerate}
\end{enumerate}

\noindent
The first thing to observe about the attack described by $\varsigma$ is that, since no two reader runs are concurrently active, we know that designing a system to force reader runs involving the same ePassport to be conducted sequentially will not prevent attacks on unlinkability.
We raise this point, since, as explored in related work~\cite{Baelde2021} sequentialising reader sessions does prevent certain kinds of attacks on the unlinkability of certain protocols (such as a previously known attack on the PACE protocol that we will come to in the next Section~\ref{sec:pace}).
In that work, an operator, with symbol $\rotatebox[origin=c]{180}{!}$, is used as a prefix for reader sessions,
which acts like a Kleene star~\cite{Baeten2016}, creating infinitely many copies of a thread sequentially rather than in parallel.
The strategy $\varsigma$, shows clearly that such a strengthened model of the system, where runs of readers with the same ePassport are sequentialised using a Kleene star, will not prevent the attack we describe. Thus strengthening the specification of strong unlinkability only by sequentialising readers will not allow strong unlinkability to be verified. Furthermore, clearly only one ePassport is needed for all attacks we present, so also sequentialising the runs of an ePassport will not affect our analysis.

\begin{figure}[b]
\includegraphics[scale=0.8]{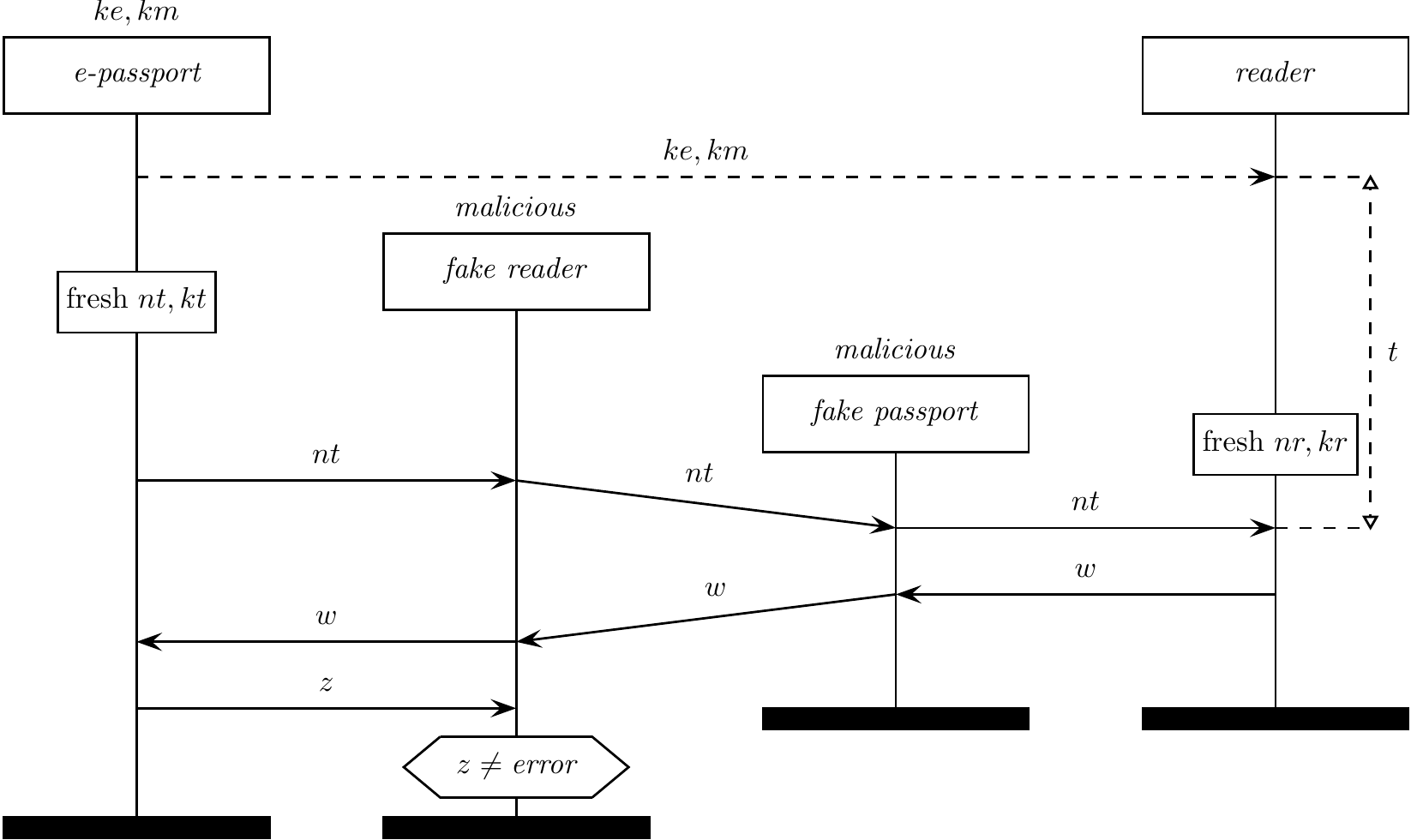}
\caption{Message sequence chart representation of the left branch in Fig.~\ref{fig:game3} for $\SysBAC$.}\label{fig:MSC}
\end{figure}

The second thing to observe about $\varsigma$ is that we can extract message sequence charts (MSCs) from Fig.~\ref{fig:game3} describing the actions of an attacker required to realise either branch of this attack strategy.
Such an MSC is presented in Fig.~\ref{fig:MSC}.
An MSC is fundamentally not designed to represent the branching time and different perspectives in games, however the MSC presented is an accurate depiction of the left branch of the strategy in Fig.~\ref{fig:game3} from the perspective of the system.
Observe that, in Fig.~\ref{fig:MSC}, the reader concerned is loaded with the keys of the ePassport present, corresponding to action $\co{\pass}{(c_1)}$ and $\co{\reader}{(c_3)}$, and then the attacker relays messages between the ePassport and the reader,
where the actions of the honest reader and ePassport correspond to the five actions in the left branch of the strategy,
which eventually result in a non-error message.
 In contrast, if the run of the reader was loaded with different keys, as in the left branch of the specification in Fig.~\ref{fig:game3}, then the final message would be an error. Hence this is the distinguishing strategy, when, according to the specification, the honest session of the reader will be, or should have been with high likelihood, present elsewhere. This is how we interpret what the right branch of the strategy accounts for.

The MSC conveys different information to the depiction of the strategy.
The fake reader and fake ePassport are implicit in this symbolic model, since the attacker is the observer interacting with the observables of the honest participants. However, in the MSC we make the steps that an attacker must perform explicit, which assists us with communicating the attack to stakeholders. ICAO and ISO experts acknowledged they understood the attack presented in this way for the purpose of responsible disclosure~\cite{Delano}.

The MSC diagram in Fig.~\ref{fig:MSC} was easy for students to understand
%, and indeed the students named in the conference version of this paper~\cite{ESORICS},
as the basis of an implementation to demonstrate the feasibility of the attack and for the evaluation of the wider socio-technical context\footnote{
\url{https://github.com/bboyifeel/bac-protocol-unlinkability-exploitation}
Repository with implementations of fake reader and fake ePassport for testing the feasibility of attacks. Maintained by Igor Filimonov.
}.
The dissemination of that broader study is ongoing. We found off-the-shelf readers typically keep the keys of an ePassport loaded until new keys are loaded, making the attack strategy easier than expected by allowing multiple attempts at reidentifying ePassports. The attack may even be triggered inadvertently, simply by having an ePassport in the proximity of a reader loaded with the wrong keys, in fact, this happened live in a lecture where the objective was just to show ePassports could be read by off-the-shelf apps -- the fact that an ePassport in the vicinity inadvertently triggered an error message told us it was not the same as the one the was previously used.
This makes the vulnerability uncovered a real risk, particularly, when the ICAO 9303 standard is being deployed for multiple purposes, not just at airport gates; a risk confounded by a proliferation of powerful RFID readers and covert components, such as 180 micron thick overlay cards~\cite{Anderson2021}.
Stakeholders implementing and deploying ePassport readers should be aware of this risk and possible mitigation strategies.
% destroying the key information within a short time bound.
One may argue that there are more serious side-channel attacks; however, side-channel attacks may be addressed by better implementations of ePassports, while the vulnerability we uncover will remain since it is tied to the specification.

In science in general, interpreting the outputs of a model requires domain expertise.
Security is no exception; hence we expect that there exist further attacks on the BAC protocol that can be uncovered by selecting a distinguishing formula and interpreting it meaningfully in a range of socio-technical scenarios where eDocuments play a role.
For example, for the attacks presented several actions can be permuted, such as the first two actions in $\varsigma$ creating the ePassport and reader channels, without changing fundamental strategy. Some of these permutations may give rise to slightly different scenarios to those described. The essence will however be the same -- the attacker can chose between multiple combinations of devices in order to attempt an authentication session, when, ideally, only one combination of ePassport and reader run should work.
%; whereas, according to the idealised specification, we should not be able to determine whether more than one run of a reader may successfully authenticate an ePassport.

%%%%%%%%%%%%%%%%%%%%%%%%%%% End revision

\subsection{Without else branches there is still an attack.}\label{sec:no-else}

Some papers that analyse the BAC protocol drop the else branch~\cite{Delaune2020}.
This is convenient since not all tools and methods handle else branches.
For example, we may instead try to use the following model of an ePassport in the system and specification processes.
{\small\[
\begin{array}{rl}
\PBAC^{no\_else}(c,ke,km) \triangleq&
\begin{array}[t]{l}
 \nu nt.\cout{c}{nt}.c(y). \\
     %& \clet{m_e,m_m}{\fst{y},\snd{y}} \\
 \match{ \snd{y} = \mac{\fst{y}, km} }
 \match{nt = \fst{\snd{\dec{\fst{y}}{ke}}}}
\\
 \nu kt.\clet{m}{\enc{\pair{nt}{\pair{\fst{\dec{\fst{y}}{ke}}}{kt}}}{ke}}
                         \cout{c}{m, \mac{m, km}}
\end{array}
\end{array}
\]}
However, even for this variant, using bisimilarity, we discover an attack on unlinkability.
To see why, observe that if authentication succeeds, instead of checking the final message sent by the ePassport is not an error, it is sufficient to check that some message is sent at that point. If no message was transmitted, then we can infer the tests on the nonce failed.

A classical $\FM$ formula distinguishing the system from the specification, obtained using the process above to model the ePassport in the scheme at the top of this section, is as follows.
\[
\varphi' \triangleq
\begin{array}[t]{l}
\diam{\co{\reader}(c_1)}
\diam{\co{\reader}(c_2)}
\diam{\co{\pass}(c_3)}
\diam{\co{c_3}(nt)}
\Big(
\\
\qquad
\begin{array}[t]{rl}&
\diam{c_1\, nt}
\diam{\co{c_1}(\varw)}
\diam{c_3\,\varw}
\diam{\co{c_3}(\varz)}\ttt
\\
\wedge &
\diam{c_2\, nt}
\diam{\co{c_2}(\varw)}
\diam{c_3\,\varw}
\diam{\co{c_3}(\varz)}\ttt
~\Big)
\end{array}
\end{array}
\]
Notice the difference compared to $\varphi$ is that $\varphi'$ does not need to test to check that $\varz$ is not an error. Indeed, when we reach the final action of the distinguishing strategy, the specification is unable to perform any action on channel $c_3$, hence cannot simulate the behaviour of the system.

\subsection{From bisimilarity to notions of similarity}\label{sec:sim}
As we have already pointed out for Fig.~\ref{fig:game2},
 the leader in the strategy in Fig.~\ref{fig:game3} is always the same.
Thus we do not require the full power of bisimilarity to discover either of these attack strategies, nor even the strategy described by $\varphi'$ in Sec.~\ref{sec:no-else}.
Indeed, to specify this unlinkability problem, it is sufficient to use a similarity preorder, which is obtained from strong early bisimilarity in Def.~\ref{def:strong} by dropping the requirement that the relation is symmetric.
To be explicit, in the specification of unlinkability, we could employ the following notion of similarity instead of bisimilarity.
\begin{defi}[similarity]\label{def:sim}
A relation between extended processes
$\mathrel{\mathcal{R}}$ is a strong early simulation only if,
whenever $A \mathrel{\mathcal{R}} B$ the following hold:
\begin{itemize}
\item $A$ and $B$ are statically equivalent.
%\footnote{Note that even though similarity defines a preorder, static equivalence is used, rather than ``static implication'' which is the one-sided version of static equivalence, used in related work on applied calculi~\cite{}. A basic reason is that we would yield a definition that is not conservative for the $\pi$-calculus and

%Recent work proposes using ``static implication'' rather than static equivalence. The use of static implication would yield a distinct semantics from the }
\item If $A \lts{\pi} A'$ there exists $B'$ such that $B \lts{\pi}
B'$ and $A' \mathrel{\mathcal{R}} B'$.
\end{itemize}
Process $P$ is simulated by processes $Q$, written $P \preceq Q$, whenever there exists a strong early simulation $\mathcal{R}$ such that $P \mathrel{\mathcal{R}} Q$.
\end{defi}
In both the old scheme for unlinkability, as communicated in CSF'10, and in our new updated scheme introduced in this section, it is trivial that the process modelling the specification, e.g., $\SpecBAC$, is simulated by the system process $\SysBAC$, i.e., $\SpecBAC \preceq \SysBAC$ holds. To see why, intuitively, observe that if we have full control of the system we can always make it behave like the specification by never using the same ePassport twice thus  the specification can be simulated by the system. Hence, the problem of checking unlinkability, when cast as a similarity problem, can be formulated by the problem of checking whether $\SysBAC \preceq \SpecBAC$, i.e., checking whether any observable behaviours the system can perform can be simulated by behaviours of the specification. Stated as a theorem, we have the following which tightens Theorem~\ref{thm:BAC}, where the proof follows from the same strategy as presented in Fig.~\ref{fig:game2}.
\begin{thm}
$\SysBAC \preceq \SpecBAC$ does \textbf{not} hold.
\end{thm}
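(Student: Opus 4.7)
The plan is to recycle the distinguishing strategy depicted in Fig.~\ref{fig:game2} essentially verbatim, exploiting the observation that $\SysBAC$ is the leader at every node of that strategy tree, so the failure it witnesses is already a simulation failure rather than merely a bisimulation failure. I would proceed by contradiction: assume there is a strong early simulation $\mathcal{R}$ with $\SysBAC \mathrel{\mathcal{R}} \SpecBAC$, trace the four opening actions $\co{\reader}(c_1), \co{\reader}(c_2), \co{\pass}(c_3), \co{c_3}(nt)$ from $\SysBAC$, and use the simulation clause of Def.~\ref{def:sim} to obtain a matching trace of $\SpecBAC$, yielding related states $A \mathrel{\mathcal{R}} B$ in which two reader sessions on $c_1, c_2$ and one passport session on $c_3$ are all open, with the system's versions all sharing a single key pair $(ke_1, km_1)$.

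Next I would exploit the branching at $A$ corresponding to the two conjuncts of the formula $\varphi$ from the proof of Theorem~\ref{thm:BAC2}: from $A$ there are two independent four-step continuations, one routing $nt$ via reader $c_1$ back into passport $c_3$, the other via reader $c_2$, each culminating in an output transition $\co{c_3}(\varz)$ with $\varz$ bound in $A$'s frame to a genuine authenticated ciphertext rather than $\error$. Applying the simulation clause twice to $A \mathrel{\mathcal{R}} B$, both continuations must be matched from the same state $B$, and each pair of resulting states must satisfy static equivalence. The matching labels are produced by $\SpecBAC$ in either branch regardless of the spec's key configuration, since the else branch of $\PBAC$ still outputs on the same channel (merely carrying the payload $\error$); so the obligation reduces to the requirement that in $B$ the value substituted for $\varz$ be distinguishable from $\error$ under the frame in \emph{both} continuations simultaneously.

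The structural contradiction then follows from the shape of $\SpecBAC$: every iteration of the outer replication binds a fresh key pair that is used by exactly one $\co{\reader}$ output and one $\co{\pass}$ output, so the two reader channels $c_1, c_2$ in $B$ necessarily arise from two distinct iterations and hold independent keys, while $c_3$'s keys match those of at most one iteration. Consequently $c_3$ shares its keys with at most one of $c_1, c_2$ in $B$; in the opposite branch the passport executes its else clause and emits $\varz =_E \error$, which violates static equivalence with the system state where $\varz \neq_E \error$, contradicting the simulation hypothesis. The main care is in checking the static-equivalence witness — a public equality test $\varz = \error$ under the accumulated frame — and in confirming that the required output label on $c_3$ remains enabled when the guard fails; both points follow directly from the definition of $\PBAC$ and from the analysis already carried out for Fig.~\ref{fig:game2}.
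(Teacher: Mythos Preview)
Your proposal is correct and follows essentially the same approach as the paper, which simply states that the proof follows from the strategy presented in Fig.~\ref{fig:game2}. You have spelled out in detail what the paper leaves implicit: that because $\SysBAC$ leads at every node of that strategy, the distinguishing game already witnesses a failure of the one-sided simulation clause, and the branching at the conjunction forces $\SpecBAC$ to match both reader choices from a single state in which the passport's keys can agree with at most one of them.
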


It is helpful to know that similarity is sufficient for this problem, since similarity preorders have compelling attacker models, e.g., in terms of probabilistic testing semantics~\cite{Deng2008}. Indeed, it is standard in cryptography to assume that an adversary has the power of a probabilistic polynominal-time Turing machine, even if the protocol does not contain probabilistic choices, which is reflected in power made available to the attacker by adopting similarity.

\begin{rem}
There are several variants of similarity in the linear-time / branching-time spectrum~\cite{Glabbeek2001}, which, as touched on in Sec.~\ref{sec:jos}, could be compelling choices for modelling the capabilities of attackers.
%Moving along the linear-time/branching-time spectrum also allows us to design fragments of variants of similarity. For example, it is easy to adapt the proof of Theorem~\ref{thm:char}, to show that by restricting $\FM$ to allow negation on equality and not on modalities, we characterises the notion of similarity as defined above.
We argue that a more broadly applicable design decision would be to employ a stronger notion of similarity called \textit{failure similarity}. For example,
failure similarity can distinguish process
${\bang \cout{a}{go}.\cout{a}{error}} \cpar {\bang \cout{a}{go}}$
from
$\bang \cout{a}{go}.\cout{a}{error}$,
whereas similarity  in Definition~\ref{def:sim} cannot distinguish these processes.
 Thus we can test that an event does not happen, e.g., by using a timeout~\cite{Glabbeek2020}, which is a distinction that cannot be made using similarity.

Such additional expressive power is not required in order to detect unlinkability attacks on the BAC protocol, but might be useful in some scenarios where, for example, the attacker can explicitly observe an error due to the presence of a message, but cannot explicitly observe a success. In such scenarios, a success can be inferred by observing that an error does not occur within an expected time window.
For example, the ICAO 9303 specification of the PACE protocol~\cite{MRTD} only requires the ePassport to send an error message at the end of an unsuccessful authentication session, but does not require it to send any message if the session results in authentication being successful; hence successful authentication from the perspective of an ePassport can be inferred by the absence of an error message at the end of the session within an expected time window.
However, when we model the PACE protocol  in Sec.~\ref{sec:pace}, we use explicit observables for success so as to align with related work, thereby avoiding unnecessary debate about whether our attacks are particular to how we model the PACE protocol (they are not).
Thus, it is safest to verify unlinkability with respect to bisimilarity, which covers all such attacks, including the richer strategy in Fig.~\ref{fig:game}. Recall, in Sec.~\ref{sec:symbolic}, that, by using domain knowledge, we were able to assign a practical meaning to the strategy in Fig.~\ref{fig:game}, which failure similarity does not detect.
% and expanded on further in the conference version of this paper~\cite{ESORICS}.
\end{rem}

\section{Unlinkability of the PACE protocol}\label{sec:pace}

We address the public communication from the office of the secretary general of ICAO, discussed in the introduction, which challenges whether the unlinkability vulnerability discovered on the BAC protocol is valid for more recent versions of the ICAO 9303 ePassport standard~\cite{MRTD}.
This is a reasonable question, since the $7^{th}$ edition of the ICAO 9303 standard recommends the Password Authenticated Connection Establishment protocol (PACE), as a more secure alternative to BAC\@.

The PACE protocol does improve on the security of the BAC protocol, making attacks giving access to private data stored on the chip more difficult. For example, the PACE protocol satisfies \textit{forward secrecy}~\cite{Bender2009,Coron12}; whereas the BAC protocol does not, that is: if an attacker intercepts ciphertexts in anticipation of, in the future, discovering the key for the ePassport, then she cannot go back and use the key to discover the session key and reveal the encrypted secrets from those old runs of the protocol.

The PACE protocol also eliminates unlinkability vulnerabilities caused by using different errors when the protocol fails for different reasons, as was the case for an implementation of the BAC protocol for French ePassports. Thus we believe that some of the most serious types of attack on unlinkability exploiting the BAC protocol have been addressed in the PACE protocol, where such attacks on the BAC protocol allow an ePassport holder with an implementation interpreting the specification in a particular way to be tracked forever after the messages from one session with a trusted reader have been intercepted.

The clause of the standard that restricts the use of error messages is the following line in section 4.4.2 of part 11 of the ICAO 9303 standard.
\begin{quote}
``An eMRTD chip that supports PACE SHALL respond to unauthenticated read attempts (including selection of (protected)
files in the LDS) with ``Security status not satisfied'' (0x6982).'' % chktex 29
\end{quote}
%Thus we include a 0x6982 error message in the model of PACE, where the ePassport fails authentication.
The above explicit statement is an improvement over the specification of the BAC protocol;
however there are still unlinkability vulnerabilities in the PACE protocol as specified in the ICAO 9303 standard.
Similarly to the vulnerability in the BAC protocol, studied throughout previous sections, there are vulnerabilities in the PACE protocol valid due to differences between a successful and failed authentication session observable to an attacker,
such as the presence of the error message highlighted above.
We formally analyse this vulnerability using bisimilarity following our revised approach to unlinkability justified in Sec.~\ref{sec:II}.

\subsection{The PACE protocol}

\begin{figure}[b]
\centerline{
\includegraphics[width=0.9\textwidth]{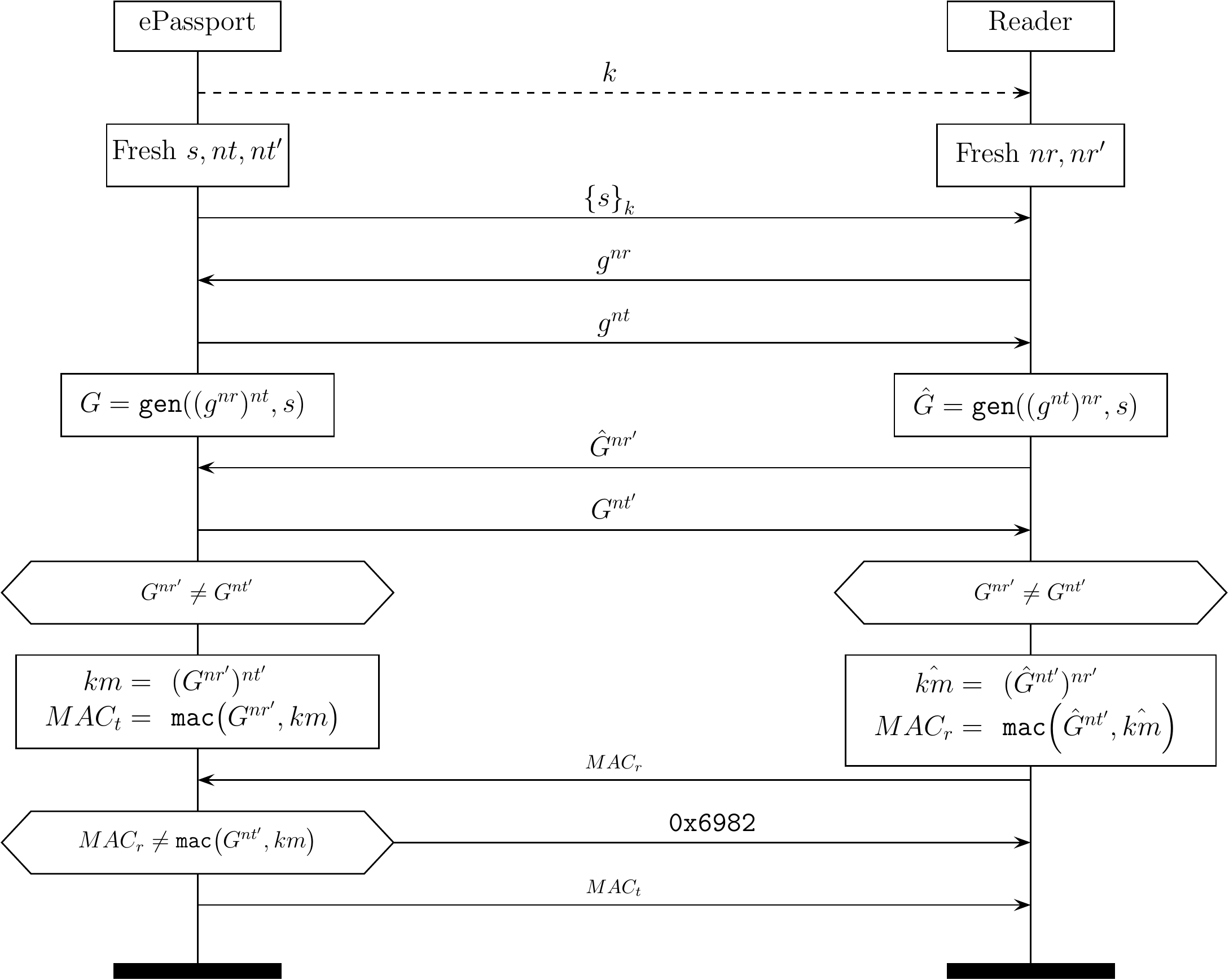}}
%\caption{The ICAO 9303 basic access control mechanism.}
    \caption{The PACE protocol, using a generic mapping based on Diffie-Hellman Key Agreement.
% the attacker may intercept.
%\ZS{Reviewer 3: `get' is ambiguous, use   `get\_challenge'}
}%
\label{fig:pace}
%\vspace{-5ex}
\end{figure}

There are multiple ways to interpret the PACE protocol since it has various operational modes
that permit a number of cryptographic primitives to be used at each stage for establishing shared keys.
We model here the generic mapping which uses a Diffie-Hellman key exchange.
The message exchange, presented in Fig.~\ref{fig:pace}, follows closely related work communicated in the Journal of Computer Security~\cite{Hirschi2019}, thereby avoiding unnecessary debate on how the protocol is interpreted.

The message flow in Fig.~\ref{fig:pace} is as follows.
\begin{enumerate}
\item The ePassport shares information for generating a key $k$ with the reader, usually via an OCR session with the biometric page of the ePassport.
This is represented by the dotted line at the top of the figure.
PACE uses better sources of randomness than BAC, however this does not affect our unlinkability analysis.
\item The ePassport key uses the key $k$ to transmit an encrypted nonce $\enc{s}{k}$ to the reader.
\item The ePassport and reader employ one of several operational modes to create additional randomness for each session. We model the ``generic mapping'' operational mode which employs a Diffie-Hellman handshake.
This information is used to generate shared key $G = \gen{ {(g^{nr})}^{nt}, s}$, where $\gen{\cdot, \cdot}$ is key generation function. Notice $G= \hat{G}$ in Fig.~\ref{fig:pace}.
\item A Diffie-Hellman handshake is performed using $G$ as the generator, which is used to compute a MAC key $km$. Also an  encryption key for the secure messaging phase, which we do not model, is generated at this point.
Again $km = \hat{km}$.
%\RH{following the spec we $km$ should also use $G$.}
%\item
The checks $G^{nr} \neq G^{nt}$ at this point avoid reflection attacks, where an ePassport or reader is used to authenticate itself.
\item Finally the ePassport and reader exchange and verify MACs, using the MAC key $km$, authenticating the public keys exchanged in the previous step.
If authentication fails at this point an error message is produced.
Notice we include the above mentioned error message (0x6982) if authentication fails at the end of the protocol. % chktex 29
\end{enumerate}

\newcommand{\PPACE}{P_{\textit{PACE}}}
\newcommand{\VPACE}{V_{\textit{PACE}}}

\subsection{PACE in the applied \texorpdfstring{$\pi$}{pi}-calculus}
For the PACE protocol we require an extended message theory.
We require symmetric encryption, where decryption is not detectable (modelled by the same equations as employed for the BAC protocol).
For the Diffie-Hellman exchanges we require exponentiation and also a key generating map, which acts like a two parameter hash function.
As for the BAC protocol, MACs are modelled as a two parameter hash function.
This message theory is presented below.
\begin{gather*}
\begin{array}{rlr}
M, N \Coloneqq& x & \mbox{variable} \\
          \mid& \mac{M, N} & \mbox{mac} \\
          \mid& \gen{M, N} & \mbox{generator} \\
	  \mid& M^{N}      & \mbox{exponentiation} \\
          %\mid& \left\langle M, N\right\rangle & \mbox{pair} \\
          %\mid& \fst{M} & \mbox{left} \\
          %\mid& \snd{M} & \mbox{right} \\
          \mid& \enc{M}{N} &\!\!\!\! \mbox{encryption} \\
          \mid& \dec{M}{N} & \mbox{decryption}
\end{array}
\qquad\qquad
\begin{array}{c}
%\fst{\pair{M}{N}} =_{E} M
%\quad
%\snd{\pair{M}{N}} =_{E} N
%\\
\dec{\enc{M}{K}}{K} =_{E'} M
\\[12pt]
\enc{\dec{M}{K}}{K} =_{E'} M
\\[12pt]
{(M^{N})}^K =_{E'} {(M^{K})}^N
\end{array}
\end{gather*}

The ePassport and reader for the PACE protocol can be modelled in the applied $\pi$-calculus as follows.
\begin{gather*}
\PPACE(c, k)
\triangleq
 \begin{array}[t]{l}
 \nu s.\cout{c}{\enc{s}{k}}.
 c(x).\cout{c}{g^{nt}}. \\
 \clet{G}{\gen{s, x^{nt}}} \\
 c(y).\nu nt'.\cout{c}{G^{nt'}} \\
 \match{G^{nt'} \neq y}
 c(z).\\
 \clet{km}{z^{nt'} } \\
 \texttt{if}\,z = \mac{G^{nt'},km}\,
 \\\texttt{then}\,
 \cout{c}{\mac{z, km}}\, \\
 \texttt{else}\,\cout{c}{\error}
 \end{array}
\qquad
\VPACE(c, k)
\triangleq
 \begin{array}[t]{l}
 c(x).\cout{c}{g^{nr}}.c(y). \\
 \clet{G}{\gen{\dec{x}{k}, y^{nr}}}\\
 \nu nr'.\cout{c}{G^{nr'}}.c(z). \\
 \match{G^{nr'} \neq z}
 \clet{km}{z^{nr'} } \\
 \cout{c}{\mac{z, km}} \\
 c(m).
 \match{m = \mac{G^{nr'}, km}}
 c(n)
 \end{array}
\end{gather*}
Notice only the ePassport features an error message if authentication fails in the final step.
Also, we add a dummy event
 $c(n)$ at the end of the reader session, for the sake of modelling that a reader will proceed to do something after successfully authenticating. This is to align with related work in communicated in the Journal of Computer Security~\cite{Hirschi2019}, in order to facilitate a comparison of results obtained.

Using the above processes, and our revised scheme for unlinkability in the previous section
we obtain the following result confirming there are attacks on the unlinkability of PACE\@.
\begin{thm}\label{thm:pace}
$\SysPACE \not\sim \SpecPACE$, where
\[
\begin{array}{rl}
\SysPACE \triangleq
& \bang\mathopen{\nu k.} \bang\left(
    \mathopen{\nu c.}\cout{\pass}{c}.\PPACE(c, k)   \cpar
    \mathopen{\nu c.}\cout{\reader}{c}.\VPACE(c, k) \right)
\\
\SpecPACE \triangleq
& \bang\mathopen{\nu {k}.}\left(
    \mathopen{\nu c.}\cout{\pass}{c}.\PPACE(c, {k})   \cpar
    \mathopen{\nu c.}\cout{\reader}{c}.\VPACE(c, {k}) \right)
\end{array}
\]
\end{thm}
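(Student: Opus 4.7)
The plan is to prove $\SysPACE \not\sim \SpecPACE$ exactly as Theorem~\ref{thm:BAC2} proves the analogous inequivalence for BAC: exhibit a classical $\FM$ distinguishing formula $\varphi_{\text{PACE}}$ with $\SysPACE \vDash \varphi_{\text{PACE}}$ and $\SpecPACE \nvDash \varphi_{\text{PACE}}$, and then invoke Theorem~\ref{thm:char} in its contrapositive form. The skeleton of $\varphi_{\text{PACE}}$ mirrors that of $\varphi$ in Section~\ref{sec:II}: three diamond modalities $\diam{\co{\reader}(c_1)}\diam{\co{\reader}(c_2)}\diam{\co{\pass}(c_3)}$ spawn two readers and one ePassport under fresh public channels, followed by a conjunction $\chi(c_1) \wedge \chi(c_2)$ in which each $\chi(c_i)$ is a chain of diamonds that performs a full man-in-the-middle relay of the seven-message PACE handshake between the ePassport on $c_3$ and the reader on $c_i$, and closes with a diamond on the ePassport's final output together with the atomic proposition that this output is distinct from $\error$.

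First I would verify $\SysPACE \vDash \varphi_{\text{PACE}}$. The outermost replicand of $\SysPACE$ supplies a single inner copy $\bang\left(\mathopen{\nu c.}\cout{\pass}{c}.\PPACE(c, k) \cpar \mathopen{\nu c.}\cout{\reader}{c}.\VPACE(c, k)\right)$ under one key $k$, and its inner replication then provides a second reader sharing the same $k$. With all three sessions keyed identically, the relay satisfies every guard of $\PPACE$ and $\VPACE$ along both branches of the conjunction; in particular, $\dec{\enc{s}{k}}{k} \mathrel{=_{E'}} s$ gives $G_V \mathrel{=_{E'}} G_P$, the Diffie-Hellman commutation ${(g^{nr})}^{nt} \mathrel{=_{E'}} {(g^{nt})}^{nr}$ makes the reader's and ePassport's exponentiated generators agree, and the MAC test $z = \mac{G^{nt'}, km}$ on the ePassport side evaluates to true, so the final output is $\mac{z, km}$ rather than $\error$.

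The main obstacle is the dual check $\SpecPACE \nvDash \varphi_{\text{PACE}}$, and this is the step where the PACE-specific algebra enters. Each outer replicand of $\SpecPACE$ introduces a fresh key, so after the three opening actions the specification has committed to two independent keys $k_1, k_2$, one for each reader-ePassport pair, and the ePassport on $c_3$ must belong to the same pair as exactly one reader (say the one on $c_1$). Then for the relay on $c_2$ the reader computes $G_V = \gen{\dec{\enc{s}{k_1}}{k_2}, {(g^{nt})}^{nr}}$ whereas the ePassport computes $G_P = \gen{s, {(g^{nr})}^{nt}}$; since $\dec{\enc{s}{k_1}}{k_2}$ is irreducible modulo $E'$ and not $E'$-equal to $s$, these generators differ, hence so do the derived MAC keys and the reader's final MAC, so the ePassport's guard $z = \mac{G^{nt'}, km}$ fails and only the $\error$ branch is reachable — invalidating $\chi(c_2)$. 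I still have to rule out the remaining responses by $\SpecPACE$, in particular the one where the ePassport is drawn from a third, fresh outer copy so that both readers disagree with it; but there the same argument fires symmetrically for both $\chi(c_1)$ and $\chi(c_2)$. The bulk of the work therefore reduces to a careful symbolic computation over $E'$ showing that any mismatch between the ePassport's $k$ and the reader's key forces the $\error$ branch, precisely along the lines of the deducibility and satisfaction calculations already carried out for BAC in Section~\ref{sec:symbolic}.
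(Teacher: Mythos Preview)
Your proposal is correct and takes essentially the same approach as the paper: the paper's proof exhibits precisely such a distinguishing $\FM$ formula $\xi$ with the skeleton $\diam{\co{\reader}(c_1)}\diam{\co{\reader}(c_2)}\diam{\co{\pass}(c_3)}\diam{\co{c_3}(t)}$ followed by a conjunction of two relay branches ending in $\varz \neq \error$, and then invokes Theorem~\ref{thm:char}. The only cosmetic difference is that the paper factors the first ePassport output $\diam{\co{c_3}(t)}$ out before the conjunction rather than duplicating it inside each $\chi(c_i)$, but this does not affect the argument.
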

\begin{proof}
Consider the following $\FM$ formula.
\[
\xi
\triangleq
\begin{array}[t]{l}
\diam{\co{\reader}(c_1)}
\diam{\co{\reader}(c_2)}
\diam{\co{\pass}(c_3)}
\diam{\co{c_3}(t)}
\Big(
\\
\begin{array}[t]{rl}&
\diam{c_1\, t}
\diam{\co{c_1}(u)}
\diam{c_3\,u}
\diam{\co{c_3}(v)}
\diam{c_1\,v}
\diam{\co{c_1}(w)}\\&\quad
\diam{c_3\,w}
\diam{\co{c_3}(x)}
\diam{c_1\,x}
\diam{\co{c_1}(y)}
\diam{c_3\,y}
\diam{\co{c_3}(\varz)}
\left( \varz \neq \error \right)
\\[5pt]
\wedge \!\!\!\!\!&
\diam{c_2\, t}
\diam{\co{c_2}(u)}
\diam{c_3\,u}
\diam{\co{c_3}(v)}
\diam{c_2\,v}
\diam{\co{c_2}(w)}\\&\quad
\diam{c_3\,w}
\diam{\co{c_3}(x)}
\diam{c_2\,x}
\diam{\co{c_2}(y)}
\diam{c_3\,y}
\diam{\co{c_3}(\varz)}
\left( \varz \neq \error \right)
~\Big)
\end{array}
\end{array}
\]
Since $\SysPACE \vDash \xi$,
but $\SpecPACE \nvDash \xi$,
 by Theorem~\ref{thm:char}, $\SysPACE \not\sim \SpecPACE$.
\end{proof}
The formula $\xi$ proving that unlinkability does not hold for the PACE protocol
follows a similar pattern to the formula $\varphi$, used in the previous section to certify that BAC fails unlinkability.
In this strategy, the system starts two readers and an ePassport with the same keys.
The specification can only follow using a strategy where one of the readers will fail authentication.
To win this game, the system simply chooses to authenticate with the reader that is expected to fail in the specification.
That reader will obviously successfully authenticate in the system, thereby concluding our distinguishing strategy.

\subsection{Another attack strategy from related work.}
Infinitely many distinguishing strategies exist violating the unlinkability of the PACE protocol.
Indeed related work~\cite{Hirschi2019} discovered a violation of the unlinkability of the PACE protocol, that can be described as a trace.
An attack on unlinkability formulated using trace equivalence is always also an attack on bisimilarity.
The added value that our methodology brings to that attack, is that we can certify their attack by using the following classical $\FM$ formula, describing the distinguishing trace discovered in the above mentioned related work.
The $\FM$ formulae characterising trace equivalence are those consisting of diamond modalities only ending with some formula that does not involve modalities.
\[
\vartheta
\triangleq
\begin{array}[t]{l}
\diam{\co{\reader}(c_1)}
\diam{\co{\reader}(c_2)}
\diam{\co{\pass}(c_3)}
\diam{\co{c_3}(t)}
\diam{c_1\, t}
\diam{c_2\, t}
\\
\diam{\co{c_1}(u_1)}
\diam{\co{c_2}(u_2)}
\diam{c_2\,u_1}
\diam{c_1\,u_2}
\diam{\co{c_1}(v_1)}
\diam{\co{c_2}(v_2)}
\diam{c_2\,v_1}
\diam{c_1\,v_2}
\\
\diam{\co{c_1}(w_1)}
\diam{\co{c_2}(w_2)}
\diam{c_2\,w_1}
\diam{c_1\,w_2}
\diam{\co{c_1}(x_1)}
\diam{\co{c_2}(x_2)}
\diam{c_2\,x_1}
\diam{c_1\,x_2}
\\
\diam{\co{c_1}(y_1)}
\diam{\co{c_2}(y_2)}
\diam{c_2\,y_1}
\diam{c_1\,y_2}
\diam{\co{c_1}(z_1)}
\diam{\co{c_2}(z_2)}
\diam{c_2\,z_1}
\diam{c_2\,m}\ttt
\end{array}
\]
In the strategy described by the formula above, the encrypted nonce sent by the ePassport at the beginning of the protocol
is replayed to two different readers.
The two readers are then used to authenticate each other, exploiting the fact that the protocol is symmetric in the role of the reader and ePassport.
Authentication will only be successful if both readers have the same keys, otherwise either reader will fail the check on the MAC at the final step of the PACE protocol.
Thus, assuming that a reader does something after authentication, we learn whether both readers talked with the same ePassport.

The attack $\vartheta$ is quite different from attack $\xi$. Notice $\vartheta$ requires two readers to be fully active during the attack, not simply present in principle, as a choice in a game, suggesting that it may be more difficult to exploit, despite being described as a trace.
Also, notice the attack can be mitigated in several ways, e.g., by initially responding to the ePassport in the same way regardless of whether it authenticates or not, hence, since an ePassport cannot also be authenticated at the same time, there is no way to continue with the secure messaging phase.
%Note this fix is within the scope of the ICAO 9303 standard;
An alternative fix preventing this attack is proposed in related work~\cite{Hirschi2019}, modifying the protocol
such that additional role specific information is added to the handshake.
However, these fixes will not mitigate the more serious problem described in $\xi$, so do not really improve unlinkability.
%\SM{So they claim to have a fix for their attack $\vartheta$ and then we say that it is not a fix for $\vartheta$? That's strange. Do we mean to say that although it's a fix for $\vartheta$ it doesn't fix our attack $\xi$?}

%Hence such probabilistic attacks, count as attacks and get worse as the observable diversity of implementations increase.

\section{Related and future work}\label{sec:related}

A closely related paper, that is not already covered by remarks in the body of the paper was communicated in S\&P'18~\cite{deepsec}. That paper announces the discovery of attacks on the BAC protocol using the bounded trace equivalence checker DeepSec, but without further discussion.
An interesting difference between that paper and the current work is that, while we build on the  original formulation of unlinkability, as communicated in CSF'10~\cite{Arapinis2010}; the S\&P'18 paper proposes another model where, instead of using a specification process, unlinkability is modelled in terms of two systems where the number of identical users in the system differ.
In their alternative model,
%in the two session case,
% (the unbounded case in their approach is not immediately obvious),
one process models two sessions featuring the same ePassport twice, which is compared to another process featuring two sessions each featuring a different ePassport.
That is, unlinkability is formulated such that the following two processes are compared, using trace equivalence.
\[
\textit{Diff}
\triangleq
\begin{array}[t]{l}\mathopen{\nu ke. \nu km.}\left( \VBAC(c_r,ke,km) \cpar \PBAC(c_p,ke,km) \right) \cpar \\ \mathopen{\nu ke. \nu km.}\left( \VBAC(c_r,ke,km) \cpar \PBAC(c_p,ke,km) \right) \end{array}
\]
\[
\textit{Same}
\triangleq
\begin{array}[t]{l}\mathopen{\nu ke. \nu km.}\bigl(\VBAC(c_r,ke,km) \cpar \PBAC(c_p,ke,km) \cpar \\\phantom{\mathopen{\nu ke. \nu km.xx}}\VBAC(c_r,ke,km) \cpar \PBAC(c_p,ke,km)  \bigr) \end{array}
\]
In the above $c_r$ and $c_p$ are used as fixed channels for communications with all ePassports or all readers respectively.
Attack traces discovered using the above method, communicated in S\&P'18,
%presented in Fig.~\ref{fig:deepSec},
can be used to confirm that two sessions are certainly not with the same ePassport. In particular, consider the following formula describing a trace that holds for the first process but does not hold for the second process.
\[
\textit{Diff}
\vDash
\begin{array}[t]{l}
\diam{\co{c_p}(n_1)}
\diam{\co{c_p}(n_2)}
\diam{c_r\,n_2}
\diam{c_r\,n_2}
\\
\qquad
\diam{\co{c_r}(m_1)}
\diam{\co{c_r}(m_2)}
\diam{c_p\,m_1}
\diam{c_p\,m_2}
\diam{\co{c_p}(e_1)}
\diam{\co{c_p}(e_2)}
\left(
e_1 = \error
\wedge
e_2 = \error
\right)
\end{array}
\]
In the first line of the above trace, the first two messages correspond to sending two nonces, and only the second nonce is fed as an input to both readers. On the second line of the formula, the protocol continues for both sessions and the protocol ends with both ePassports sending error messages.
If both readers are using the same nonce, then both ePassports can only send error messages at the last step if both ePassports are different; if both were the same ePassport then one of the two sessions would successfully authenticate, hence there could not have been two error messages. I.e., it is impossible for $\textit{Same}$ to satisfy the above formula.

The limitation we see is the above mentioned approach communicated in S\&P'18 discovers attack traces that cannot be used to positively confirm that two sessions are with the same ePassport.
What we mean is that there is no trace that holds for the process $\textit{Same}$ that does not hold for the process $\textit{Diff}$; but for a trace-like attack on unlinkability we should surely be able to provide a trace that links two sessions. Thus we should be careful interpreting the above result --- it does not mean that the above method discovers an attack on unlinkability that is in the form of a trace.

Further discussion on the above model of unlinkability appears in the conference version of this paper~\cite{ESORICS}, where it is clarified that the above limitation is due to modelling decisions and is not a feature of the DeepSec tool that the S\&P'18 paper showcases. The DeepSec tool can also be used to verify finite  formulations of the unlinkability problem using our preferred ``system v.s.\ specification'' approach --- in which case DeepSec discovers no attacks that are in the form of a trace for the BAC protocol. In particular, DeepSec can verify that \textit{Diff} is trace equivalent to the following process where either there is a choice between starting the second session with the same keys as the first or with the new keys, which is a bounded approximation of the system following established schemes for unlinkability.
\[
\mathopen{\nu ke. \nu km.}\Big(
  \begin{array}[t]{l}
  \VBAC(c_r,ke,km) \cpar \PBAC(c_p,ke,km) \cpar
  \\
  \big( \begin{array}[t]{l}
              \left( \VBAC(c_r,ke,km) \cpar \PBAC(c_p,ke,km) \right) +  \\
              \mathopen{\nu·ke,km.}\left(\VBAC(c_r,ke,km) \cpar \PBAC(c_p,ke,km) \right)~\big)~\Big)
           \end{array}
  \end{array}
\]
In the above process, $+$ is non-deterministic choice, which is easy to add to the applied $\pi$-calculus.

An approach similar to the approach communicated in S\&P'18, where two systems are compared in which users in a system are permuted, has been
thoroughly investigated and demonstrated to be the preferable approach for formulating voter privacy, which is a property of eVoting systems~\cite{Delaune2009}. We should clarify that we are not arguing against using a ``permutations of a system'' approach to voter privacy. What we are arguing is that the ``system vs.\ specification'' approach adopted in the current paper and in the CSF'10 paper is appropriate for unlinkability, since if an attack trace is discovered the attack trace will be able to positively confirm that two sessions are with the same ePassport, i.e., the sessions will be linked.
%Notice that the limitation highlighted above, where a property is violated by only discovering attacks confirming two users are different, is not an issue for voter privacy since all voters are different already.

%\begin{figure}
%\includegraphics[scale=0.5]{mscs/deepSec}
%\caption{}\label{fig:deepSec}
%\end{figure}

%The unlinkability of BAC has been analysed in several other papers... List them all briefly.

%\subsection{Balancing bisimilarity and trace equivalence}

\subsection{
A preliminary discussion on mitigation strategies
}\label{ref:unlink}
We summarise here three quite different mitigation strategies and present some preliminary findings.

\subsubsection{Timeouts}
A mitigation strategy, which we have recommended to stakeholders, is to guarantee that implementations of readers only hold ePassport keys for a short period of time.
This would render the reader useless in an attack strategy where the goal is to reidentify someone in the future.
In Fig.~\ref{fig:MSC}, the time to keep small is indicated by $t$: the time between the reader being prepared with they keys of a particular passport and the RFID session being triggered.
%making the attack strategy implementable only when the ePassport holder was alread in the vicinity of the reader.
This eliminates use cases where an ePassport holder who has recently passed through a checkpoint is reidentified, but might not mitigate other use cases. For example, a user may be expected to approach a reader, e.g., to provide their identity at a service desk, and hence the reader is loaded with a particular key in a time window. In that case, an attacker may attempt to find who was intended to be at the service desk without them being present. An extensive study of use cases emerging we push to future work.

The above mitigation strategy can be modelled by imposing causal dependencies in our model.
A timeout built into the reader would allow a single run of an ePassport to occur before the reader session times out. Thus the creation of a new run of an ePassport causally depends on any events of an ongoing run of a reader involving the same ePassport, which must first be completed.
In addition, we can assume that ePassport runs themselves are sequentially ordered, by the nature of the chip.
We may also assume that the readers are loaded with the keys of a single ePassport sequentially, e.g., along a path through checkpoints.
Surprisingly, forcing all these causal assumptions would still not be enough to prevent the attack $\varsigma$ on unlinkability presented in Section~\ref{sec:new}, since we have not stipulated that a single ePassport run can be held open for a length of time sufficient for the attacker to choose between relaying message to one of two readers in a sequence, both of which may successfully authenticate the ePassport.
Thus there must be a fourth causal dependency imposed to prevent attacks, which is not enforced by a timeout for the reader:
we should prevent a new reader from being initiated before any run of an ePassport involving the same keys terminates. This does not model a timeout imposed by the reader, but instead a timeout imposed in practice by how long an attacker can keep a device in the vicinity of an ePassport holder.

If we make all four causal assumptions in the paragraph above, the effect is each run of an ePassport and its corresponding reader session are sequentialised together, in parallel.
That is, a run of an ePassport and a reader must both be used entirely before any new run can be created involving the keys of the same ePassport.
This can be modelled by making use of a Kleene star operator $\rotatebox[origin=c]{180}{!}$, proposed in related work~\cite{Baelde2021}, in the following alternative scheme for the system.
\[
\bang\mathopen{\nu \vec{k}.} \mathopen{\rotatebox[origin=c]{180}{!}}\left(\mathopen{\nu c.}\cout{\pass}{c}.\ePassport(c,\vec{k}) \cpar \mathopen{\nu c.}\cout{\reader}{c}.\eReader(c,\vec{k})\right)
\]
Our hypothesis is that the above process is bisimilar to the scheme for the specification for both the BAC and PACE protocols.
An account of the semantics of $\rotatebox[origin=c]{180}{!}$ in this context and a proof of this claim are pushed to future work.

\subsubsection{Obscuring messages.}

Another alternative is to probabilistically encrypt the error message, or produce random noise when the ePassport fails to authenticate the reader. Note this is within the scope of the BAC protocol specification, since the specification does not fix the form of the error message. Nevertheless, to our best knowledge, real ePassports implementing the BAC protocol send errors as constant plaintext messages.
This mitigation strategy would only be effective for use cases where the attacker cannot observe the consequences of using an eDocument. For example, one may consider a ``polite'' registration system where you are offered to register using your electronic identity, but choosing not to or not succeeding to do so is permitted. Perhaps those who do not provide their electronic identities will be picked up by other safety nets such as a human attendant who later checks participants against a register rather than as an immediate effect such as the opening of a gate.

Such a refinement of the BAC protocol where error messages are obscured, does in fact satisfy unlinkability, as long as nothing happens after executing the BAC protocol.
We can model one such variant of the BAC protocol as follows.
\[
\begin{array}{rl}
\PBAC^{fixed}(c,ke,km) \triangleq&
\begin{array}[t]{l}
 \nu nt.\cout{c}{nt}.c(y). \\
     %& \clet{m_e,m_m}{\fst{y},\snd{y}} \\
 \texttt{if}\,\snd{y} = \mac{\fst{y}, km}\,\texttt{then} \\
\qquad
 \begin{array}[t]{l}
 \texttt{if}\,nt = \fst{\snd{\dec{\fst{y}}{ke}}}\,\texttt{then} \\
%      & \begin{array}[t]{ll}
%        \texttt{then}
\qquad
 \begin{array}[t]{l}
                         \nu kt.\clet{m}{\enc{\pair{nt}{\pair{\fst{\dec{\fst{y}}{ke}}}{kt}}}{ke}} \\
                         \cout{c}{m, \mac{m, km}}
                         \end{array}  \\
 \texttt{else}\,
                         \nu r.\clet{m}{\enc{\pair{r}{\error}}{ke}} % chktex 1
                         \cout{c}{m, \mac{m, km}}
\end{array}
\\
 \texttt{else}\,
                         \nu r.\clet{m}{\enc{\pair{r}{\error}}{ke}} % chktex 1
                         \cout{c}{m, \mac{m, km}}
        \end{array}
\end{array}
\]
Let $\SysBAC^{fixed}$ and $\SpecBAC^{fixed}$ denote, respectively, the system and specification for unlinkability of BAC where the above model of the ePassport role is used.
For this model, we have a proof that unlinkability does hold, i.e., $\SysBAC^{fixed} \sim \SpecBAC^{fixed}$.
The publication of a proof for this claim is pushed to future work, in the interest of focussing on attacks in this paper.
%Presenting details about how to systematically find proofs for such unlinkability claims, as opposed to finding attacks, requires additional heuristics to be explained.

A problem with the above refinement of the BAC protocol satisfying unlinkability is that making BAC unlinkable does not guarantee that the whole ePassport protocol satisfies unlinkability. The BAC protocol is just for authentication and establishing a session key. After authenticating the ePassport proceeds with a \textit{secure messaging phase} that uses the session key to transmit personal data stored in the ePassport~\cite{MRTD}. Thus it is sufficient for an attacker to look at whether the protocol proceeds with secure messaging or not in order to determine whether authentication was successful. That knowledge can be used to the same effect as observing whether or not an error message was sent. Thus, for the above fix to be fully effective even in a ``polite'' system, the secure messaging phase should proceed even if the ePassport does not authenticate, transmitting dummy data indistinguishable to an observer from the real data.
Such a mitigation strategy is outside the scope of the current ICAO specification and does not significantly improve unlinkability for the standard use cases for ePassports.

\subsubsection{Using one-time keys.}
An arguably better mitigation strategy is to use Time-based One-time Passwords (TOTP) to make the six-digit key for PACE change periodically, as explored in the thesis of our student~\cite{Filimonov2020}. Verification of that strategy is immediate, since TOTP has the effect of generating a new key for every run of the protocol, making the system and specification trivially bisimilar.
Implementing TOTP has further security and privacy advantages, forgoing other attacks on the system, including some social attacks, and has been deployed in card form for ePayments, so could be easily integrated into electronic ID cards implementing the ICAO 9303 standard.
%, and does not significantly violate the ICAO 9303 specification.
The challenge for ePassports is more likely to be at the policy level, since questions may be raised at international checkpoints if countries do not agree that TOTP is acceptable technology.
Dissemination of the socio-technical evaluation of this mitigation strategy we push to future work.

\subsection{Further risks to unlinkability}
There are many potential privacy risk for the PACE protocol that are not directly captured by the symbolic models in this work.
Some are simple to exploit, such as the fact that the PACE protocol offers several different operational modes.
In order for a reader to determine the appropriate operational model, before starting the protocol, the ePassport declares the operational modes of the PACE protocol that it implements.
In an environment, such as an airport, where many different ePassports implementing different operational modes coexist, a user may be tracked with a probability better than a random guess.
% of whether two sessions are with the same ePassport.

The actual probability of guessing correctly that the same ePassport is involved in two sessions depends on the number of different implementations of ePassport and the expected movements of their holders.
We expect the advantage gained by such a strategy to be non-negligible and it is standard in security and privacy models that gaining a non-negligible advantage counts as an attack.
Permitting may different implementations of protocols is an oversight of the ICAO 9303 standard.

\section{Conclusion}\label{sec:conclusion}

This paper confirms there are attacks on the authentication protocols proposed in the latest ICAO 9303 specification for ePassports.
Both the BAC and PACE protocols feature attacks that can be described as a distinguishing strategy in a game played according to a specification of what it means to satisfy unlinkability. Attacks on the BAC protocol, communicated in Theorem~\ref{thm:BAC} and Theorem~\ref{thm:BAC2}, are interesting since the former resolves flawed claims that no such attack exists according to a formulation of unlinkability dating back to CSF'10; while the latter irons out limitations of that original model concerning the capabilities of an attacker to distinguish messages from different readers and ePassports.

An interesting aspect of the new attack we discover on the PACE protocol, as formulated in Theorem~\ref{thm:pace}, is that it is not mitigated by defensive strategies that may be introduced to mitigate attacks previously discovered using trace equivalence as communicated in the Journal of Computer Security~\cite{Hirschi2019} (e.g., sequentialising reader sessions will not mitigate the newly discovered attack, nor will adding role information distinguishing messages of the same form originating from a reader and from an ePassport). Furthermore, the attacks discovered previously using trace equivalence require two honest readers to  actively participate in the attack, whereas the new attack discovered using bisimilarity requires only one honest reader to actively participate in the attack, meaning that the attack can be realised in a broader range of scenarios. This observation challenges claims communicated in the Journal of Computer Security, where it is argued that bisimilarity is too strong and hence trace equivalence should be employed.
Their argument is provided to support their model that relies on trace equivalence in order to prove that unlinkability of the BAC protocol holds. Since their results lead to contradictory advice compared to ours, differences are worth clarifying.

The crux of their argument is based on remarks communicated in CSF'10~\cite{Arapinis2010} claiming that bisimilarity may distinguish processes due to their internal state --- an argument we contest since bisimilarity is all about the games played between the adversary and its environment using observations only; and never can distinctions be made based on differences in internal state (that is the point of such observational equivalences).
Presenting multiple viewpoints is healthy for academic debate; however, the fact that the attacks we discover using bisimilarity are not spurious and furthermore are easier to realise is evidence that trace equivalence is insufficient for verifying interactive systems such as security protocols.
To further support our argument that the use of bisimilarity (or, as a compromise, a suitable notion of similarity, as discussed in Sec.~\ref{sec:sim}) is important for such security and privacy problems, we remark that it should not be a surprise to cryptographers that the adversary can play a strategy in a game to gain a non-negligible advantage,
%when attempting to linking sessions of a protocol, or indeed in a range of security and privacy problems,
%This  as modelled by bisimilarity,
since related assumptions about the adversary are standard in the long-established school of \textit{computational security} used to formally reason about cryptographic primitives.
We quote R.L.~Rivest on the topic of games in cryptography~\cite{Menezes1996}:
\begin{quote}
``Cryptography is also fascinating because of its game-like adversarial nature. A good
cryptographer rapidly changes sides back and forth in his or her thinking, from attacker
to defender and back. Just as in a game of chess, sequences of moves and countermoves
must be considered until the current situation is understood.''
\end{quote}

We argue that the above remark should also hold for the \textit{symbolic verification of cryptographic protocols}, which encompasses the methodology employed in this work.
In the setting of this work, the game is defined by a bisimilarity problem specifying what it means for a protocol to satisfy unlinkability, while attacks are strategies improving the chances of an attacker winning the game.
Such strategies can be conveniently described using modal logic formulae.

The insight obtained, concerning the existence of unlinkability attacks on the latest ePassport standards, is impactful for society, since ePassports and electronic ID cards are used by the citizens of over 150 countries at the time of writing. This amounts to an estimated 4 billion eDocuments in circulation that implement the ICAO 9303 standard.
The manufactures and operators of readers should be made aware of mitigation strategies.
Some preliminary ideas on mitigation strategies are discussed in Sec.~\ref{ref:unlink}.
% which are discussed in Section~\ref{} the conference version of this paper~\cite{ESORICS}.
%A longer term fix would be to redesign the protocols in the ICAO 9303 specification such that ePassport holders need not rely on the manufacturers and operators of readers to address such vulnerabilities.
%Further mitigation strategies are possible,

%Thus, however, benign you judge this unlinkability attack to be,

Further to uncovering the above mentioned attacks, this paper makes multiple technical contributions.
We proposed and justified a new scheme for unlinkability problems in Sec.~\ref{sec:II}.
We proposed a definition of open bisimilarity for the applied $\pi$-calculus (Def.~\ref{def:open}).
Our formulations of weak and strong early bisimilarity and similarity (Defs.~\ref{definition:early},~\ref{def:strong}, and~\ref{def:sim}) for the applied $\pi$-calculus are also new for the applied $\pi$-calculus, incorporating minor improvements facilitating verification, such as the adoption of a set of rules that make labelled transitions image finite.
These improvements are conventional from the perspective of established work on the $\pi$-calculus; thus, in that direction, we simply modernise the applied $\pi$-calculus with respect to advances in the $\pi$-calculus literature.
The formulation of the modal logic ``classical $\FM$'' is also new, as is the rather short and neat proof (in Appendix~\ref{sec:app2}) of the fact that classical $\FM$ characterises strong early bisimilarity for the applied $\pi$-calculus (Theorem~\ref{thm:char}).
The methodology of using a classical $\FM$ formula to certify attacks is new, as is the use of open bisimilarity to discover distinguishing strategies that are transformed into attacks whenever the distinguishing strategy is not spurious.
A reformulation of unlinkability removing $\tau$-transitions has appeared in related work~\cite{Hirschi2019}, but the proof that the new specification preserves the original specification in terms of bisimilarity (Theorem~\ref{thm:strong}) is new, as is the observation that this transformation reduces the unlinkability problem to a problem were image finiteness holds and strong notions of bisimilarity may be applied.
In short, in order to solve this problem, we have set up a rich tool chain of methods that can be applied beyond the problem of analysing the unlinkability of the ICAO 9303 standard.

\subsubsection*{Acknowledgements.}

The formulation of the sequent calculus in Fig.~\ref{fig:constraints} for calculating most general recipes is new, but should be attributed to Alwen Tiu.
He made the observation that annotating deducibility constraints in a sequent calculus presentation allows us to calculate the recipes required for an attacker to deduce a message.
We thank Davide Sangiorgi and the jury Luca Aceto, Jos Baeten, Patricia Bouyer-Decitre, Holger Hermanns, and Alexandra Silva for the explicit mention of the results communicated in this paper in the report on his CONCUR 2020 Test-of-Time Award~\cite{Aceto2020}.
Last, but not least, we congratulate Jos Baeten on the occasion of his retirement and dedicate this work to his leadership in the field of concurrency.

%\input{notes} % sec:symbolic

% references section
\bibliographystyle{halpha}
\bibliography{biblio}

\appendix
\section{Reducing weak to strong bisimilarity}\label{sec:app1}

We provide here a proof for Lemma~\ref{lemma:strong}, which is used to prove that unlinkability when expressed in terms of a strong bisimilarity problem, is equivalent to a formulation of unlinkability in terms of weak bisimilarity (Theorem~\ref{thm:strong}).
In the proof of the lemma below we employ equivariance, which simply allows names to be swapped.
\begin{defi}
Equivariance is the least congruence extending $\alpha$-conversion such that $\nu x. \nu y.P \equiv \nu y. \nu x.P$, thereby allowing the order of name binders to be ignored.
\end{defi}
Working up to equivariance has been shown to significantly reduce the search space when constructing a bisimulation~\cite{Tiu2016}.

\begin{lem}[Lemma~\ref{lemma:strong}]
For any $P$ and $Q$ such that $c_k$ is fresh for $P$ and $Q$, we have
\[
\mathopen{\nu c_k.}\left( \bang c_k(\vec{k}).P \cpar \bang\mathopen{\nu \vec{k}.}\cout{c_k}{\vec{k}}.Q \right)
\approx
\bang\mathopen{\nu \vec{k}.}\left(P \cpar Q \right)
\]
\end{lem}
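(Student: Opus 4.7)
I would prove Lemma~\ref{lemma:strong} by exhibiting a weak early bisimulation $\mathcal{R}$ that captures the correspondence between the factored left-hand side, where each new session is initialised by a silent handshake on the private channel $c_k$, and the packaged right-hand side, where each replicated copy of $\nu \vec{k}.(P \cpar Q)$ already has both participants with a shared fresh $\vec{k}$. Concretely, $\mathcal{R}$ is the symmetric closure (also closed under the structural rewriting of Def.~\ref{def:prelim}) of the family of pairs, indexed by $n \geq 0$, of the shape
\[
\mathopen{\nu c_k, \vec{k}_1, \ldots, \vec{k}_n.}\left(A_1 \cpar \ldots \cpar A_n \cpar \bang c_k(\vec{k}).P \cpar \bang\mathopen{\nu\vec{k}.}\cout{c_k}{\vec{k}}.Q\right)
\]
related to
\[
\mathopen{\nu \vec{k}_1, \ldots, \vec{k}_n.}\left(A_1 \cpar \ldots \cpar A_n \cpar \bang\mathopen{\nu\vec{k}.}(P \cpar Q)\right),
\]
carrying the same active substitution $\sigma$, and where the tuple $(A_1, \ldots, A_n, \sigma)$ is jointly reachable from the pooled initial state $P\sub{\vec{k}}{\vec{k}_1} \cpar Q\sub{\vec{k}}{\vec{k}_1} \cpar \ldots \cpar P\sub{\vec{k}}{\vec{k}_n} \cpar Q\sub{\vec{k}}{\vec{k}_n}$ via transitions of the labelled transition system in Fig.~\ref{figure:active} (allowing, in particular, communication between different sessions over public channels).

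Next I would verify the bisimulation clauses. Static equivalence is immediate: both sides carry the same $\sigma$, and $c_k$ is fresh for $P$ and $Q$ hence never surfaces in any output. For transition matching the argument splits into three cases. First, any action arising inside the exposed prefix $A_1 \cpar \ldots \cpar A_n$, whether an internal $\tau$ local to some $A_i$ or a communication between distinct $A_i$, $A_j$ on public channels, is matched one-for-one on the other side, because both sides present the identical top-level context around that prefix. Second, the distinctive $\tau$-transition on $c_k$ on the left, derived by combining \textsc{Out}, \textsc{Res}, and \textsc{Rep-act} on $\bang\nu\vec{k}.\cout{c_k}{\vec{k}}.Q$ with \textsc{Inp} and \textsc{Rep-act} on $\bang c_k(\vec{k}).P$ via \textsc{Close-l}, extrudes a fresh $\vec{k}_{n+1}$ and drops a new summand $P\sub{\vec{k}}{\vec{k}_{n+1}} \cpar Q\sub{\vec{k}}{\vec{k}_{n+1}}$ into the frame; this is matched by \emph{zero} steps on the right, relying on the structural unfolding $\bang\nu\vec{k}.(P\cpar Q) \equiv \nu\vec{k}_{n+1}.(P\sub{\vec{k}}{\vec{k}_{n+1}} \cpar Q\sub{\vec{k}}{\vec{k}_{n+1}}) \cpar \bang\nu\vec{k}.(P\cpar Q)$ to bring the right into the form required by $\mathcal{R}$ with index $n+1$. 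Third, any action from the right produced by \textsc{Rep-act} firing on $\bang\nu\vec{k}.(P\cpar Q)$, which unfolds a fresh copy and then performs an action inside it, is matched on the left by $\Lts{\pi}$: one $\tau$ on $c_k$ to set up the new session, followed by the identical action inside the freshly produced pair.

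The main obstacle, and the reason the formal proof is only a sketch, is the bookkeeping around binders, scope extrusion, and normal forms: one must repeatedly hoist the newly generated $\vec{k}_{n+1}$ to the top of the term, check that the freshness side-conditions of \textsc{Res}, \textsc{Par-l}, and \textsc{Close-l} are preserved, and re-establish the idempotence of the active substitution after each step. These manipulations are routine but voluminous; expressing $\mathcal{R}$ as a weak bisimulation up to structural rewriting and equivariance (in the sense of~\cite{Tiu2016}) keeps the case analysis manageable. Lemma~\ref{lemma:strong2} follows by a nearly identical argument with one more outer $\bang$ on each side, and Proposition~\ref{cor:strong} is then obtained by applying the two equivalences in the appropriate positions underneath $\bang\nu ke.\nu km$ and invoking the congruence of $\approx$ with respect to parallel composition, restriction, and replication.
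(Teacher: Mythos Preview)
Your proposal is correct and follows the same overall strategy as the paper: exhibit a weak bisimulation pairing unfolded states carrying the same active substitution, with the $\tau$-handshake on $c_k$ on the left absorbed by zero steps on the right. The one substantive difference is in how the asymmetry created by that $\tau$ is handled. You keep the two sides carrying identical pools $A_1,\ldots,A_n$ and appeal to the structural unfolding $\bang R \equiv R \cpar \bang R$ as part of an up-to argument; note that this unfolding is \emph{not} among the rewrites of Def.~\ref{def:prelim}, so you are really invoking bisimulation up to $\sim$ (or up to expansion), which is sound but an extra ingredient. The paper instead bakes the asymmetry directly into $\mathcal{R}$: the left may carry $n$ sessions while the right carries only $m\le n$, related by an injection $f\colon\{1,\ldots,m\}\to\{1,\ldots,n\}$, with the constraint that the $n-m$ unmatched left sessions are still in their initial form $P\sub{\vec{k}}{\vec{k}_i}$, $Q\sub{\vec{k}}{\vec{k}_i}$. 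A $\tau$ on $c_k$ just increments $n$; when a dormant session first acts, the right matches via \textsc{Rep-act} (or \textsc{Rep-close}, if two dormant sessions interact) and $f$ is extended. Your formulation is tidier to state; the paper's is more self-contained because it does not rely on an auxiliary up-to principle for replication.
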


\begin{proof}
Define $\mathcal{R}$ to be the least symmetric relation, upto
%associativity and commutativity of $\cpar$ and
%scope extrusion and
equivariance,
such that for any $R_i$ and $S_i$ such that $c_k$ is fresh for $R_i$ and $S_i$
and $\vec{r}$ is fresh for $P$ and $Q$,
we have that the following extended process
\[
A \triangleq
\begin{array}[t]{l}
 \mathopen{\nu c_k, \vec{k_1}, \vec{k_2}, \ldots \vec{k_n}, \vec{r}.}\Big(~\sigma \cpar
%\\ \qquad
     R_1 \cpar \ldots R_n \cpar !c_k(\vec{k}).P \cpar
%\\\qquad
     S_1 \cpar \ldots S_n \cpar {!\nu \vec{k}.\cout{c_k}{\vec{k}}.Q}~\Big)
\end{array}
\]
is related by $\mathcal{R}$ to the following extended process
\[
B \triangleq
\begin{array}[t]{l}
 \mathopen{\nu \vec{k}_{{f(1)}}, \vec{k}_{{f(2)}}, \ldots \vec{k}_{{f(m)}}, \vec{r}.}\Big(~\sigma \cpar
%\\ \qquad
     R_{f(1)} \cpar S_{f(1)} \cpar \ldots \cpar R_{f(m)} \cpar S_{f(m)} \cpar
%\\\qquad
    !\nu \vec{k}.\left(P \cpar Q \right)~\Big)
\end{array}
\]
where $f \colon \left\{ 1 \ldots m \right\} \rightarrow \left\{ 1 \ldots n \right\}$ is injective
and $R_i = P\sub{\vec{k}}{\vec{k_{i}}}$ and $S_i = Q\sub{\vec{k}}{\vec{k_{i}}}$ for $i \in \left\{1 \ldots n \right\} \setminus f(\left\{ 1 \ldots m \right\})$.

There are two cases to pay attention to concerning extra $\tau$-transitions in $B$.
Firstly, consider
\[
A \lts{\tau}
\begin{array}[t]{l}
 \mathopen{\nu c_k, \vec{k_1}, \vec{k_2}, \ldots \vec{k_n}, \vec{k_{n+1}}, \vec{r}.}\Big(~\sigma \cpar
\\ \qquad
     R_1 \cpar \ldots R_n \cpar P\sub{\vec{k}}{\vec{k_{n+1}}} \cpar !c_k(\vec{k}).P \cpar
\\\qquad
     S_1 \cpar \ldots S_n \cpar    {!\nu \vec{k}.\cout{c_k}{\vec{k}}.Q}~\Big)
\end{array}
\]
This can be matched by $B$ by performing zero transitions, whilst staying in the relation $\mathcal{R}$.

The second important case to consider is when for some $j \in \left\{1 \ldots n \right\} \setminus f(\left\{ 1 \ldots m \right\})$
we have $P\sub{\vec{k}}{\vec{k_j}}$ or $Q\sub{\vec{k}}{\vec{k_j}}$ acts (or indeed they interact),
possibly extruding some active substitution $\sigma$ and fresh names $\vec{s}$, as follows.
\[
\!\!\!\!
\begin{array}[t]{l}
\begin{array}[t]{l}
 \mathopen{\nu c_k, \vec{k_1}, \vec{k_2}, \ldots \vec{k_n}, \vec{r}.}\Big(~\sigma \cpar
\\ \qquad
     R_1 \cpar \ldots P\sub{\vec{k}}{\vec{k_j}} \ldots \cpar R_n \cpar !c_k(\vec{k}).P \cpar
\\\qquad
     S_1 \cpar \ldots Q\sub{\vec{k}}{\vec{k_j}} \ldots \cpar S_n \cpar {!\nu \vec{k}.\cout{c_k}{\vec{k}}.Q}~\Big)
\end{array}
\\\quad
\lts{\pi}
\begin{array}[t]{l}
 \mathopen{\nu c_k, \vec{k_1}, \vec{k_2}, \ldots \vec{k_n}, \vec{r}, \vec{s}.}\Big(~\sigma \cpar \theta \cpar
\\ \qquad
     R'_1 \cpar \ldots R'_j \ldots \cpar R'_n \cpar !c_k(\vec{k}).P \cpar
\\\qquad
     S'_1 \cpar \ldots Q'_j \ldots \cpar S'_n \cpar {!\nu \vec{k}.\cout{c_k}{\vec{k}}.Q}~\Big)
\end{array}
\end{array}
\]
In this case, $B$ stays within relation $\mathcal{R}$ by using transition
\[
\begin{array}{l}
%\begin{array}[t]{l}
% \mathopen{\nu \vec{k_1}, \vec{k_2}, \ldots \vec{k_n}, \vec{r}.}\Big(~\sigma
%     R_{f(1)} \cpar S_{f(1)} \cpar \ldots \cpar R_{f(m)} \cpar S_{f(m)} \cpar
%    !\nu \vec{k}.\left(P \cpar Q \right)~\Big)
%\end{array}
%\\\qquad
B \lts{\pi}
\begin{array}[t]{l}
 \mathopen{\nu \vec{k}_{f(1)}, \vec{k}_{f(2)}, \ldots \vec{k}_{f(m)}, \vec{k}_{f(m+1)}, \vec{r}, \vec{s}.}\Big(~\sigma \cpar \theta \cpar
\\ \qquad\qquad
     R'_{g(1)} \cpar S'_{g(1)} \cpar \ldots \cpar R'_{g(m)} \cpar S'_{g(m)} \cpar R'_{g(m+1)} \cpar S'_{g(m+1)}
%\\\qquad
    \cpar !\nu \vec{k}.\left(P \cpar Q \right)~\Big)
\end{array}
\end{array}
\]
where $g \colon \left\{ 1 \ldots m+1 \right\} \rightarrow \left\{ 1 \ldots n \right\}$
such that $g(i) = \left\{\begin{array}{lr} j &  \mbox{if}~i = m+1 \\  f(i) & \mbox{otherwise} \end{array}\right.$, which is clearly injective.
Note we should also consider when two distinct $j, j' \in \left\{1 \ldots n \right\} \setminus f(\left\{ 1 \ldots m \right\})$
interact in $A$, which has a similar pattern, except we require pairs of processes to be added to $B$ using the rule $\textsc{Rep-close}$.
\end{proof}

\section{Classical \texorpdfstring{$\FM$}{FM} characterises strong early bisimilarity}\label{sec:app2}

In this paper, we prove that unlinkability properties are violated by exhibiting a distinguishing formula in classical $\FM$.
A distinguishing formula is sufficient evidence to show that two processes specifying the unlinkability property are not bisimilar, as long as the modal logic characterises bisimilarity.
Therefore the proof of soundness and completeness of strong early bisimilarity with respect to classical $\FM$
is critical for this work.
Indeed, other parts of our reasoning may be incomplete, e.g., using open bisimilarity to seek a distinguishing strategy,  but if our method discovers an attack that can be described using an $\FM$ formula that we confirm is distinguishing, then we are certain that unlinkability does not hold as formulated.

We reiterate Theorem~\ref{thm:char}. The proof is standard for a classical Milner-Parrow-Walker logic, for which reason it appears in this appendix.
In fact, the use of static equivalence simplifies the analysis compared to the $\pi$-calculus, since there are no special cases for bound actions.

\begin{thm}
[{Theorem~\ref{thm:char}}]
$P \sim Q$, whenever, for all $\phi$, we have $P \vDash \phi$ if and only if $Q \vDash \phi$.
\end{thm}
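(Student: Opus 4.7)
The plan is to prove the statement (which is the completeness half of the Hennessy--Milner style characterisation) by showing, via coinduction, that the relation $\mathrel{\mathcal{R}}$ defined by $A \mathrel{\mathcal{R}} B$ iff $A$ and $B$ satisfy exactly the same classical $\FM$ formulae is a strong early bisimulation in the sense of Definition~\ref{def:strong}. Since the hypothesis gives $P \mathrel{\mathcal{R}} Q$ and $\mathrel{\mathcal{R}}$ is symmetric by construction, only the two non-symmetry clauses of Definition~\ref{def:strong} need to be checked on an arbitrary pair $A \mathrel{\mathcal{R}} B$.

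For the static equivalence clause I would argue by contradiction. If $A = \mathopen{\nu \vec{x}.}(\sigma \cpar P)$ and $B = \mathopen{\nu \vec{y}.}(\theta \cpar Q)$ are not statically equivalent, then by Definition~\ref{def:static} there exist messages $M, N$ with $\vec{x} \cup \vec{y}$ fresh for them such that, without loss of generality, $M\sigma \mathrel{=_E} N\sigma$ but $M\theta \mathrel{\neq_E} N\theta$. The freshness side-condition is exactly what the semantics of $M = N$ in Figure~\ref{figure:FM} demands, so $A \vDash M = N$ while $B \nvDash M = N$, contradicting $A \mathrel{\mathcal{R}} B$.

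For the transition clause, suppose $A \lts{\pi} A'$ but there is no $B'$ with $B \lts{\pi} B'$ and $A' \mathrel{\mathcal{R}} B'$. By the image finiteness of $\lts{}$ noted immediately before Definition~\ref{def:strong}, the set $\left\{B' : B \lts{\pi} B'\right\}$ is finite up to $\alpha$-conversion; I would fix representatives $B'_1, \ldots, B'_n$ sharing, whenever $\pi$ is an output label $\co{M}(x)$, a common binder $x$ chosen fresh for $A'$ and all the $B'_i$. For each $i$, pick a formula $\phi_i$ witnessing $A' \not\mathrel{\mathcal{R}} B'_i$; after replacing $\phi_i$ by $\neg\phi_i$ if necessary, we may assume $A' \vDash \phi_i$ and $B'_i \nvDash \phi_i$. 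Setting $\phi \triangleq \bigwedge_{i=1}^{n} \phi_i$ (or $\phi \triangleq \ttt$ when $n = 0$), we get $A \vDash \diam{\pi}\phi$ via $A'$, whereas no residual of $B$ under $\pi$ satisfies $\phi$, so $B \nvDash \diam{\pi}\phi$, contradicting $A \mathrel{\mathcal{R}} B$.

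The main obstacle I anticipate is bookkeeping around bound names in output labels: the finite conjunction trick only works once $A'$ and all the $B'_i$ are presented with a common choice of binder, which is available because extended processes are treated up to $\alpha$-conversion and because the bound name can be chosen fresh for everything in play. A secondary subtlety is that the whole argument rests essentially on image finiteness of the strong early labelled transition system of Figure~\ref{figure:active} -- an observation already stressed in Section~\ref{sec:strong}, which is precisely why the reformulation underlying Theorem~\ref{thm:strong} was needed before this logical machinery becomes useful on the unlinkability problem.
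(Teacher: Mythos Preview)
Your proposal is correct and follows essentially the same approach as the paper: define $\mathcal{R}$ as logical equivalence, then verify the clauses of Definition~\ref{def:strong} using equality formulae for static equivalence and the finite-conjunction trick (relying on image finiteness) for the transition clause. Your write-up is in fact slightly more careful than the paper's own proof, which glosses over the $n=0$ case and the need to possibly replace $\phi_i$ by $\neg\phi_i$, and does not explicitly mention the $\alpha$-conversion bookkeeping for the bound name in output labels that you rightly flag.
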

\begin{proof}
Let $\mathcal{R} = \left\{ (A, B) \colon \forall \phi, A \vDash \phi \,\mbox{iff}\, B \vDash \phi  \right\}$.
%\RH{Well formed in the sense that dom of active substitutions free and in normal form.}
We aim to prove $\mathcal{R}$ is a strong early bisimulation.
Symmetry is immediate.
In the following cases assume $A \mathrel{\mathcal{R}} B$.

\textit{Case of static equivalence.}
By definition of $\mathcal{R}$ for any $M$ and $N$,
we can apply $\alpha$-conversion to $A$ and $B$ such that
$A = \nu \vec{x}.(\theta \cpar P)$
and
$B = \nu \vec{y}.(\sigma \cpar Q)$
and $\left( \vec{x} \cup \vec{y} \right) \cap \left( \fv{M} \cup \fv{N} \right) = \emptyset$.
If $M\theta = N\theta$,
then by definition of satisfaction, $A \vDash M = N$
hence, by definition of $\mathcal{R}$, $B \vDash M = N$,
hence by definition of satisfaction, $M\sigma = N\sigma$.
Therefore $A$ and $B$ are statically equivalent.

\textit{Case of actions.}
Suppose $A \lts{\pi} A'$.
%and consider any $\phi$ such that $A' \vdash \phi$.
Hence $A \vDash \diam{\pi}\ttt$, so by definition of $\mathcal{R}$, we have
$B \vDash \diam{\pi}\ttt$ and hence for some $B'$ we have $B \lts{\pi} B'$.
By image-finiteness, there are finitely many $B_i$ such that $B \lts{\pi} B_i$.
Suppose for contradiction that $A' \mathrel{\mathcal{R}} B_i$ does not hold for all $i$.
Then for all $i$, there exists $\phi_i$ such that $A' \vDash \phi_i$
and $B_i \nvDash \phi_i$. Hence $A \vDash \diam{\pi}\left( \bigwedge_i \phi_i \right)$
but $B \nvDash \diam{\pi}\left( \bigwedge_i \phi_i \right)$, contradicting the assumption that $A \mathrel{\mathcal{R}} B$.
Hence for some $i$, $A' \mathrel{\mathcal{R}} B_i$, as required.

Thus, $\mathcal{R}$ is a strong early bisimulation.
Hence, if for any processes $P$ and $Q$ it holds that, for all formula $\phi$, we have $P \vDash \phi$ if and only if $Q \vDash \phi$,
then we have $P \mathrel{\mathcal{R}} Q$, and hence $P \sim Q$.

The converse direction follows by induction on the structure of $\phi$.
Assume $P \sim Q$,
hence there is some strong early bisimulation $\mathcal{S}$ such that $P \mathrel{\mathcal{S}} Q$.
In the following, assume that $A \mathrel{\mathcal{S}} B$ holds.
% and $A$ and $B$ are well formed extended processes.

\textit{Case of equality.}
Consider when $A \vDash M = N$.
By $\alpha$-conversion, $A = \nu \vec{x}. ( \theta \cpar P )$
such that $\vec{x} \cap \left( \fv{M} \cup \fv{N} \right) = \emptyset$.
hence $M\theta = N\theta$.
Now, by $\alpha$-conversion we have $B = \nu \vec{y}.( \sigma \cpar Q )$
such that $\vec{y} \cap \left( \fv{M} \cup \fv{N} \right) = \emptyset$.
So, by static equivalence, $M\sigma = N\sigma$;
and hence $A \vDash M = N$, as required.

\textit{Case of conjunction.}
Consider when $A \vDash \phi \wedge \psi$
hence $A \vDash \phi$ and $A \vDash \psi$.
So, by the induction hypothesis $B \vDash \phi$ and $B \vDash \psi$
and hence $B \vDash \phi \wedge \psi$.

\textit{Case of negation.}
Consider when $A \vDash \neg \phi$,
hence $A \nvDash \phi$.
Hence, by the induction hypothesis, $B \nvDash \phi$ hence $B \vDash \neg\phi$.

\textit{Case of action.}
Consider when $A \vDash \diam{\pi} \phi$.
Hence $A \lts{\pi} A'$ such that $A' \vDash \phi$.
Since $\mathcal{S}$ is a strong early bisimulation, there exists $B'$ such that $B \lts{\pi} B'$ and $A' \mathrel{\mathcal{S}} B'$.
Hence, by the induction hypothesis,
$B' \vDash \phi$. Hence $B \vDash \diam{\pi} \phi$.

Hence, by induction on the structure of $\phi$, for all formulae $\phi$, and for all $A$, $B$ such that $A \mathrel{\mathcal{S}} B$, we have $A \vDash \phi$ iff $B \vDash \phi$; and hence $P \vDash \phi$ iff $Q \vDash \phi$, since $P \mathrel{\mathcal{S}} Q$.
\end{proof}

\end{document}